\renewcommand{\headrulewidth}{0pt}
\DeclareMathOperator{\AGL}{AGL}
\DeclareMathOperator{\PGL}{PGL}
\DeclareMathOperator{\PGAL}{P\Gamma L}
\newcommand{\Sym}[1]{\operatorname{S}_{#1}}
\newtheorem{theorem}{Theorem}[chapter]
\newtheorem{definition}[theorem]{Definition}
\newtheorem{example}[theorem]{Example}
\newtheorem{lemma}[theorem]{Lemma}
\theoremstyle{definition}
\begin{document}
\pagenumbering{roman}
\fancyfoot[CO]{\thepage}
 \cleardoublepage
% %*******************************************************************************
 \title{Deterministic Polynomial Factoring Under The Assumption of Riemann Hypothesis}
 \newpage
 \lstset{language=[ANSI]C}
 \lstset{% general command to set parameter(s)
 basicstyle=\footnotesize\tt, % print whole listing small
 identifierstyle=, % nothing happens
 commentstyle=\color{blue}, % white comments
 showstringspaces=false, % no special string spaces
 lineskip=1pt,
 captionpos=b,
 frame=single,
 breaklines=true
 }
 \lstset{classoffset=0,
 morekeywords={},keywordstyle=\color{black},
 classoffset=1,
 classoffset=0}% restore default
 \author{Aurko Roy}
%\maketitle

%\input{Preamble/title}		% First Inner Page
\begin{center}
\noindent {\bf{\Large{Deterministic Polynomial Factoring Under The Assumption of Extended Riemann Hypothesis}}}\\

\vspace{25mm}

A Thesis Submitted\\
in Partial Fulfilment of the Requirements
\\
for the Degree of
\\
\vspace{30mm}
{\bf {\large MASTER OF TECHNOLOGY}}
\\
by
\\
{\bf{\Large{Aurko Roy}}}\\
\vspace{20mm}
\begin{figure}[h]
\centering
\includegraphics[width=0.25\textwidth]{./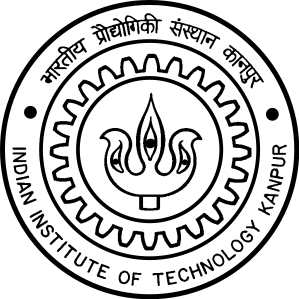}
\end{figure}

\vspace{3mm}
{\bf {\large {\sc DEPARTMENT OF COMPUTER SCIENCE AND ENGINEERING}}}\\
\vspace{2mm}
{\bf {\large {\sc INDIAN INSTITUTE OF TECHNOLOGY KANPUR}}}\\
\vspace{3mm}
{\textbf{ December, 2012}}\\
\end{center}

\newpage
\setcounter{page}{2}
\addcontentsline{toc}{chapter}{Certificate}
\includegraphics[width=\textwidth,height=\textheight,keepaspectratio]{./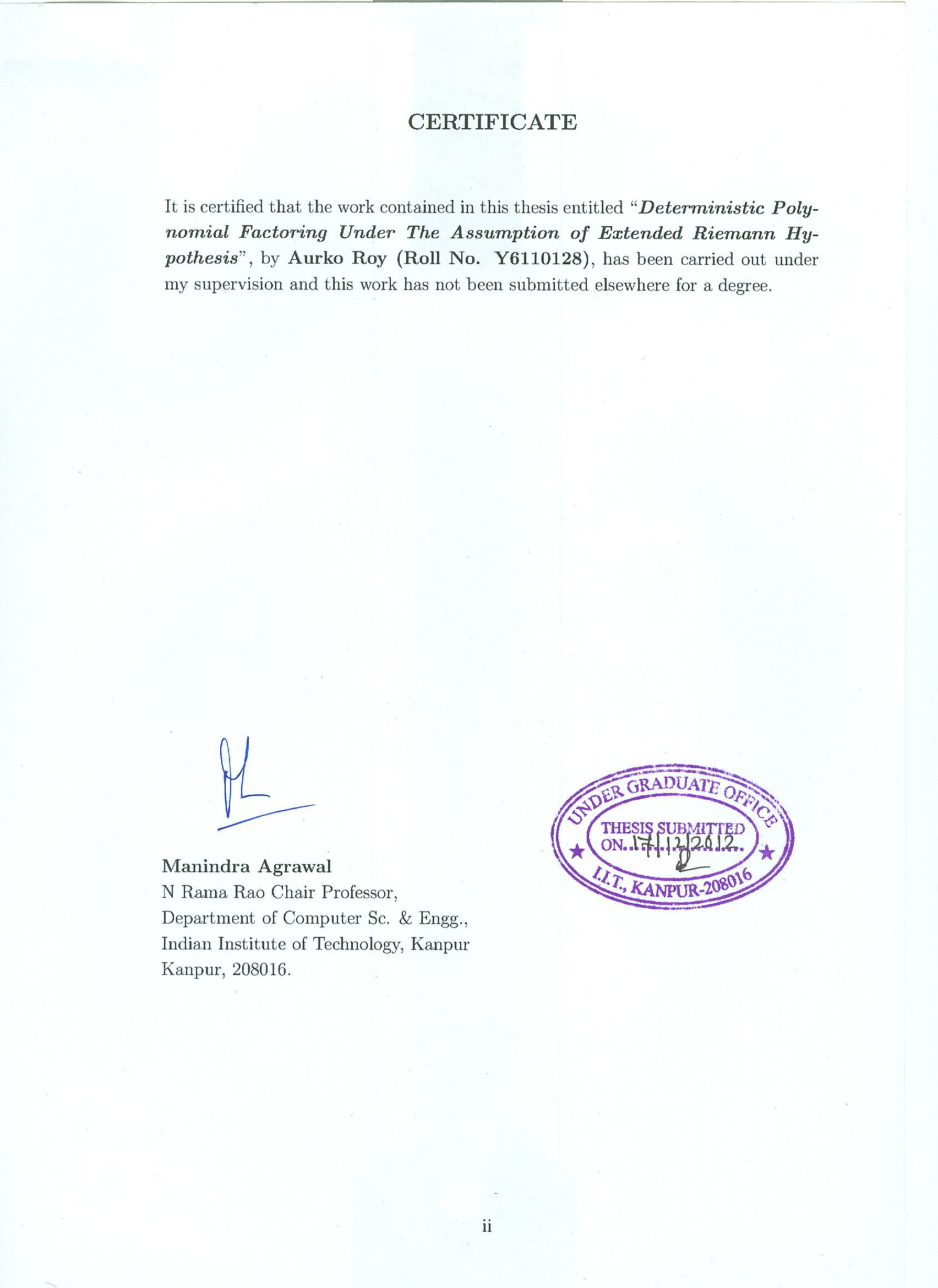}
\newpage
\addcontentsline{toc}{chapter}{Acknowledgements}
\begin{center}
\begin{large}
{\it{\bf ACKNOWLEDGEMENTS} }
\end{large}
\end{center} 
\textit{
	It has been a pleasure and an honor to be able to work with 
	Dr. Manindra Agrawal on this thesis and I am deeply indebted to him for his 
	guidance, support and his many valuable insights. I am also in the debt of 
	Ramprasad Saptharishi and Mrinalkanti Ghosh for numerous helpful discussions
	on several occasions. I am also grateful to my friends and family
	for their support and inspiration.
 }

\vskip 0.5in
\begin{flushright}
Aurko Roy\\
\end{flushright}

\newpage
\addcontentsline{toc}{chapter}{Abstract}
\chapter*{Abstract}
\begin{samepage} 
We consider the problem of deterministically factoring a univariate polynomial over a finite field under the assumption of the Extended Riemann Hypothesis (ERH). This work builds upon the line of approach first explored in
\cite{gao2001deterministic} and later expanded by \cite{saha2008factoring}. In both cases, the general approach has been to implicitly construct a graph with the roots as 
vertices and the edges formed by some polynomial time computable relation defined in \cite{gao2001deterministic}. Their algorithm then fails to factor a polynomial if this associated graph turns out to be \emph{regular}. In the 
first part of our work we strengthen the edge relation so that the resulting set of graphs we obtain are subgraphs of Gao's or Saha's graph, all of which must be \emph{regular}.

In the second part of our work we strengthen the regularity condition of these graphs from \emph{regular} to \emph{strongly regular}. This is accomplished by finding a parallel between their algorithms and the \emph{$1$-dimensional}
Weisfeiler-Leman algorithm for solving the Graph Isomorphism problem. We observe that the general principle behind their algorithms is to separate the roots by computing the \emph{$1$-dimensional} Weisfeiler-Leman approximation to the
orbits of this graph. This leads us to the natural question of whether this approximation may be improved. We then go on to show how to implicitly compute the \emph{$2$-dimensional} Weisfeiler-Leman approximation of the orbits of these
graphs.

The polynomials that this algorithm fails to factor form graphs that are \emph{strongly regular} and their set of adjacency matrices forms a combinatorial structure called an \emph{Association scheme}. This is closely related to the
structure of \emph{$m$-schemes} introduced by Saxena et al. in \cite{ivanyos2009schemes}, although our construction is more elementary than theirs. However, it should also be mentioned that their definition of 
\emph{$m$-schemes} is more general
than that of an \emph{association scheme}, although they primarily use structural results on \emph{association schemes} of prime order in their latest work \cite{arora2012deterministic}. In this sense our work serves as a sort of a 
bridge between the ideas of \cite{gao2001deterministic},\cite{saha2008factoring} and those presented in \cite{ivanyos2009schemes}, \cite{arora2012deterministic}.

We observe that polynomials whose schemes are \emph{thin} may be easily factored. In the final part
of our work we show that the problem of factoring a polynomial that gives rise to an arbitrary association scheme can always be polynomial time reduced to that of factoring another polynomial whose association scheme is 
\emph{primitive}. Primitive association schemes are simpler schemes that are generalizations of \emph{prime groups} and often (if the scheme is \emph{Schurian}) arise from \emph{primitive permutation groups}. Thus any extension
of the classification of prime schemes as presented in \cite{Hanaki:2006} to the more general case of primitive schemes would lead to better algorithms for factoring arbitrary polynomials.

\end{samepage}

\newpage
\addcontentsline{toc}{chapter}{Table of Contents}
\tableofcontents
%\newpage
% \addcontentsline{toc}{chapter}{List of Figures}
% \listoffigures
%\newpage
\fancyhead[RO]{\thepage}
\fancyhead[LO]{\slshape \leftmark}
\fancyfoot[CO]{}
\renewcommand{\headrulewidth}{0.5pt}
\noindent
\pagenumbering{arabic}
%*********************************************************************

%\input{chapters/introduction.tex}
\chapter{Introduction}

\section{The Problem}
The problem this thesis concerns itself with is to decompose a given polynomial over a finite field into its irreducible components. While this can always be accomplished in theory, from a computational perspective the 
interesting question is whether it can be done so in time \emph{polynomial} in its representation. Since a polynomial of degree $n$ over a field $\mathbb{F}_q$ can be represented optimally as a $n+1$ tuple of $\mathbb{F}_q$ elements,
we wish to factor it in time that is a polynomial function of $n$ and $\log{q}$. While this is an interesting problem in its own right, it finds several applications in computer algebra, algebraic coding theory, cryptography
and computational number theory. Polynomial factoring over finite fields appears in cyclic redundancy codes and BCH codes \cite{berlekamp1},\cite{macwilliams},\cite{vanlint}, in designing public key cryptosystems \cite{chor},
\cite{odlyzko},\cite{lenstra1} and in computing the number of points on elliptic curves \cite{buchmann}. More about the various applications of this problem may be found in \cite{von2001factoring}. It may be of some interest to note
that the same problem over $\mathbb{Q}$ has an efficient deterministic algorithm due to \cite{lenstra2}. We also note that this problem is of purely theoretical interest since for practical purposes various randomized
polynomial time algorithms exist due to \cite{ber70},\cite{cz81},\cite{gs92},\cite{ks95}.

\section{Previous Work}
The general problem of factoring a polynomial over an arbitrary finite field reduces deterministically in polynomial time to the problem of factoring a \emph{square-free} and \emph{completely splitting} polynomial over
a prime field due to \cite{ber70}. While efficient randomized algorithms exist, coming up with deterministic algorithms even under the assummption of the Extended Riemann Hypothesis (ERH) has met with limited success. In 
\cite{ronyai1988factoring}, R\'onyai proved under the assumption of ERH, that a polynomial with a bounded number of irreducible factors could be factored in polynomial time. More specifically he showed that for a polynomial $f \in \mathbb{F}_p$ of degree
$n$ and any prime divisor $m|n$, it was possible to obtain a factor of $f$ in time polynomial in $n^m$ and $\log{p}$. This in turn implied that \emph{even degree} polynomials could be easily factored. Under the assumption
of ERH, Huang in \cite{huang1985riemann} showed that it was possible to deterministically factor $n^{th}$ cyclotomic polynomials over $\mathbb{F}_p$ in polynomial time. R\'onyai \cite{ronyai1992galois} proved that under ERH
a polynomial with integer coefficients that generates a Galois number field may be factored \emph{mod} $p$ in deterministic polynomial time except for finitely many primes $p$ extending previous work by \cite{huang1991generalized},
\cite{adleman1977taking} and \cite{evdokimov1989factoring}. On special fields Bach, von zur Gathen and Lenstra \cite{bach2001factoring} proved that polynomials over finite fields of characteristic $p$ can be factored in polynomial
time if $\Phi_k(p)$ is \emph{smooth} for some integer $k$ where $\Phi_k(x)$ denotes the $k^{th}$ cyclotomic polynomial. This extended the work of von zur Gathen \cite{von1987factoring} who had shown how to factor polynomials 
\emph{mod} $p$ if $\Phi_1(p)=p-1$ is \emph{smooth}, which was also extended by Mignotte and Schnorr \cite{mignotte1988calcul} and R\'onyai \cite{ronyai1989factoring}. Evdokimov in \cite{evdokimov} gave a 
subexponential algorithm to deterministically factor polynomials under the assumption of ERH: a polynomial of degree $n$ over $\mathbb{F}_q$ could be factored in time polynomial in $n^{\log{n}}$ and $\log{q}$. Gao in 
\cite{gao2001deterministic} defined a class of polynomials whose roots satisfied a certain symmetry condition (which he named \emph{square balance}), such that any polynomial not belonging to this class could be factored in
polynomial time. This was generalized further by \cite{saha2008factoring} into the notion of \emph{cross balanced} polynomials in which this symmetry condition was strengthened. In \cite{ivanyos2009schemes},
\cite{arora2012deterministic} the problem of deterministic polynomial factoring was reduced to the study of certain combinatorial objects called 
\emph{$m$-schemes} whose properties were then exploited to factor polynomials of prime
degree $n$, where $n$ has a ``large'' \emph{$r$-smooth} divisor in time polynomial in $n^r$ and $\log{q}$.

\section{Our Contribution}
In this work we start off by generalizing the line of approach started by \cite{gao2001deterministic} and \cite{saha2008factoring} which ultimately however, lands us closer to the combinatorial structure of \cite{ivanyos2009schemes}.
Our approach is also motivated by the some of the approaches used to solve the \emph{Graph Isomorphism problem}, especially the Weisfeiler-Leman algorithm \cite{weisfeiler1968reduction} which we shall elucidate further in this
section. Further discussion on the graph isomorphism problem and the Weisfeiler-Leman algorithm may be found in \cite{babai1980isomorphism}.

The general approach in both \cite{gao2001deterministic} and \cite{saha2008factoring} is to implicitly construct a graph where the vertices are the roots (or certain polynomial time computable functions of them). Two vertices
are joined by an edge if they satisfy a certain relation: namely if $\xi_i,\xi_j$ be two roots (vertices) then there is an edge between them if $\sigma((\xi_i-\xi_j)^2)=\xi_i-\xi_j$ where $\sigma$ is a square root algorithm
defined in \cite{gao2001deterministic}. Then their algorithm fails if this graph (or set of graphs, considering polynomial functions of roots) turns out to be regular. Our first contribution is to strengthen this relation by 
giving a stronger test than Gao's square root algorithm (see section~\ref{sec:stronger}) which is simultaneously applicable to and improves both \cite{gao2001deterministic} and \cite{saha2008factoring}.

We then investigate how the \emph{regularity} condition may be improved. We observe that both the approaches in \cite{gao2001deterministic} and \cite{saha2008factoring} basically use an implicit form of the 
\emph{$1$-dimensional} 
Weisfeiler-Leman algorithm to enforce the \emph{square balance} and \emph{cross balance} symmetry conditions. This is not surprising since essentially we wish to separate out the roots and one way to separate them
would be to implicitly construct a graph with the roots as vertices and some polynomial time computable relation forming the edges, and thereafter implicitly trying to compute approximations to the orbits of this graph.

This line of thinking leads us to question whether it is possible to implicitly compute better approximations to the orbits and we demonstrate how to compute the \emph{$2$-dimensional} Weisfeiler-Leman approximation for this set of
implicit graphs (see section~\ref{sec:weisfeiler}). This approach then fails to factor a polynomial if the graphs of \cite{gao2001deterministic},\cite{saha2008factoring} turn out to be \emph{strongly regular}. Further,
the polynomials that resist factoring give rise to a combinatorial structure called \emph{Association schemes} (hence the similarity with \cite{ivanyos2009schemes}). Association schemes are generalizations of groups, in that the
set of all groups forms a subclass of association schemes (called \emph{thin schemes}). We then study some of the properties of the schemes that arise in this fashion, and prove that the 
problem of factoring a polynomial that gives rise to an arbitrary association scheme can always be polynomial time reduced to that of factoring a polynomial that gives rise to a \emph{primitive} association scheme. \emph{Primitive}
association schemes are a generalization of \emph{prime groups}, which we shall elucidate further in the subsequent sections (see subsections~\ref{sec:scheme} and \ref{sec:subset}). It is also of some interest to note (especially in
the context of \cite{arora2012deterministic}) that a
polynomial of prime degree $n$ which resists factoring by this method would always form a \emph{primitive} association scheme, although all primitive association schemes need not arise in this fashion. Hence according to this thesis
the study of factoring seems to be intricately related to the study of \emph{primitive association schemes}.

\section{Organization of Thesis}
In Chapter $2$ we will develop the preliminaries and the background necessary to understand the results presented in chapter $3$. We conclude our work in chapter $4$.

\chapter{Preliminaries}

\section{Extended Riemann Hypothesis}\label{sec:erh}

\begin{definition}
	The Riemann zeta function $\zeta$ is given by
	\begin{align*}
		\zeta(s)=\sum_{n=1}^\infty \frac{1}{n^s}=\prod_{p:prime} \frac{1}{1-p^{-s}}
	\end{align*}

	where $s \in \mathbb{C}$ and $Re(s) > 1$.
\end{definition}

Observe that if $Re(s) \le 1$ then the above formulation of $\zeta(s)$ is not convergent. However by \emph{analytic continuation} one may extend its definition to the entire complex plane. The $\zeta$ function also satisfies this
elegant functional equation (first discovered by Riemann in $1859$):

\begin{align*}
	\Gamma\left(\frac{s}{2}\right)\zeta(s)\pi^{-s/2} = \Gamma\left(\frac{1-s}{2}
	\right)\zeta(1-s)\pi^{-(1-s)/2}
\end{align*}

This immediately tells us of the existence of the zeroes of the $\zeta$ function at $s=-2,-4,-6,\cdots$ (i.e, all negative, even integers) since the $\Gamma$ function is singular for all non-positive integers. These are the
so called ``trivial zeroes''. The other zeroes $s=\sigma +i t$ of the $\zeta$ function lie in the \emph{critical strip} $0 \le \sigma \le 1$. The Riemann hypothesis talks about the distribution of these ``non-trivial zeroes''.

\vskip 2mm
\noindent \textbf{Riemann Hypothesis}. \textit{The non-trivial zeroes of $\zeta(s)$ have $Re(s)=\frac{1}{2}$.}

\vskip 3mm
The formulation of the Extended Riemann Hypothesis can be given in a number of ways. Here we will give one in terms of the \emph{Dirichlet} $L-function$. If $\chi$ is a character on $\mathbb{Z}/k\mathbb{Z}^*$ then it can be
extended to $\mathbb{Z}$ in the following fashion.

\begin{align*}
	\chi(m) = \begin{cases}\chi\left(m \mod{k}\right),  \quad 
	\text{if } \gcd(m,k)=1\\
	         0 \qquad \text{otherwise}
	         \end{cases}
	\end{align*}

\begin{definition} The Dirichlet $L$-function at level $k$ is defined as 

	\begin{align*}
		L(s,\chi) = \sum_{n \ge 1} \frac{\chi(n)}{n^s} = \prod_{p:prime} \frac{1}{1-\chi(p)p^{-s}}
	\end{align*}

	where $\chi$ is a character on $\mathbb{Z}/k\mathbb{Z}^*$ and the second equality follows from the multiplicative nature of the $\chi$.
\end{definition}

Similar to the $\zeta$ function, one may extend the definition of $L(s,\chi)$ by analytic continuation to the whole complex plane. It is then possible to formulate the Extended Riemann Hypothesis in a manner similar to the Riemann
Hypothesis for the ordinary $\zeta$ function.

\vskip 2mm
\noindent \textbf{Extended Riemann Hypothesis}. \textit{The non-trivial zeroes of $L(s,\chi)$ in the interval $0 \le Re(s) \le 1$ all have $Re(s)=\frac{1}{2}$.}

\vskip 3mm
It was proved by Ankeny in \cite{ankeny1952least} that if the Extended Riemann Hypothesis were true, then there exists a $q^{th}$ non-residue $a \in \mathbb{F}_p$ such that $a \in O(\log^2{p})$, where $q$ is a prime divisor of 
$p-1$. We will be using this result through out the rest of the work. 

\section{Semisimple Algebras}
We will encounter several examples of \emph{semisimple} algebras over fields as well as over rings, so we will define them below. To do that we will first define \emph{simple 
algebras} which are in some sense the building blocks of \emph{semisimple algebras}.

\begin{definition}\label{simplealg}
	Let $S$ be an algebra over a ring $R$. Then $S$ is a \textbf{simple algebra} if it has no proper two-sided ideals  and the set $S^2 \neq \{0\}$. 
\end{definition}

\begin{lemma}\label{simple}
	Let $R$ be a ring and let $S$ be a simple algebra over $R$ such that $S$ has a non-trivial center. Then $S=Se$ where $e\in S$ is the identity element in $S$, i.e. every simple algebra is unital.
\end{lemma}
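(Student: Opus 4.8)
The plan is to manufacture the identity element directly from a nonzero central element, using simplicity to force two multiplication maps to be bijective. Interpreting ``non-trivial center'' as $Z(S)\neq\{0\}$, fix $0\neq z\in Z(S)$. First I would record the routine structural fact that $S^{2}$ is a nonzero two-sided ideal (it is nonzero by Definition~\ref{simplealg}, and it is closed under left and right multiplication and under the $R$-action), so $S^{2}=S$ by simplicity; in fact only the weaker statement $S^{2}\neq\{0\}$ will actually be needed. The central object is the map $\lambda_{z}\colon S\to S$, $s\mapsto zs = sz$ (the two expressions agree because $z$ is central), and the goal is to show $\lambda_{z}$ is bijective.

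For surjectivity, observe that $Sz=zS$ is a two-sided ideal of $S$ and an $R$-submodule, so by simplicity it is either $\{0\}$ or all of $S$. I would rule out $Sz=\{0\}$: if $z$ annihilated $S$ on one (hence both) sides, then every product appearing in the two-sided ideal of $S$ generated by $z$ — namely $Sz$, $zS$, $SzS$, and $z^{2}$ — vanishes, so that ideal reduces to the $R$-submodule $Rz$, forcing $S=Rz$; but then any product of two elements of $S$ lies in $R\,z^{2}=\{0\}$, contradicting $S^{2}\neq\{0\}$. Hence $Sz=S$, i.e. $\lambda_{z}$ is onto. For injectivity, note that $\ker\lambda_{z}=\{s\in S:sz=0\}$ is again a two-sided ideal (and an $R$-submodule), and it is proper since $Sz=S\neq\{0\}$; therefore $\ker\lambda_{z}=\{0\}$.

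With $\lambda_{z}$ a bijection, let $e\in S$ be the unique element with $ze=z$ (equivalently $ez=z$). For an arbitrary $s\in S$, associativity together with the centrality of $z$ gives $(es-s)z = e(zs)-sz = (ez)s-sz = zs-sz = 0$ and likewise $(se-s)z = s(ez)-sz = sz-sz = 0$. Applying the injectivity of $\lambda_{z}$ to $es-s$ and to $se-s$ yields $es=se=s$, so $e$ is a two-sided identity of $S$; in particular $S=Se$, which is the assertion.

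I expect the only genuine obstacle to be the step that excludes $Sz=\{0\}$; the rest is formal bookkeeping. The care there is twofold: writing the two-sided ideal generated by $z$ correctly as an ideal of an $R$-algebra (so that its $R$-multiples, not only its ring-theoretic products, are accounted for), and then verifying that $S=Rz$ really does force $S^{2}=\{0\}$. This last point is precisely where the hypothesis $S^{2}\neq\{0\}$ built into the definition of a simple algebra — as opposed to merely the absence of proper two-sided ideals — is used in an essential way.
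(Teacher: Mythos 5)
Your proof is correct. It takes a genuinely different route from the paper's, although both rely on the same key ingredient (that a nonzero central element generates $S$ as a two-sided ideal). The paper's argument runs as follows: it introduces the annihilator-type set $I=\{a\in S: SaS=0\}$, notes $I$ is a two-sided ideal, rules out $I=S$ (which would force $S$ to be nilpotent, incompatible with $S^2=S\neq\{0\}$), and concludes $I=\{0\}$; thus for $0\neq a\in Z(S)$, centrality gives $SaS=aS=Sa\neq\{0\}$, hence $=S$. It then solves $e_1a=a$ and $ae_2=a$ for $e_1,e_2$, uses $aS=S$ and $Sa=S$ to promote these to a left and a right identity respectively, and finishes with $e_1=e_1e_2=e_2$. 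Your version replaces both the exclusion step and the identity-construction step: you rule out $Sz=\{0\}$ by observing it would force the ideal generated by $z$ to collapse to $Rz$, hence $S=Rz$ and $S^2=Rz^2=\{0\}$; and instead of chasing left/right identities you package everything into the multiplication map $\lambda_z$, prove it is a bijection of $S$, and define $e=\lambda_z^{-1}(z)$, with the two-sidedness of $e$ falling out of injectivity applied to $(es-s)z=0$ and $(se-s)z=0$. The trade-off: the paper's construction is more elementary (no appeal to a bijection, just existence of solutions in $Sa$ and $aS$), whereas yours centralizes the bookkeeping into a single linear map and avoids the mild awkwardness of verifying that $e_1$ acts on \emph{all} of $S$ rather than just on $a$. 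Both isolate the hypothesis $S^2\neq\{0\}$ as the step that cannot be dispensed with, which is the right thing to flag.
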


\begin{proof} Since $S^2$ is an ideal of $S$ and $S^2 \neq \{0\}$ by definition \ref{simplealg}, we must have $S^2=S$. Consider the set $I = \{a | SaS=0\}$. $I$ is a two-sided ideal of $S$ and therefore $I=\{0\}$ or $I=S$. If
	$I=S$, then it would imply $S$ is nilpotent, hence $I=\{0\}$. Then for any $0 \neq a \in Z(S)$, $SaS=aS=Sa=S$.  
	Since $Sa=S$,  $\exists e_1 \in S$ such that $e_1.a=a$ and since $aS=S$, $e_1$ acts
	as the left identity of elements in $S$. Since $aS=S$, $\exists e_2 \in S$ such that $a.e_2=a$ and since $Sa=S$, $e_2$ acts as the right identity of elements of $S$. Finally $e_1.e_2=e_1=e_2$ so that $\exists e \in Z(S)$
	such that $\forall a \in S, e.a=a.e=a$.
\end{proof}

We will now define \emph{semisimple algebras}, which are algebras that are completely decomposable as a direct sum of \emph{simple algebras}.

\begin{definition}
	Let $R$ be a ring and $S$ an $R-algebra$. Then $S$ is called \textbf{semisimple} if $S \equiv \oplus_{i=1}^n S_i$ where each $S_i$ is a simple $R-algebra$.
 \end{definition}

 \begin{lemma}\label{idempotent} Let $S$ be a semisimple algebra over a ring $R$. Then there exists $\{E_1,\cdots,E_n\}\subset Z(S)$ such that $S=\oplus_{i=1}^n SE_i$ where $\forall 1 \le i \le n, E_i^2=E_i$ and
	 $E_iE_j=0$ if $i\neq j$.
 \end{lemma}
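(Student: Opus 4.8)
The plan is to extract the idempotents directly from the simple decomposition guaranteed by the definition of a semisimple algebra, using Lemma~\ref{simple} to supply an identity element inside each simple summand.

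First I would fix an $R$-algebra isomorphism $S \cong \bigoplus_{i=1}^n S_i$ with each $S_i$ simple, and recall that under this identification the $S_i$ are two-sided ideals of $S$ with $S_iS_j = \{0\}$ whenever $i \ne j$, multiplication in a direct sum being componentwise. Each $S_i$, being a simple component of a semisimple algebra of the kind occurring here, has non-trivial center, so Lemma~\ref{simple} applies and yields an identity element $e_i \in Z(S_i)$ with $e_i^2 = e_i$ and $e_i s = s e_i = s$ for all $s \in S_i$. I then define $E_i \in S$ to be the element corresponding to $(0,\dots,0,e_i,0,\dots,0)$ under the isomorphism, i.e. $E_i$ is $e_i$ in the $i$-th coordinate and $0$ elsewhere.

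Next I would verify the three asserted properties by a componentwise computation. Since squaring is componentwise, $E_i^2 = E_i$ follows from $e_i^2 = e_i$, and $E_iE_j$ lies in $S_iS_j = \{0\}$ for $i \ne j$. For centrality, take any $s \in S$ with components $s_k \in S_k$: then $E_i s$ and $s E_i$ both have $i$-th component $e_i s_i = s_i = s_i e_i$ (as $e_i$ is central in $S_i$ and is its identity) and all other components $0$, so $E_i s = s E_i$ and hence $E_i \in Z(S)$. Finally $SE_i = \{\, s E_i : s \in S \,\} = \{\, s_i e_i : s_i \in S_i \,\} = S_i e_i = S_i$, again using that $e_i$ is the identity of $S_i$. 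Therefore $S = \bigoplus_{i=1}^n S_i = \bigoplus_{i=1}^n SE_i$, which is the claim.

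The only delicate point is the appeal to Lemma~\ref{simple}, which requires each simple summand $S_i$ to have a non-trivial center; I would dispatch this by observing that in every situation where this lemma is invoked the ambient algebra is finite-dimensional over a field, so by the Artin--Wedderburn structure of simple algebras each $S_i$ is a full matrix ring over a division ring and hence has a center containing the base field. Everything else in the argument is a routine unwinding of the direct-sum structure.
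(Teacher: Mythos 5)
Your proposal is correct and follows essentially the same route as the paper: decompose $S$ into simple summands, use Lemma~\ref{simple} to obtain an identity $e_i$ in each summand, and set $E_i=(0,\dots,0,e_i,0,\dots,0)$; the paper simply states that the conclusion ``follows from this definition,'' whereas you spell out the componentwise verifications (and note the non-trivial-center hypothesis of Lemma~\ref{simple}, which the paper glosses over). No substantive difference in approach.
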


 \begin{proof}
	 Since $S$ is \emph{semisimple} we can decompose it as $S=\oplus_{i=1}^nS_i$ where each $S_i$ is a \emph{simple} $R-algebra$. By lemma \ref{simple} let the identity of each $S_i$ be $e_i$. Then define $E_i=(0,\cdots,0,e_i,
	 0,\cdots,0)$, where $e_i$ is in the $i^{th}$ position. The lemma then follows from this definition.
 \end{proof}
\vskip 2mm

One of the algebras we will keep encountering in the context of polynomial factoring is the $\mathbb{F}_p$ algebra $\mathcal{R}$ which we define below. Let $f=\prod_{i=1}^n(x-\xi_i)$ be the \emph{squarefree},
\emph{monic}, \emph{completely reducing} polynomial in $\mathbb{F}_p[x]$ that we wish to factor.

\begin{definition}\label{algebraR}
	 $\mathcal{R} \equiv \mathbb{F}_p[x]/(f(x)) \equiv \mathbb{F}_p[X]$ where $X \equiv x \pmod{f}$. 
\end{definition}

This algebra is also expressible as $\mathcal{R} \equiv \oplus_{i=1}^n\mathbb{F}_p[x]/(x-\xi_i)$, where for $1 \le i \le n$, $\mathbb{F}_p[x]/(x-\xi_i)$ is a \emph{simple} $\mathbb{F}_p-algebra$. This gives us the existence of 
the primitive idempotents of $\mathcal{R}$ over $\mathbb{F}_p$ which we will denote by the set $\{\mu_1,\cdots,\mu_n\}$, with the following properties $\forall 1 \le i,j \le n$

\begin{align*}
	\mu_i^2 &= \mu_i, \\
	\mu_i\mu_j &= 0, \quad i\neq j \\
	\sum_{i=1}^n \mu_i &= 1.
\end{align*}

It is also possible to work out the explicit expressions for these primitive idempotents.

\begin{align*}
	\mu_i = \prod_{j\neq i} \frac{x-\xi_j}{\xi_i-\xi_j}
\end{align*}
 
Where the denominator is invertible because $f$ is \emph{squarefree}. We will also talk about polynomials over $\mathbb{F}_p$ algebras which have a similar property, but before that we need to introduce the notion of 
\emph{separable} polynomials.

\begin{definition}
	Let $g$ be a monic univariate polynomial over an algebra $R$ with zeroes at $\chi_1,\cdots,\chi_n$. Then the discriminant $D(g)$ over $R$ is defined as $D(g) = \displaystyle\prod_{i,j,i\neq j} (\chi_i-\chi_j)$. The polynomial
	$g$ is called separable if $D(g)$ is a unit in $R$.
\end{definition}

\begin{lemma}
	Let $R$ be a semisimple $\mathbb{F}_p-algebra$. Let $g \in R[y]$ be a completely splitting, separable, monic polynomial. Then the $R-algebra$ $S \equiv R[y]/(g)$ is also semisimple.
\end{lemma}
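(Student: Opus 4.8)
The plan is to show that $S$ is isomorphic, as an $\mathbb{F}_p$-algebra, to a finite direct power of $R$; semisimplicity then follows immediately from the definition. Put $d=\deg g$. Since $g$ completely splits over $R$ we may write $g=\prod_{k=1}^{d}(y-\chi_k)$ with $\chi_k\in R$, and separability says that $D(g)=\prod_{k\neq l}(\chi_k-\chi_l)$ is a unit of $R$; as $R$ is commutative this forces each single difference $\chi_k-\chi_l$ (for $k\neq l$) to be a unit as well. Consider the evaluation homomorphism
\[
  \varepsilon\colon R[y]\longrightarrow R^{d},\qquad h\longmapsto\bigl(h(\chi_1),\dots,h(\chi_d)\bigr).
\]
Because $g(\chi_k)=0$ for every $k$, the ideal $(g)$ is contained in $\ker\varepsilon$, so $\varepsilon$ factors through $S=R[y]/(g)$, giving an $\mathbb{F}_p$-algebra map $\bar\varepsilon\colon S\to R^{d}$. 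The core of the argument is to verify that $\bar\varepsilon$ is an isomorphism.

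For surjectivity I would write down the Lagrange idempotents $e_k=\prod_{l\neq k}(y-\chi_l)(\chi_k-\chi_l)^{-1}\in R[y]$, which are well defined exactly because the $\chi_k-\chi_l$ are units; a direct check gives $e_k(\chi_j)=\delta_{kj}$, so $\varepsilon(e_k)$ is the $k$-th standard idempotent of $R^{d}$ and, since $\varepsilon$ is also $R$-linear, $\varepsilon$ (hence $\bar\varepsilon$) is onto. For injectivity it is enough to show $\ker\varepsilon\subseteq(g)$: given $h$ vanishing at all the $\chi_k$, divide by the monic polynomial $y-\chi_1$ to get $h=(y-\chi_1)h_1$; evaluating at $\chi_2$ yields $(\chi_2-\chi_1)h_1(\chi_2)=0$, and since $\chi_2-\chi_1$ is a unit $h_1(\chi_2)=0$, so $h_1=(y-\chi_2)h_2$; iterating over all the roots gives $h=\bigl(\prod_{k=1}^{d}(y-\chi_k)\bigr)h_d=g\,h_d\in(g)$. (This is just the Chinese Remainder Theorem applied to the pairwise comaximal ideals $(y-\chi_k)$, made explicit.) Hence $\bar\varepsilon\colon S\xrightarrow{\ \sim\ }R^{d}$.

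It remains to note that $R^{d}$ is semisimple. Since $R$ is semisimple it decomposes as $R\cong\bigoplus_{i=1}^{n}R_i$ with each $R_i$ a simple $\mathbb{F}_p$-algebra (cf. Lemma~\ref{idempotent}), and therefore
\[
  S\;\cong\;R^{d}\;\cong\;\bigoplus_{k=1}^{d}\bigoplus_{i=1}^{n}R_i
\]
is a finite direct sum of simple $\mathbb{F}_p$-algebras, i.e.\ $S$ is semisimple, as claimed.

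I expect the only real obstacle to be the verification that $\bar\varepsilon$ is an isomorphism, and within that the containment $\ker\varepsilon\subseteq(g)$, where one must keep in mind that $R$ is not a field: every step of the iterated division genuinely uses that the root-differences $\chi_k-\chi_l$ are units, not merely nonzero, and this is precisely what separability of $g$ provides. Everything else is formal. One could instead argue componentwise --- decompose $R\cong\bigoplus_iR_i$ first, observe that each commutative simple $R_i$ is a field over which $g$ splits into distinct linear factors, and conclude $R_i[y]/(g_i)\cong R_i^{\,d}$ --- but the evaluation-map argument above has the advantage of not needing to identify the $R_i$ as fields.
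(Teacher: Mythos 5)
Your proof is correct and uses the same central idea as the paper: the Lagrange idempotents $e_k=\prod_{l\neq k}(y-\chi_l)(\chi_k-\chi_l)^{-1}$, whose existence is exactly what separability of $g$ guarantees. The paper's proof is considerably terser: it simply writes down the $e_i$, notes they are well defined, and asserts $S=\oplus_i Se_i$ ``where each $Se_i$ is a simple $R$-algebra.'' Your version fills in what that assertion is sweeping under the rug, and in one place it is actually more careful than the paper: the paper calls each summand $Se_i$ \emph{simple}, but $Se_i\cong R$ (via evaluation at $\chi_i$), and $R$ is only assumed \emph{semisimple}, not simple. Your detour through the explicit isomorphism $\bar\varepsilon\colon S\xrightarrow{\sim}R^d$ followed by the further decomposition of each copy of $R$ into its simple factors is the honest way to land on a genuine direct sum of simple $\mathbb{F}_p$-algebras, and is therefore a slight improvement on the paper's wording rather than a different route. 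The surjectivity/injectivity arguments you give (Lagrange idempotents mapping to the standard idempotents of $R^d$, and the iterated division using that root-differences are units rather than merely nonzero) are the content the paper leaves implicit; they are correct. One tacit assumption worth flagging, which both you and the paper share, is that $R$ is commutative --- your inference that a unit product of differences forces each individual difference to be a unit, and indeed the very notion of ``completely splitting,'' quietly rely on it; in the paper's setting $R$ is always a quotient of a (commutative) polynomial ring, so this is harmless, but it is worth saying once.
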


\begin{proof} Let $g=\prod_{i=1}^n (y-\chi_i)$ where $\chi_i \in R$. Note that this decomposition need not be unique. Define $e_i$ as follows:

	\begin{align*}
		e_i = \prod_{j\neq i} \frac{y-\chi_j}{\chi_i-\chi_j}.
	\end{align*}

	The $e_i$'s are well defined since $g$ is separable. Then the $R-algebra$ $S$ 
	may be written as $S \equiv \displaystyle \oplus_{i=1}^n Se_i$ where each 
	$Se_i$ is a \emph{simple} $R-algebra$.

\end{proof}

\section{GCD of Polynomials over Algebras}
In this section we will talk about what we mean by taking the $\gcd$ of two polynomials over \emph{semisimple algebras} over $\mathbb{F}_p$. Note that the notion of $\gcd$ need not even make sense for polynomials over arbitrary 
algebras, but it is possible to extend the notion of $\gcd$ meaningfully over \emph{semisimple algebras}. Let $R$ be a semisimple algebra over $\mathbb{F}_p$. Then by lemma \ref{idempotent} there exist a set of orthogonal primitive
idempotents $\{e_1,\cdots,e_n\}$ such that $R = \oplus_{i=1}^n Re_i$. Let $g,h$ be polynomials over $R[y]$. Then we can write them as 

\begin{align*}
	g &=\sum_{i=1}^n g_i e_i \\
    h &= \sum_{i=1}^n h_i e_i
\end{align*}

Where $\forall 1 \le i \le n$, $g_i,h_i \in \mathbb{F}_p[y]$. This allows us to generalize the notion of the $\gcd$ in the most obvious way.

\begin{definition}\label{gcd}
	Let $R$ be an $\mathbb{F}_p$ algebra and $g,h$ as above polynomials in $R[y]$. Then by the $\gcd(g,h)$ in $R[y]$ we mean
	
\begin{align*}
	\gcd(g,h) = \sum_{i=1}^n \gcd(g_i,h_i)e_i
\end{align*}

where $\gcd(g_i,h_i)$ is the usual $\gcd$ over the ring $\mathbb{F}_p[y]$.
\end{definition}

Using definition \ref{gcd} we can now talk about the $\gcd$ of polynomials over the algebra $\mathcal{R}$ (definition \ref{algebraR}) as well as over \emph{semisimple algebras} over $\mathcal{R}$. However note that we do not
explicitly know the idempotents $\{\mu_1,\cdots,\mu_n\}$ of $\mathcal{R}$ over $\mathbb{F}_p$ so we cannot directly compute the expression in definition \ref{gcd}.

\begin{lemma}\label{gcdbasis}
	Let $\mathcal{S}$ be any semisimple algebra over $\mathcal{R}$ of dimension polynomial in $n=deg(f)$ for which we have a basis over $\mathbb{F}_p$. Let $g,h \in \mathcal{S}[y]$ be any two polynomials such that 
	$deg(g),deg(h)$ is bounded by some polynomial function of $n$. Then their $\gcd$ as defined in \ref{gcd} may be calculated in time polynomial in $n$ and $\log{p}$.
\end{lemma}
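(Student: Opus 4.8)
The plan is to reduce the computation of $\gcd(g,h)$ in $\mathcal{S}[y]$ (in the sense of Definition \ref{gcd}) to running the ordinary Euclidean algorithm on $g$ and $h$ directly inside $\mathcal{S}[y]$, and then to argue that the sequence of remainders produced is computable in polynomial time. First I would observe that $\mathcal{S}$, being semisimple over $\mathbb{F}_p$ of dimension $d$ polynomial in $n$, decomposes as $\mathcal{S} \equiv \oplus_{k=1}^{m} \mathcal{S}E_k$ via the orthogonal primitive idempotents of Lemma \ref{idempotent}, and that \emph{each} factor $\mathcal{S}E_k$ is a field extension of $\mathbb{F}_p$ (here one uses that $\mathcal{S}$ is commutative, being a semisimple algebra over $\mathcal{R}$ which is itself commutative, so each simple component is a finite field). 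Hence in each component the Euclidean algorithm on the images $g_k, h_k$ is well-defined and terminates, and by Definition \ref{gcd} the componentwise collection of these gcds is exactly $\gcd(g,h)$.

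The key step is then to show that one need not know the idempotents $E_k$ explicitly to carry this out. The obstacle is that the naive Euclidean algorithm over $\mathcal{S}[y]$ requires dividing by leading coefficients, and a leading coefficient of an intermediate remainder may be a zero divisor in $\mathcal{S}$ (it is invertible in some components $\mathcal{S}E_k$ and zero in others), so the division step is not literally available. I would handle this by the standard device of a \emph{fail-safe} or \emph{dynamic} Euclidean algorithm: at each step, attempt to invert the current leading coefficient $c \in \mathcal{S}$ by solving the linear system for $c^{-1}$ over the $\mathbb{F}_p$-basis of $\mathcal{S}$; if $c$ is a unit, proceed with the division; if not, the kernel of multiplication-by-$c$ yields a nontrivial idempotent decomposition $\mathcal{S} \equiv \mathcal{S}' \oplus \mathcal{S}''$ (split off the annihilator of $c$), recurse on each summand with the restricted polynomials, and recombine the results componentwise at the end. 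Since $\mathcal{S}$ has at most $d$ simple components, this recursion branches at most $d$ times in total, and each invocation does $O(\deg g + \deg h)$ pseudo-division steps, each step being a linear-algebra computation of cost polynomial in $d$ and $\log p$.

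Finally I would assemble the cost bound: every arithmetic operation in $\mathcal{S}$ (addition, multiplication, inversion-or-split) costs $\mathrm{poly}(d, \log p)$ using the given $\mathbb{F}_p$-basis and its structure constants; the number of such operations across the whole recursive Euclidean computation is $\mathrm{poly}(d, \deg g, \deg h)$; and by hypothesis $d$, $\deg g$, $\deg h$ are all $\mathrm{poly}(n)$. Therefore $\gcd(g,h)$ is computed in time $\mathrm{poly}(n, \log p)$, and — crucially — the output is returned as an element of $\mathcal{S}[y]$ expressed in the given $\mathbb{F}_p$-basis, with no explicit knowledge of $\{E_1,\dots,E_m\}$ (equivalently, of $\{\mu_1,\dots,\mu_n\}$) required. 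The main technical point to get right is the correctness of the split-on-failure step: namely that the pseudo-remainder sequence computed this way, when mapped into each field component $\mathcal{S}E_k$, really coincides with the ordinary Euclidean remainder sequence there, so that the recombined answer matches Definition \ref{gcd} exactly.
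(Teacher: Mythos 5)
Your proposal is correct and takes essentially the same route as the paper: run the Euclidean algorithm directly in $\mathcal{S}[y]$, and whenever an intermediate leading coefficient $a$ fails to be invertible, split $\mathcal{S}$ into the complementary orthogonal pieces $\mathcal{S}a$ and its annihilator, restrict $g,h$ to each summand, recurse, and recombine componentwise. Your added observations (that the recursion branches at most $\dim_{\mathbb{F}_p}\mathcal{S}$ times and that unit-testing/splitting is linear algebra over the given basis) are exactly the implicit bookkeeping behind the paper's terser ``it takes time bounded by some polynomial in $n$ and $\log p$.''
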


\begin{proof}
	Run the usual Euclidean algorithm on the two polynomials $g,h$. The procedure either goes through or we encounter a polynomial whose leading coefficient $a \in \mathcal{S}$ is not invertible. Consider the two orthogonal 
	algebras $\mathcal{S}_1=\mathcal{S}a$ and $\mathcal{S}_2=\mathcal{S}-\mathcal{S}a$, so that $\mathcal{S}= \mathcal{S}_1 \oplus \mathcal{S}_2$. Since we have a basis for $\mathcal{S}$ over $\mathbb{F}_p$ it is easy to compute
	the identity elements $e_1$ and $e_2$ of $\mathcal{S}_1$ and $\mathcal{S}_2$ respectively. Let $g_1=ge_1,h_1=he_1$ and $g_2=ge_2,h_2=he_2$. Recursively compute $\gcd(g_1,h_1) \in \mathcal{S}_1[y]$ and $\gcd(g_2,h_2) \in 
	\mathcal{S}_2[y]$. Then it is easy to see from the definition that $\gcd(g,h) = \gcd(g_1,h_1) + \gcd(g_2,h_2) \in \mathcal{S}[y]$ and that it takes time bounded by some polynomial in $n$ and $\log{p}$.
\end{proof}
	
Note that although we do not know the idempotent basis of $\mathcal{R}$ over $\mathbb{F}_p$ we do have the usual polynomial basis $\{1,x,x^2,\cdots,x^{n-1}\}$ over $\mathbb{F}_p$. Hence we may use the procedure in lemma \ref{gcdbasis}
to calculate $\gcd$'s of polynomials over $\mathcal{R}$.

\chapter{Main Results}

Let $f=\prod_{i=1}^n (x-\xi_i) \in \mathbb{F}_p[x]$ be the \emph{square-free}, \emph{monic} and \emph{completely splitting} polynomial which we wish to factor. Before we deal with the notion of \emph{square balanced polynomials} and
their extension we will make a comment about $f$.

\begin{lemma}\label{polyroot}
	Let $q \in \mathbb{F}_p[x]$ be a polynomial of degree bounded by some polynomial in $n$ and $\log{p}$. Then the polynomial $f_q(x) = \prod_{i=1}^n (x-q(\xi_i)) \in \mathbb{F}_p[x]$ may be constructed in time polynomial in $n$
	and $\log{p}$.
\end{lemma}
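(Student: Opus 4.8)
The plan is to recognize $f_q$ as the characteristic polynomial of a multiplication operator on the algebra $\mathcal{R} = \mathbb{F}_p[x]/(f)$ from Definition~\ref{algebraR}, and then to compute that characteristic polynomial by linear algebra over $\mathbb{F}_p$ using the explicit polynomial basis, so that the roots $\xi_i$ are never needed.

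First I would use the decomposition $\mathcal{R} \cong \bigoplus_{i=1}^n \mathbb{F}_p[x]/(x-\xi_i)$, valid because $f$ is squarefree and completely splitting. Under this isomorphism the class $X \equiv x \pmod f$ corresponds to $(\xi_1,\dots,\xi_n)$, so $q(X) \in \mathcal{R}$ corresponds to $(q(\xi_1),\dots,q(\xi_n))$, and the $\mathbb{F}_p$-linear map $\lambda_q\colon \mathcal{R}\to\mathcal{R}$ of multiplication by $q(X)$ acts on the $i$-th summand as the scalar $q(\xi_i)$. Hence $\det(xI-\lambda_q) = \prod_{i=1}^n (x - q(\xi_i)) = f_q(x)$. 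Note that it must be the characteristic polynomial, not the minimal polynomial, that is used, since the values $q(\xi_i)$ need not be distinct.

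Next I would carry out the computation. Reduce $q$ modulo $f$ to get $\bar q := q \bmod f$ of degree $<n$; since $\deg q$ is bounded by a polynomial in $n$ and $\log p$, this costs time polynomial in $n$ and $\log p$ by ordinary polynomial division. Using the basis $\{1,x,\dots,x^{n-1}\}$ of $\mathcal{R}$, assemble the $n\times n$ matrix $M$ of $\lambda_q$ whose $j$-th column is the coefficient vector of $x^{j-1}\bar q \bmod f$; each column requires one polynomial division of a polynomial of degree $O(n)$ by $f$, so $M$ is built in time polynomial in $n$ and $\log p$. Finally compute the characteristic polynomial $\det(xI-M) \in \mathbb{F}_p[x]$ by a standard deterministic procedure (for instance the Faddeev--Le~Verrier recurrence, or fraction-free Gaussian elimination applied to $xI - M$ over $\mathbb{F}_p[x]$), which again costs time polynomial in $n$ and $\log p$. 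By the previous step this output equals $f_q$.

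There is no serious obstacle here; the only points to check are that using the concrete polynomial basis of $\mathcal{R}$ (rather than the unknown idempotents $\mu_1,\dots,\mu_n$) still lets one write down $M$ explicitly --- it does, because reduction modulo $f$ is an effective operation --- and that one invokes characteristic- rather than minimal-polynomial computation. As an equivalent alternative one may instead observe that $f_q(x) = \operatorname{Res}_y\bigl(f(y),\, x - q(y)\bigr)$, up to the sign fixed by $f$ being monic, and appeal to polynomial-time resultant computation over $\mathbb{F}_p[x]$.
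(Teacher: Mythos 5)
Your proposal is correct and is essentially the paper's own proof: the companion matrix $C_f$ the paper invokes is precisely the matrix of multiplication by $X$ on $\mathcal{R}$ in the basis $\{1,x,\dots,x^{n-1}\}$, so $q(C_f)$ is your matrix $M$ of $\lambda_q$, and $\det(xI - q(C_f))$ is the same characteristic polynomial you compute. You have merely spelled out the intermediate justification (via the CRT decomposition of $\mathcal{R}$) that the paper leaves as ``by definition,'' and noted the resultant formulation as an equivalent route.
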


\begin{proof}
	Given $f$ we can construct its companion matrix $C_f\in Mat_n(\mathbb{F}_p)$. Then $f_q(x) = det(xI-q(C_f))$ by definition. Since $deg(q)$ is bounded by some polynomial in $n$ and $\log{p}$ the whole
	operation can be done in time polynomial in $n$ and $\log{p}$.
\end{proof}

\begin{lemma}\label{polyrootfactor}
	Let $f_q \in \mathbb{F}_p[x]$ be the polynomial as defined above. If we can obtain $g_q$, a non-trivial factor of $f_q$, then we can obtain a non-trivial factor of $f$ in additional time that is polynomial in $n$ and $\log{p}$.
\end{lemma}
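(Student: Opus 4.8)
The plan is to exhibit a map between the roots of $f_q$ and the roots of $f$ that is induced by the polynomial $q$, and then to transfer a factorization of $f_q$ back along this map. First I would recall that, by construction, the roots of $f_q$ are exactly the values $q(\xi_1),\dots,q(\xi_n)$; these need not be distinct, but that will not matter. Given a non-trivial factor $g_q \mid f_q$, write $g_q = \prod_{i \in T}(x - q(\xi_i))$ for some nonempty proper subset of indices (after grouping equal values), where the product is taken with appropriate multiplicity. The key observation is that the set $S = \{\xi_i : q(\xi_i) \text{ is a root of } g_q\}$ is a nonempty subset of the roots of $f$, and it is a \emph{proper} subset precisely when $g_q$ is a proper factor — at least generically; the degenerate case where $q$ collapses all of $S^c$ onto roots counted by $g_q$ has to be handled, and that is where the main work lies.

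The cleanest way to carry this out is to avoid reasoning about which roots land where and instead work with gcd's in the algebra $\mathcal{R} = \mathbb{F}_p[x]/(f)$. Recall $X \equiv x \pmod f$, so $q(X) \in \mathcal{R}$, and evaluating the image of $q(X)$ in the $i$-th component $\mathbb{F}_p[x]/(x-\xi_i)$ gives $q(\xi_i)$. Therefore $g_q(q(X)) \in \mathcal{R}$ is an element whose $i$-th component vanishes exactly when $q(\xi_i)$ is a root of $g_q$. Form $h(x) = \gcd\bigl(f(x),\, g_q(q(x))\bigr) \in \mathbb{F}_p[x]$, computed by the usual Euclidean algorithm (or, if one prefers to stay inside $\mathcal{R}$, by Lemma~\ref{gcdbasis} using the standard power basis of $\mathcal{R}$, as noted after that lemma). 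Since $g_q(q(\xi_i)) = 0$ iff $\xi_i \in S$, we get $h(x) = \prod_{i \in S}(x - \xi_i)$, a factor of $f$; all of this takes time polynomial in $n$ and $\log p$ because $\deg q$, $\deg g_q$, and hence $\deg(g_q \circ q)$ are polynomially bounded, so the Euclidean algorithm runs in polynomial time.

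It remains to argue that $h$ is a \emph{non-trivial} factor, i.e. $0 < \deg h < n$. Non-vanishing: $g_q$ has a root, say $q(\xi_j)$ for some $j$, so $\xi_j \in S$ and $\deg h \geq 1$. For properness, suppose $\deg h = n$, i.e. $S$ is all of $\{1,\dots,n\}$; then every $q(\xi_i)$ is a root of $g_q$, so $f_q(x) = \prod_i (x - q(\xi_i))$ divides a power of $g_q$, and since $\deg g_q < \deg f_q = n$ this is possible only when $q$ is not injective on the roots — the values $q(\xi_i)$ repeat. In that case $q(X) \in \mathcal{R}$ is a non-unit-difference element: more usefully, the polynomial $r(x) = \gcd\bigl(f(x),\, f_q'(q(x)) \cdot (\text{something})\bigr)$ detects the collision, but the simplest fix is the standard one — if $q$ is not injective on $\{\xi_i\}$ then $\gcd\bigl(f(x), q(x) - q(\xi_i)\bigr)$ already splits $f$ for a suitable constant, and one can find such a constant by noting that the resultant-type expression $\prod_{i \neq j}(q(\xi_i) - q(\xi_j))$, computable via the companion matrix as in Lemma~\ref{polyroot}, vanishes, which lets us locate a non-trivial split. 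The main obstacle, then, is precisely this bookkeeping around non-injectivity of $q$: one should state at the outset (or assume, as is standard in this line of work) that the only interesting $q$ are those for which the $q(\xi_i)$ are distinct — equivalently $f_q$ is squarefree of degree $n$ — in which case $g_q$ proper forces $|S| < n$ immediately, and the argument above goes through verbatim with no degenerate cases.
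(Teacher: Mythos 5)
Your central construction, $h(x)=\gcd\bigl(f(x),\,g_q(q(x))\bigr)$, is exactly the paper's proof: the paper dispatches the lemma in a single sentence with precisely this gcd, and your argument that $h$ is nonconstant (since $g_q$ has some root, and every root of $g_q$ is some $q(\xi_j)$) and proper when the $q(\xi_i)$ are distinct is correct. One small inaccuracy: this gcd lives in $\mathbb{F}_p[x]$, so it is the ordinary Euclidean algorithm that does the work; Lemma~\ref{gcdbasis} concerns gcds of polynomials over a semisimple $\mathcal{R}$-algebra and is not really the relevant tool here.

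You are also right that the paper silently glosses over the case where $q$ collapses roots. The lemma as literally stated can fail there: take $q(\xi_1)=q(\xi_2)=\alpha_1$, $q(\xi_3)=q(\xi_4)=\alpha_2$, so $f_q=(x-\alpha_1)^2(x-\alpha_2)^2$, and let $g_q=(x-\alpha_1)(x-\alpha_2)$, a genuine non-trivial factor of $f_q$; then $g_q(q(\xi_i))=0$ for every $i$ and the gcd is all of $f$. However, the repairs you sketch for this case do not actually close the gap. Knowing that the discriminant $\prod_{i\neq j}(q(\xi_i)-q(\xi_j))$ vanishes certifies that a collision exists, but it does not furnish the constant $c$ for which $\gcd(f(x),q(x)-c)$ splits $f$; finding such a $c$ is finding a root of $\gcd(f_q,f_q')$, i.e.\ a fresh (if smaller) deterministic root-finding problem, not a free step. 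Likewise, $\gcd\bigl(f(x),f_q'(q(x))\bigr)$ returns $f$ whenever every root of $f_q$ has multiplicity at least $2$, so it is not a reliable collision-breaker either. The honest fix is the one you state last: restrict attention to $q$ for which $f_q$ is square-free (as Gao and Saha do for the specific $q$'s they use, and as this paper implicitly assumes), or, when $f_q$ is detected to have repeated roots, treat that as reducing the problem to a smaller square-free polynomial rather than as an immediate split of $f$. With that hypothesis made explicit, your argument is complete and coincides with the paper's; the ``locate a split from the vanishing resultant'' step should simply be dropped.
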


\begin{proof}
	It is easy to see that $\gcd(g_q(q(x)),f)$ gives us the required non-trivial factor.
\end{proof}

From these two simple lemmas we conclude that factoring $f$ is equivalent to factoring a polynomial $f_q$ where the degree of $q$ is bounded by a polynomial in $n$ and $\log{p}$. These polynomials form the basis of Gao's
\emph{super square balance} condition in \cite{gao2001deterministic} and Saha's \emph{cross balance condition} in \cite{saha2008factoring}, since the symmetry condition on $f$ can always be extended to a similar symmetry condition
on $f_q$ (for suitable polynomial $q$) due to lemmas \ref{polyroot} and \ref{polyrootfactor}. Since we are interested in strengthening this symmetry condition, we will henceforth not talk about these polynomials but will note that
all the following results for $f$ are also applicable to polynomials of the form $f_q$.

\section{A Stronger Notion of Balanced Polynomials}\label{sec:stronger}
Let $\mathcal{R}$ be the $\mathbb{F}_p-algebra$ defined in definition \ref{algebraR}. Let
$p-1=2^rw$, where $w$ is odd and let $\gamma$ be a \emph{quadratic non-residue} that can be computed effeciently due to ERH (see section~\ref{sec:erh}). Then $\eta=\gamma^w$ is a generator of the \emph{$2$-Sylow} group $\mathbb{F}_p^\times$.
In \cite{gao2001deterministic} Gao gave an algorithm $\sigma$ to compute the square roots of elements in the alegbra $\mathcal{R}$, where $\sigma(a)$ is the square root given by the algorithm for any quadratic residue $a \in
\mathcal{R}$. Then we have the following lemma due to Gao.

\begin{lemma}\label{squareroot}
	Let $a \in \mathbb{F}_p$ such that $a=\eta^u\theta$ where $\theta$ has odd order. Let $u = \sum_{i=0}^{r-1} u_i2^i$, where $\forall 1 \le i \le r-1, u_i \in \{0,1\}$. Then $\sigma(a^2)=a$ iff $u_{r-1}=0$ and 
	$\sigma(a^2)=-a$ iff $u_{r-1}=1$.
\end{lemma}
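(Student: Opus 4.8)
The plan is to unpack how Gao's square-root algorithm $\sigma$ behaves on the $2$-Sylow part of $\mathbb{F}_p^\times$ and show that the top bit $u_{r-1}$ of the discrete logarithm $u$ records exactly the sign ambiguity in $\sigma(a^2)$. Recall that $\eta = \gamma^w$ generates the cyclic $2$-group $G$ of order $2^r$ inside $\mathbb{F}_p^\times$, and that $-1$ is the unique element of order $2$ in $G$, so $-1 = \eta^{2^{r-1}}$. Writing $a = \eta^u \theta$ with $\theta$ of odd order, we have $a^2 = \eta^{2u}\theta^2$; since $\theta$ has odd order, $\theta$ itself is a square with a canonical odd-order square root, so the only genuine ambiguity in extracting $\sqrt{a^2}$ is in choosing between $\eta^u\theta$ and $-\eta^u\theta = \eta^{u + 2^{r-1}}\theta$ (indices mod $2^r$). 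The claim is that $\sigma$ always returns the one whose $\eta$-exponent lies in $\{0,1,\dots,2^{r-1}-1\}$, i.e. has top bit $0$.

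First I would recall the precise description of $\sigma$ from \cite{gao2001deterministic}: to take the square root of a quadratic residue $b$, one writes $b = \eta^{2v}\psi$ (with $\psi$ of odd order, $v$ determined mod $2^{r-1}$), computes the canonical odd square root $\psi^{1/2}$ by raising to the appropriate inverse power of $2$ modulo the odd part $w$, and outputs $\eta^{v}\psi^{1/2}$, where $v$ is obtained by the standard Tonelli--Shanks-style bit-by-bit recovery that, by construction, produces the representative $v$ with $0 \le v < 2^{r-1}$. Applying this with $b = a^2$: we have $2u \equiv 2v \pmod{2^r}$, hence $u \equiv v \pmod{2^{r-1}}$, and since $0 \le v < 2^{r-1}$, $v$ is precisely $u$ with its top bit deleted, i.e. $v = \sum_{i=0}^{r-2} u_i 2^i = u - u_{r-1}2^{r-1}$. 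Therefore $\sigma(a^2) = \eta^{v}\theta = \eta^{u - u_{r-1}2^{r-1}}\theta = \eta^{-u_{r-1}2^{r-1}}\,a = (-1)^{u_{r-1}}a$. This gives both directions at once: $\sigma(a^2) = a$ iff $u_{r-1}=0$, and $\sigma(a^2) = -a$ iff $u_{r-1}=1$.

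The main obstacle is purely expository rather than mathematical: one must pin down exactly which normalization of $\sigma$ Gao uses — in particular that the bit-recovery step yields the \emph{least} nonnegative residue of the $\eta$-exponent modulo $2^{r-1}$ — since the statement of the lemma (``$u_{r-1}=0$'' vs ``$u_{r-1}=1$'') is sensitive to this convention. I would therefore (i) quote the definition of $\sigma$ verbatim, (ii) verify the claim $0 \le v < 2^{r-1}$ for the value $v$ it produces, and (iii) carry out the short computation above. One should also note the edge case $r = 1$: then $G = \{1,-1\}$, $u_{r-1} = u_0 \in \{0,1\}$, $a^2 = \theta^2$ already has odd order so $\sigma(a^2) = \theta = \eta^{-u_0}a = (-1)^{u_0}a$, consistent with the statement. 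With the convention fixed, the argument is a one-line consequence of the structure of the cyclic $2$-group and the definition of $\sigma$, so no further difficulty arises.
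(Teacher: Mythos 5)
The paper itself does not prove this lemma: the author's entire ``proof'' reads \emph{``The proof can be found in \cite{gao2001deterministic}. We will omit the proof here since we will not be using his algorithm.''} So there is no in-paper argument to compare against; your write-up is genuinely filling a cited-out gap, and the thesis itself goes on to replace $\sigma$ with a direct $2$-Sylow test precisely so that this lemma is never load-bearing.

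Your argument is structurally sound, and the reduction is the right one: $-1=\eta^{2^{r-1}}$ is the unique involution in the $2$-Sylow subgroup, the odd part $\theta$ of $a$ is forced (it is the unique odd-order square root of $\theta^2$), so the only ambiguity in $\sigma(a^2)$ is a sign, and the computation $\sigma(a^2)=\eta^{v}\theta=(-1)^{u_{r-1}}a$ is correct once $v$ is the bottom $r-1$ bits of $u$. The one real gap --- which you flag yourself as item (ii) --- is the assertion that Gao's $\sigma$, on input $b=\eta^{2v}\psi$, returns the square root whose $\eta$-exponent is the least nonnegative residue of $v$ modulo $2^{r-1}$. This is not automatic and is exactly where the content of the lemma lives. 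A Tonelli--Shanks-style implementation typically forms something like $b^{(w+1)/2}\cdot\eta^{-k}$ for a bit-extracted correction $k$ with $0\le k<2^{r-1}$; the $\eta$-exponent of the output is then only pinned down modulo $2^{r-1}$ a priori, and which of the two lifts modulo $2^{r}$ one lands on depends on the precise bookkeeping (e.g.\ whether the exponent $u(w+1)-k$ crosses a multiple of $2^{r}$). So the normalization claim must actually be verified against Gao's pseudocode rather than asserted; without that, the proof proves ``$\sigma(a^2)=\pm a$ with the sign determined by some bit of $u$,'' not the specific claim about $u_{r-1}$. Since you explicitly identify this as the step to be checked, the plan is fine, but the write-up as given would not yet stand alone.
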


\begin{proof}The proof can be found in \cite{gao2001deterministic}. We will omit the proof here since we will not be using his algorithm.
\end{proof}

Based upon this algorithm Gao gave the following definition for \emph{square balanced} polynomials.

\begin{definition}\label{squarebalance}
	Let $f=\prod_{i=1}^n (x-\xi_i) \in \mathbb{F}_p[x]$ be a \emph{square-free},\emph{completely splitting} polynomial. Then for each $1 \le i \le n$ define the set $D_i$ as follows:

	\begin{align*}
		D_i &= \left\{\xi_j \mid \xi_i\neq \xi_j,
		\sigma( (\xi_i-\xi_j)^2)=\xi_i-\xi_j\right\}
	\end{align*}

	Then the polynomial $f$ is square-balanced if $\forall 1 \le i,j \le n, |D_i|=|D_j|=\frac{n-1}{2}$.
\end{definition}

\noindent
Lemma \ref{squareroot} also gives us this alternative definition of the sets 
$D_i$:

\begin{align*}
	D_i &= \left\{\xi_j |\xi_i \neq \xi_j,(\xi_i-\xi_j)= \eta^u \theta,2\nmid o(\theta),u=\sum_{k=0}^{r-1}u_k2^k,u_{r-1}=0\right\}.
\end{align*}

We will now generalize this definition of the sets $D_i$ given by Gao and give a ``balance condition'' that is stronger.

\begin{definition}\label{strongerbal}
	Consider the following sequence of sets for $0 \le k \le r-1$
\begin{align*}
	D^k_i &= \left\{\xi_j \mid\xi_i \neq \xi_j,(\xi_i-\xi_j)=\eta^u 
	\theta,2\nmid o(\theta),u=\sum_{k=0}^{r-1}u_k2^k,u_{k}=0\right\}.\\
\end{align*}

Then the polynomial $f$ satisfies this stronger ``balance condition'' if $\forall 1 \le i,j \le n, \forall 0 \le k \le r-1, |D^k_i|=|D^k_j|$.
\end{definition}

Note that Gao's \emph{square balance} condition is essentially $\forall 1 \le i,j \le n,|D^{r-1}_i|=|D^{r-1}_j|$ and hence is a special case of definition \ref{strongerbal}. The following lemma is an extension of Gao's 
result on \emph{square balanced} polynomials.

\begin{lemma}\label{g}
	A polynomial $f \in \mathbb{F}_p[x]$ can be factored in deterministic polynomial time under the assumption of ERH, if it does not satisfy the``balance condition'' of definition \ref{strongerbal}.
   \end{lemma}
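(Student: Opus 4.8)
The plan is to show that failure of the balance condition in Definition~\ref{strongerbal} yields, via an explicit construction over the algebra $\mathcal{R}$, a nontrivial idempotent (equivalently a nontrivial factor of $f$), all in time polynomial in $n$ and $\log p$. The starting observation is that Gao's square-root algorithm $\sigma$, together with Lemma~\ref{squareroot}, gives us for each bit index $k$ a way to ``read off'' the bit $u_k$ in the $2$-Sylow part of an element of $\mathbb{F}_p^\times$. Concretely, for a fixed $k$ with $0\le k\le r-1$, one composes $\sigma$ with the appropriate power maps (squaring the element enough times to push the relevant bit into the top position, or equivalently applying $\sigma$ iteratively) to obtain a polynomial-time computable function $\chi_k\colon \mathcal{R}\to \mathcal{R}$ that, evaluated at the ``difference element'' $\delta = X\otimes 1 - 1\otimes X$ in $\mathcal{R}\otimes_{\mathbb{F}_p}\mathcal{R}$ (or more elementarily, working with $f$ and $f_q$-type constructions as in Lemmas~\ref{polyroot} and \ref{polyrootfactor}), detects exactly when $\xi_j\in D_i^k$.

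The core of the argument is then the standard ``counting gives a factor'' step that already appears in \cite{gao2001deterministic}. First I would form, for the given bad index $k$, the polynomial whose roots are the $\xi_i$ and attach to each root the quantity $|D_i^k|$; by Lemma~\ref{squareroot} this is computable by taking $\gcd$'s over $\mathcal{R}$ in the sense of Definition~\ref{gcd}, using Lemma~\ref{gcdbasis} to do so with only the polynomial basis $\{1,x,\dots,x^{n-1}\}$ in hand. Since $f$ fails the balance condition, there exist $i,j$ with $|D_i^k|\ne |D_j^k|$, so the function $\xi_i\mapsto |D_i^k|$ is nonconstant. This function takes values in $\{0,1,\dots,n-1\}\subseteq\mathbb{F}_p$ (here $n<p$, which we may assume after the usual reductions), so it is realized by an element $c\in\mathcal{R}$ that is not a scalar multiple of $1$; its minimal polynomial over $\mathbb{F}_p$ then has degree at least $2$ and splits into linear factors over $\mathbb{F}_p$, and the $\gcd$ of $f$ with any one of those factors evaluated at $c$ gives a proper nontrivial factor of $f$. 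Equivalently, partitioning the roots by the value of $|D_i^k|$ partitions $f$ into at least two nonconstant factors, each computable by a $\gcd$.

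The only genuinely delicate point is arranging that the whole computation is done \emph{implicitly}: we never have the roots $\xi_i$ individually, so the sets $D_i^k$ must be manipulated through operations on $\mathcal{R}$ and its semisimple extensions rather than pointwise. The hard part will be making precise that the map $\chi_k$ built from $\sigma$ can be applied to the generic difference in a way that, after a suitable $\gcd$, produces an element of $\mathcal{R}$ whose $\mu_i$-component equals $|D_i^k|$; this is where one has to be careful that $\sigma$ is applied only to genuine quadratic residues (splitting off the non-residue part of $\mathcal{R}$ by a $\gcd$, which itself may already hand us a factor and thus only helps), and that summing the indicator over $j$ respects the idempotent decomposition. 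Once that bookkeeping is in place, the factor-extraction is routine. Since there are only $r=O(\log p)$ choices of $k$, running this for each $k$ and returning the first nontrivial split encountered gives the claimed deterministic polynomial-time algorithm under ERH (ERH enters only through the computation of the quadratic non-residue $\gamma$, hence $\eta$, as in Section~\ref{sec:erh}).
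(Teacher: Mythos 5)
Your high-level strategy---use the failure of balance to produce a nonconstant $\mathbb{F}_p$-valued function on the roots, then split $f$ via $\gcd$'s---is the right one, and your final ``factor-extraction'' step (a nonconstant $c\in\mathcal{R}$ taking values in $\{0,\dots,n-1\}$ yields a proper factor by trying $\gcd(c(x)-\lambda,f(x))$ for each $\lambda$) is sound. But you have correctly identified and then left unresolved the actual crux: how to implicitly produce an element of $\mathcal{R}$ whose $\mu_i$-component equals $|D_i^k|$. That step is not ``bookkeeping''; it is the entire content of the lemma, and you do not give a construction for it. Moreover, the route you sketch toward it---applying Gao's $\sigma$ together with iterated squarings to a generic difference element---is precisely what the paper deliberately avoids; the text following Lemma~\ref{squareroot} says explicitly that $\sigma$ will not be used. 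Building $\sigma$-based ``bit detectors'' over $\mathcal{R}$ (rather than over a field) requires care because $\sigma$ is only well-behaved on a subalgebra where the input is a square, and you would still then have to turn per-root indicator values into a per-root count, which is an extra aggregation step you have not specified.

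The paper's argument bypasses all of this by working with a single auxiliary polynomial rather than an auxiliary element of $\mathcal{R}$. It forms
\begin{align*}
 g(y,x) \;=\; \frac{f(-y+X)}{-y} \;=\; \sum_{i=1}^n \prod_{j\neq i} \bigl(y-(\xi_i-\xi_j)\bigr)\,\mu_i \;\in\; \mathcal{R}[y],
\end{align*}
and then for each bit index $k$ takes
\begin{align*}
 d_k \;=\; \gcd\!\left(\,\prod_{i=0}^{2^k-1}\Bigl(y^{\frac{p-1}{2^{k+1}}}-\eta^i\Bigr),\; g\,\right).
\end{align*}
The fixed ``sieve'' polynomial $\prod_{i=0}^{2^k-1}\bigl(y^{(p-1)/2^{k+1}}-\eta^i\bigr)$ is an explicit polynomial over $\mathbb{F}_p$ whose roots are exactly the elements of $\mathbb{F}_p^\times$ with $u_k=0$ in their $2$-Sylow expansion; no $\sigma$ is needed, and $\eta$ is available under ERH. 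The resulting $d_k$ has $\mu_i$-component of degree $|D_i^k|$, so if the balance condition fails, $d_k$'s leading coefficient is already a zero divisor of $\mathcal{R}$ and factors $f$ on the spot (and the $\gcd$ computation itself, per Lemma~\ref{gcdbasis}, may produce a zero divisor even earlier). There is no need to ever materialize the counting element $c$ or its characteristic polynomial. So the gap in your write-up is real: you have correctly reduced the problem to ``implicitly compute the per-idempotent count,'' but you have not shown how to do that, and the $\sigma$-based path you gesture at is both harder and contrary to the paper's stated intent; the fix is to replace it with the $\gcd$-against-a-sieve-polynomial construction above.
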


\begin{proof}
	Let $X \equiv x \pmod{f}$. Consider $g(y,x)=\frac{f(-y+X)}{-y} \in \mathcal{R}[y]$. We have

	\begin{align*}
		g(y,x) &= \frac{f(-y+X)}{-y} = \sum_{i=1}^n \prod_{j\neq i} (y-(\xi_i-\xi_j)) \mu_i
	\end{align*}

	We define a sequence of polynomials $d_0,\cdots,d_k,\cdots,d_{r-1}$ in the following fashion:

	\begin{align*}
	       d_k &= \gcd\left(\prod_{i=0}^{2^k-1} \left(y^{\frac{p-1}{2^{k+1}}}-
	       \eta^i\right),g\right) = 
	       \sum_{i=1}\prod_{j\in D^k_i}(y-(\xi_i-\xi_j))\mu_i
	     \end{align*}

	From definition \ref{strongerbal} it is not too difficult to see that each polynomial $d_k$, $0 \le k \le r-1$ corresponds to the sequence of sets $D^k_i$, $1 \le i \le n$ since $d_k(-y+X)\mu_i$ has its roots the
	set $D^k_i$. 
	     Since (by the assumption of ERH) we know $\eta$, each of these 
	     \(\gcd\)'s may be calculated in time polynomial in $n$ and
	     $\log{p}$ and the number of such polynomials ($2r$ of them) is bounded by $O(\log{p})$. Now suppose if $\exists 0 \le k \le r-1,\exists 1\le i,j \le n$ such that  $|D^k_i|\neq |D^k_j|$ then the leading coefficient
	     of $d_k$ is a \emph{zero divisor} in $\mathcal{R}$ which gives a decomposition of $f$ over $\mathbb{F}_p$.
\end{proof}

The above lemma leads to algorithm \ref{alg:squarebal} which fails to factor a polynomial $f \in \mathbb{F}_p[x]$ if it satisfies the above stronger notion of symmetry or balance.
A slight modification of the same approach also allows us to infer that the roots of $f$ must have the same \emph{$2$-Sylow} component.

\begin{lemma}
	A polynomial $f = \prod_{i=1}^n (x-\xi_i) \in \mathbb{F}_p[x]$ can be factored in deterministic polynomial time under the assumption of ERH if $\exists i,j \in \{1,\cdots,n\},\xi_i^w \neq \xi_j^w$, or in other
	words if the \emph{$2$-Sylow} component of $\xi_i$ differs from that of $\xi_j$.
\end{lemma}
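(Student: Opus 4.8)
The plan is to mimic the proof of Lemma~\ref{g}, replacing the polynomial $g(y,x)$ built from the \emph{differences} $\xi_i-\xi_j$ by one built directly from the roots $\xi_i$ themselves, and to apply the same $\gcd$-with-known-element technique to isolate the $2$-Sylow component. First I would set $X \equiv x \pmod f$ and observe that $x$ itself, viewed as the element $X$ of $\mathcal R \equiv \oplus_{i=1}^n \mathbb F_p[x]/(x-\xi_i)$, decomposes as $X = \sum_{i=1}^n \xi_i \mu_i$, where the $\mu_i$ are the primitive idempotents. Raising to the odd part $w$ of $p-1$ gives $X^w = \sum_{i=1}^n \xi_i^w \mu_i$, which can be computed in time polynomial in $n$ and $\log p$ by repeated squaring in $\mathcal R$ (using the polynomial basis $\{1,x,\dots,x^{n-1}\}$ as in Lemma~\ref{gcdbasis}). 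Here $\xi_i^w$ is precisely the $2$-Sylow component of $\xi_i$: since $\xi_i \in \mathbb F_p^\times$ has order dividing $p-1 = 2^r w$, we may write $\xi_i = \eta^{u_i}\theta_i$ with $o(\theta_i)$ odd, and then $\xi_i^w = \eta^{u_i w}\theta_i^w$; as $\gcd(w,2^r)=1$, this is a generator-power of the $2$-Sylow subgroup that is trivial iff $u_i \equiv 0$, i.e.\ it records exactly the $2$-Sylow part of $\xi_i$. (One should also handle the degenerate case $\xi_i = 0$ separately, e.g.\ by first splitting off $\gcd(f,x)$.)

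Next I would extract a decomposition of $\mathcal R$ from the element $a := X^w \in \mathcal R$ whenever its coordinates $\xi_i^w$ are not all equal. The element $a$ satisfies a polynomial of degree at most $n$ over $\mathbb F_p$ — for instance its minimal polynomial, or simply $h(T) = \prod_{i}(T - \xi_i^w)$ which we can build via the companion-matrix construction of Lemma~\ref{polyroot} applied to $q(x) = x^w$. If the $\xi_i^w$ take at least two distinct values, then $h$ is a non-trivial polynomial of degree $< n$ (after removing repeated factors) annihilating $a$, and factoring $h = h_1 h_2$ into coprime pieces yields, via $h_1(a)$ and $h_2(a)$, a non-trivial idempotent and hence an orthogonal splitting $\mathcal R = \mathcal R_1 \oplus \mathcal R_2$. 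Pulling this back to $\mathbb F_p[x]/(f)$ gives a non-trivial factorization of $f$: concretely, the idempotent $h_1(a) \in \mathcal R$ is represented by some $e(x) \in \mathbb F_p[x]$, and $\gcd(e(x), f(x))$ (or $\gcd(e(x)-1,f(x))$) is a proper factor, computable by Lemma~\ref{gcdbasis}. Equivalently, and more in the spirit of Lemma~\ref{g}: the hypothesis $\xi_i^w \neq \xi_j^w$ for some $i,j$ means the roots $\xi_i^w$ of $h$ are partitioned non-trivially, so one of the $\gcd$'s $\gcd\!\bigl(\prod_{\ell=0}^{2^k-1}(T^{(p-1)/2^{k+1}} - \eta^\ell),\, h\bigr)$ — or even just the factorization of $h$ into the fixed field $\mathbb F_p$ via distinct-degree/equal-degree considerations restricted to the known odd/2-power structure — has a leading coefficient that is a zero divisor in the relevant algebra, again splitting $f$.

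Actually, the cleanest route, and the one I would write up, is the direct one: if not all $\xi_i^w$ are equal, pick any $c \in \mathbb F_p$ that equals $\xi_i^w$ for at least one but not all $i$ (we can find such a $c$ by computing $h$ and reading off a root, or by a small search since there are at most $n$ distinct values), form $b := a - c = X^w - c \in \mathcal R$, and note that $b$ is a \emph{zero divisor}: it vanishes on the coordinates where $\xi_i^w = c$ and is a unit elsewhere is not quite automatic, so instead take $b' := b^{p-1}$ or $\gcd$-type idempotent construction — the point is that $b$ is neither zero nor a unit, so $\gcd(\tilde b(x), f(x))$ for a lift $\tilde b$ of $b$ gives a proper factor by Lemma~\ref{gcdbasis} applied with $\mathcal S = \mathcal R$. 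The whole computation is polynomial in $n$ and $\log p$: computing $X^w$ is $O(\log p)$ multiplications in $\mathcal R$, and the $\gcd$ is polynomial by Lemma~\ref{gcdbasis}.

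The main obstacle I anticipate is purely bookkeeping rather than conceptual: cleanly turning ``$a = X^w$ is not a scalar multiple of $1$'' into ``$a$ (or a simple polynomial expression in $a$) is a genuine zero divisor in $\mathcal R$,'' and doing so without accidentally needing the unknown idempotents $\mu_i$. The honest way is to produce an explicit annihilating polynomial $h$ of degree $\le n$ for $a$ over $\mathbb F_p$ — available from Lemma~\ref{polyroot} with $q(x)=x^w$, or by linear algebra on the powers $1,a,a^2,\dots$ — and then observe that $h$ has at least two distinct roots iff the $\xi_i^w$ are not all equal, at which point a coprime factorization of $h$ (over $\mathbb F_p$, by the trivial fact that $h$ with distinct roots splits into $\gcd(h, T^p - T)$-type pieces or just by its known factorization structure) hands us the splitting. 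The minor edge cases ($0$ among the roots; $h$ not squarefree a priori) are dispatched by first reducing to $\gcd(f,x)$ and passing to the radical of $h$, exactly the kind of routine step I would not expand in detail.
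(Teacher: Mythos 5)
Your proposal has the right ingredients, but the ``cleanest route'' you commit to at the end contains a genuine gap, and it is also a more roundabout path than the paper takes. The paper's proof simply takes $\gcd$'s of $f$ itself against the explicitly constructible polynomials $\prod_{i=0}^{2^k-1}\bigl(x^{(p-1)/2^{k+1}}-\eta^i\bigr)$ for $k=0,\dots,r-1$ (using the same pruned branching as Algorithm~\ref{alg:squarebal} so the products never need to be expanded in full). A root $\xi_j$ is a root of the $k$-th such polynomial exactly according to the low-order bits of its $2$-Sylow exponent, so some $\gcd$ is a proper factor of $f$ unless all roots share the same $2$-Sylow component. No auxiliary polynomial and no root-finding is required.

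Your route instead computes $a=X^w\in\mathcal R$ and the characteristic polynomial $h(T)=\prod_i(T-\xi_i^w)$, which is fine, but it then needs a concrete $c\in\mathbb F_p$ equal to $\xi_i^w$ for some but not all $i$. You propose ``reading off a root of $h$'' or ``a small search''; that is the gap. Deterministically finding one root of $h$ over $\mathbb F_p$ is itself a root-finding/factoring problem (randomized root-finding is exactly what is forbidden here), and a brute search would have to range over the $2$-Sylow subgroup $\langle\eta\rangle$, of size $2^r$, which can be exponentially large in $\log p$. To make your argument self-contained you would have to locate $c$ by a binary search on the $2$-Sylow bits of the roots of $h$ via successive $\gcd\bigl(h,\prod_\ell(T^{(p-1)/2^{k+1}}-\eta^\ell)\bigr)$ computations --- which is precisely the paper's $\gcd$ chain, only applied to $h$ instead of $f$. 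Your middle paragraph hints at exactly this, but the final writeup does not carry it through; the detour through $a$, $h$, and $c$ buys nothing, and the clean proof is to run the $\gcd$ chain directly on $f$.
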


\begin{proof}
	Consider a sequence of polynomials $s_0,s_1,\cdots,s_k,\cdots,s_{r-1} \in \mathbb{F}_p[x]$ given by:

	\begin{align*}
		s_k &= \gcd\left(\prod_{i=0}^{2^k-1} \left(x^{\frac{p-1}{2^{k+1}}}-\eta^i
		\right), f \right) 
	\end{align*}

	The only way we do not get a factor of $f$ in this fashion is if $\forall 1 \le k \le r-1$, $s_k$ is either
	$1$ or $f$. Clearly this implies that the \emph{$2$-Sylow} expansion of every root is the same.
\end{proof}

The notion of lemma \ref{g} leads us to the following algorithm which fails to factor a polynomial $f$ if it satisfies the symmetry condition of definition \ref{strongerbal}.

\begin{algorithm}[H]
	\caption{The Stronger Square Balance Algorithm}
	\label{alg:squarebal}
\begin{algorithmic}
         \State  k $\leftarrow$ $0$ 
	 \State  $g_0 \leftarrow \gcd(y^{\frac{p-1}{2}}-1,g)$
	 \State  $g_1 \leftarrow \gcd(y^{\frac{p-1}{2}}+1,g)$
 \If {$g_0\neq 1$} \State $(0)\in S_0$ \EndIf
 \If {$g_1\neq 1$} \State $(1)\in S_0$ \EndIf
 \For {$1 \le k \le r-1$}
	 \For {$(u_0,\cdots,u_{k-1})\in S^{k-1}$}
		  \State $g_0 \leftarrow \gcd(y^{\frac{p-1}{2^{k+1}}}-\eta^{\sum_{j=0}^{k-1}u_{j}2^{r-(k-j+1)}},g_{k-1})$
		  \State $g_1 \leftarrow \gcd(y^{\frac{p-1}{2^{k+1}}}-\eta^{\sum_{j=0}^{k-1}u_{j}2^{r-(k-j+1)}+2^{r-1}},g_{k-1})$
		  \If {$g_0 \neq 1$} 
		      \If {$k=r-1$} 
			\State $g_0 \in \mathcal{E}$ 
		      \EndIf
		      \State $(u_0,\cdots,u_{k-1},0) \in S_k$ 
		  \EndIf
		  \If {$g_1 \neq 1$} 
		      \If {$k=r-1$} 
		          \State $g_1 \in \mathcal{E}$
		      \EndIf
		      \State $(u_0,\cdots,u_{k-1},1) \in S_k$ 
		  \EndIf
	 \EndFor
\EndFor
\end{algorithmic}
\end{algorithm}

\vskip 1cm
This algorithm terminates in time polynomial in $n$ and $\log{p}$ since $|\mathcal{E}|=|S_{r-1}| \le n-1$ and so atmost $n-1$ of the branches may be explored. The length of each branch is $r \le \log{p}$ and
each of the gcd's are calculable in time polynomial in $n$ and $\log{p}$. Consider the set $\mathcal{E}$. Since $\xi_i-\xi_j$ and $\xi_j-\xi_i$ differ in atleast one bit of their \emph{$2$-Sylow} expansion we have that
$2 \le |\mathcal{E}| \le n-1$. Consider a polynomial $g_l \in \mathcal{E}$.

\begin{align*}
	g_l = \sum_{i=1}^n \prod_{j} (y-(\xi_i-\xi_j)) \mu_i
\end{align*}

These polynomials give rise to a sequence of $n \times n$ matrices $\{E_l\}_{l=1}^{|\mathcal{E}|}$ defined below.

\begin{definition}\label{matrixdef}
	$E_l(i,j)=1$ if $\xi_i-\xi_j$ is a $\mathbb{F}_p$ root of the polynomial $g_l\mu_i \in \mathcal{R}[y]$; otherwise $E_l(i,j)=0$. In other words $E_l(i,j)=1$ iff  $g_l(\xi_i-\xi_j)\mu_i=0$ and $E_l(i,j)=0$ otherwise.
\end{definition}

Since the polynomial and matrix representation are equivalent, we will talk of the set $\mathcal{E}$ interchangeably as consisting of the polynomials $g_l$ or the matrices $E_l$.We then have the following two lemmas regarding some
properties of the set $\mathcal{E}$.

\begin{lemma} \label{welldef}
	$\forall 1\le i,j,s,t \le n$ if $E_l(i,j)=E_l(s,t)=1$ then $(\xi_i-\xi_j)^w=(\xi_s-\xi_t)^w$. Conversely, if $E_l(i,j)=E_m(s,t)=1$ and
	$(\xi_i-\xi_j)^w=(\xi_s-\xi_t)^w$ then $E_l=E_m$.
\end{lemma}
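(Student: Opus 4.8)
The plan is to make the polynomials $g_l\in\mathcal E$, and hence the matrices $E_l$, completely explicit, after which both directions of the lemma are immediate. The central claim I would establish first is that every branch of Algorithm~\ref{alg:squarebal} surviving to level $r-1$ carries a constant $c_l\in\{0,1,\dots,2^r-1\}$ — concretely $c_l=\sum_{k=0}^{r-1}u_k2^k$, where $(u_0,\dots,u_{r-1})$ is the tuple labelling that branch — with the property that $g_l\mid y^{w}-\eta^{c_l}$ (recall $(p-1)/2^r=w$), this being immediate since $g_l=\gcd(y^{w}-\eta^{c_l},g_{r-2})$ at the last level. I would prove the form of $c_l$ by induction on the level, using $\eta^{2^{r-1}}=-1$ to rewrite the first split $g_0=\gcd(y^{(p-1)/2}-1,g)$, $g_1=\gcd(y^{(p-1)/2}+1,g)$ as $\gcd$'s with $y^{(p-1)/2}-\eta^{0}$ and $y^{(p-1)/2}-\eta^{2^{r-1}}$, and then simply tracking the powers of $\eta$ accumulated in the loop; one also checks that the divisibility condition at an intermediate level $k$, namely $g_k\mid y^{(p-1)/2^{k+1}}-\eta^{(\cdots)}$, is subsumed by the final one, since $y^{w}-\eta^{c_l}$ specialises to it upon raising to the power $2^{r-k-1}$.

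Once this is known, the description of $g_l$ follows from Definition~\ref{gcd}: since $g_l\mid g=\sum_{i=1}^n\prod_{j\neq i}(y-(\xi_i-\xi_j))\mu_i$ and each $\mu_i$-component $\prod_{j\neq i}(y-(\xi_i-\xi_j))$ is squarefree (the $\xi_i-\xi_j$, $j\neq i$, being distinct), we get
\[
g_l=\sum_{i=1}^n\Bigl(\ \prod_{\substack{j\neq i\\ (\xi_i-\xi_j)^w=\eta^{c_l}}}\bigl(y-(\xi_i-\xi_j)\bigr)\Bigr)\mu_i .
\]
Here I use the elementary fact that for a nonzero $a=\eta^{u}\theta\in\mathbb F_p$ with $\theta$ of odd order one has $a^{w}=\eta^{uw}$ (because $\theta^{w}=1$), and that $u\mapsto\eta^{uw}$ is a bijection of $\mathbb Z/2^r\mathbb Z$ onto the $2$-Sylow subgroup; thus the $w$-th roots of $\eta^{c_l}$ occurring among the $\xi_i-\xi_j$ are exactly those with a prescribed $2$-Sylow exponent. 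Unwinding Definition~\ref{matrixdef} (using that $\mu_i$ is a nonzero idempotent, so $c\mu_i=0\iff c=0$ for $c\in\mathbb F_p$) then says precisely: $E_l(i,j)=1$ if and only if $i\neq j$ and $(\xi_i-\xi_j)^w=\eta^{c_l}$.

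From this the lemma drops out. If $E_l(i,j)=E_l(s,t)=1$ then $(\xi_i-\xi_j)^w=\eta^{c_l}=(\xi_s-\xi_t)^w$, which is the first assertion. For the converse, suppose $E_l(i,j)=E_m(s,t)=1$ and $(\xi_i-\xi_j)^w=(\xi_s-\xi_t)^w$; then $\eta^{c_l}=\eta^{c_m}$, so $c_l\equiv c_m\pmod{2^r}$ since $\eta$ has order $2^r$, whence $l=m$ because distinct surviving branches carry distinct tuples $(u_0,\dots,u_{r-1})$ and the map $l\mapsto c_l\bmod 2^r$ is injective; therefore $E_l=E_m$. The only step demanding real care is the exponent bookkeeping in the first paragraph — matching the powers of $\eta$ appearing in the algorithm with $c_l=\sum_k u_k2^k$ and verifying the intermediate divisibilities are redundant; the rest is routine, and I foresee no difficulty in the converse.
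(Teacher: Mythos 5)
Your proposal is correct and follows essentially the same idea as the paper's (very terse) proof: the algorithm groups the differences $\xi_i-\xi_j$ according to their $2$-Sylow component, so each $E_l$ is exactly the set of pairs $(i,j)$ with a prescribed value of $(\xi_i-\xi_j)^w$. You simply make explicit the bookkeeping the paper leaves implicit — the constant $c_l=\sum_k u_k2^k$, the redundancy of the intermediate gcd conditions, and the resulting characterization $E_l(i,j)=1\iff i\neq j$ and $(\xi_i-\xi_j)^w=\eta^{c_l}$ — which is the right level of rigor and is sound.
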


\begin{proof}
	The forward direction follows since if $g_l(\xi_i-\xi_j)=g_l(\xi_s-\xi_t)=0$ then by algorithm \ref{alg:squarebal} we conclude that $\xi_i-\xi_j$ and $\xi_s-\xi_t$ must have the same \emph{$2$-Sylow} component. To see the
	converse note that if $\xi_i-\xi_j$ and $\xi_s-\xi_t$ do have the same 
	\emph{$2$-Sylow} component, then above algorithm fails to separate them and hence they would belong to the same polynomial in $\mathcal{E}$.
\end{proof}

\begin{lemma}\label{transpose}
	If $E_l \in \mathcal{E}$ then $E_l^\top \in \mathcal{E}$.
\end{lemma}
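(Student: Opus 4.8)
The plan is to reduce the statement to a single algebraic observation: the set $\mathcal{E}$ is indexed by the distinct values the $2$-Sylow components $(\xi_i-\xi_j)^w$ take as $(i,j)$ ranges over off-diagonal pairs, and since $w$ is odd, passing from a difference $\xi_i-\xi_j$ to its negative $\xi_j-\xi_i$ merely negates the corresponding $2$-Sylow component. So transposing $E_l$ should produce exactly the matrix of $\mathcal{E}$ attached to the negated component, which is again in $\mathcal{E}$.

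First I would set up the correspondence between $\mathcal{E}$ and $2$-Sylow components. Because $f$ is squarefree and $g=f(-y+X)/(-y)$ has no zero among its roots (these are exactly the nonzero differences $\xi_i-\xi_j$), every pair recorded by Algorithm \ref{alg:squarebal} has $\xi_i-\xi_j\in\mathbb{F}_p^\times$, hence a well-defined $2$-Sylow component $(\xi_i-\xi_j)^w\in\langle\eta\rangle$, and $E_l(i,j)=1$ forces $i\neq j$. The recursion of Algorithm \ref{alg:squarebal} splits the roots of $g$ at each level by one more bit of their $2$-Sylow expansion, so at the final level $k=r-1$ every off-diagonal pair $(i,j)$ is recorded by exactly one member of $\mathcal{E}$. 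Combining this with Lemma \ref{welldef} — whose forward half says all pairs recorded by a fixed $E_l$ share one $2$-Sylow component, and whose converse half says a matrix of $\mathcal{E}$ is determined by that component — I obtain a bijection $l\mapsto\zeta_l$ between $\mathcal{E}$ and the set $\{(\xi_i-\xi_j)^w : 1\le i,j\le n,\ i\neq j\}\subseteq\langle\eta\rangle$, characterized by $E_l(i,j)=1\iff(\xi_i-\xi_j)^w=\zeta_l$.

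Next I would note that since $w$ is odd and $-1=\eta^{2^{r-1}}\in\langle\eta\rangle$, we have $(\xi_j-\xi_i)^w=-(\xi_i-\xi_j)^w$, so the index set above is closed under negation: if $\zeta_l=(\xi_a-\xi_b)^w$ for some pair with $E_l(a,b)=1$, then $-\zeta_l=(\xi_b-\xi_a)^w$ also occurs, and hence there is $E_m\in\mathcal{E}$ with $\zeta_m=-\zeta_l$. It then remains only to verify entrywise that $E_l^\top=E_m$: for all $s,t$,
\[
E_l^\top(s,t)=1\iff E_l(t,s)=1\iff(\xi_t-\xi_s)^w=\zeta_l\iff(\xi_s-\xi_t)^w=-\zeta_l=\zeta_m\iff E_m(s,t)=1 .
\]
Thus $E_l^\top=E_m\in\mathcal{E}$. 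I do not expect a real obstacle here; the only step that deserves a careful sentence is the claim that $\mathcal{E}$ leaves no off-diagonal pair unclassified, which follows because at every level each element of $\mathbb{F}_p^\times$ satisfies exactly one of the two splitting conditions $y^{(p-1)/2^{k+1}}-\eta^{c}=0$, $y^{(p-1)/2^{k+1}}-\eta^{c+2^{r-1}}=0$ used there (according to the relevant bit of its $2$-Sylow expansion), and $0$ never occurs among the roots of $g$.
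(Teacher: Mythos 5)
Your proof is correct and rests on the same two facts the paper uses: Lemma~\ref{welldef} identifies each $E_l$ with a single $2$-Sylow component, and negating a difference $\xi_i-\xi_j$ (equivalently, transposing the pair) negates that component since $w$ is odd. Your reorganization via an explicit bijection $l\mapsto\zeta_l$ is a cleaner packaging, but it is the same argument as the paper's entry-by-entry verification.
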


\begin{proof}
	Let  $E_l(i,j)=1$ for some $1 \le i,j \le n$. Then there exists some matrix $E_m$ such that $E_m(j,i)=1$. If $E_l$ or $E_m$ do not contain any other non-zero entries besides this then $E_m=E_l^\top$ and we are done.
	Otherwise let $E_l(s,t)=1$ for some $1 \le s,t \le n,s\neq i,t\neq j$. Then we know that $(\xi_i-\xi_j)^w=(\xi_s-\xi_t)^w$ or equivalently $(\xi_j-\xi_i)^w=(\xi_t-\xi_s)^w$ so that from lemma \ref{welldef} we conclude that 
	$E_m(s,t)=1$, so that $E_m=E_l^\top$ and we are done.
\end{proof}

It is also possible to visualize this in a graph-theoretic setting where the vertices
are the set of roots of $f \in \mathbb{F}_p[x]$ and the set $\mathcal{E}$ form a disjoint multiset of edges.

\begin{definition}\label{multigraph}
	Let $V$ a set of size $n$ labelled by the integers from $1$ to $n$. Consider $\mathcal{E}$ to consist of the matrices $E_l$ defined in \ref{matrixdef}. Then $G_f=(V,\mathcal{E})$ is a multigraph on $n$ vertices.
\end{definition}

\begin{lemma}
      If $f$ fails to be factored by algorithm \ref{alg:squarebal} then for every $E_l \in \mathcal{E}$, the restriction of $G_f$ to $(V,E_l)$ must be regular.
\end{lemma}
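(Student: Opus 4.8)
The plan is to show that each matrix $E_l$ corresponds to a relation on the roots whose ``out-degree'' is constant, and that this constancy is exactly what survives when algorithm \ref{alg:squarebal} fails. First I would recall, from the final \texttt{If} branch of algorithm \ref{alg:squarebal}, that a polynomial $g_l$ enters $\mathcal{E}$ only at level $k=r-1$, i.e. only when the gcd $g_{r-1}$ (restricted along a branch) is nontrivial but does not split $f$; so for every such $g_l$ the leading coefficient of $g_l$ in $\mathcal{R}[y]$ must be a \emph{unit} in $\mathcal{R}$, not a zero divisor (otherwise, exactly as in the proof of lemma \ref{g}, we would obtain a nontrivial factorization of $f$ over $\mathbb{F}_p$). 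Writing $g_l = \sum_{i=1}^n \bigl(\prod_{j}(y-(\xi_i-\xi_j))\bigr)\mu_i$, the leading coefficient being a unit means the number of factors $\prod_j$ appearing in the $i$-th component is the same for all $i$; equivalently $\deg(g_l\mu_i)$ is independent of $i$.

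Next I would translate this into the matrix language of definition \ref{matrixdef}. By construction $E_l(i,j)=1$ iff $\xi_i-\xi_j$ is an $\mathbb{F}_p$-root of $g_l\mu_i$, and since $g$ (hence each $g_l$) is squarefree, the number of such roots in the $i$-th component equals $\deg(g_l\mu_i)$. Therefore the $i$-th row sum of $E_l$ equals $\deg(g_l\mu_i)$, which by the previous paragraph is a constant $d_l$ independent of $i$. So in the graph $(V,E_l)$ every vertex has the same out-degree $d_l$. Finally, lemma \ref{transpose} gives $E_l^\top \in \mathcal{E}$, and the same argument applied to $E_l^\top$ shows its row sums are constant too; hence the column sums of $E_l$ are all equal. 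A graph (here a directed graph, or multigraph) in which every vertex has the same in-degree and the same out-degree is regular, which is the assertion.

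The main obstacle is making rigorous the step ``leading coefficient is a unit $\Rightarrow$ all component-degrees equal,'' which requires being careful that $g_l$ is obtained as a gcd over $\mathcal{R}[y]$ in the sense of definition \ref{gcd}: in each idempotent component $\mu_i$ the polynomial $g_l\mu_i$ is a genuine polynomial over the field $\mathbb{F}_p[x]/(x-\xi_i)\cong\mathbb{F}_p$, and the degrees in distinct components can only fail to agree if the formal leading coefficient of $g_l$ (as an element of $\mathcal{R}$) is supported on a proper, nonzero sum of the $\mu_i$'s — i.e. is a zero divisor. Ruling that out is precisely the ``$f$ fails to be factored'' hypothesis, combined with lemma \ref{gcdbasis} guaranteeing that such a zero divisor, if present, would be found in polynomial time and would yield a factor. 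Everything else (squarefreeness giving root-count $=$ degree, and lemma \ref{transpose} transferring the argument to columns) is routine.
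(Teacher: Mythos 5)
Your proof is correct and takes the same underlying approach as the paper, which simply says ``This follows from algorithm \ref{alg:squarebal} and lemma \ref{strongerbal}'' (an apparent mislabel: \ref{strongerbal} is a definition; the zero-divisor reasoning is in lemma \ref{g}, which is exactly the argument you invoke). You fill in the details the paper leaves implicit: that ``$f$ fails to be factored'' forces the leading coefficient of each $g_l$ to be a unit in $\mathcal{R}$, that a unit leading coefficient forces $\deg(g_l\mu_i)$ to be independent of $i$, and that by squarefreeness this degree equals the $i$-th row sum of $E_l$. Your extra step of invoking lemma \ref{transpose} to get constant column sums as well is welcome, since $E_l$ is not symmetric and the paper's later usage of ``regular'' (lemma \ref{regular}) only addresses row sums.
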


\begin{proof}
	This follows from algorithm \ref{alg:squarebal} and lemma \ref{strongerbal}.
\end{proof}

In this graph theoretic setting one can view algorithm \ref{alg:squarebal} in the following fashion. Given this set of graphs we wish to distinguish between and separate the vertices (the roots) in some fashion in order to obtain
a non-trivial
factor of $f$. One possible way would be to look at the orbits of every vertex and compare them by size. Algorithm \ref{alg:squarebal} simply computes out the $1$-dimensional \emph{Weisfeiler-Leman} approximation for the orbit 
of every vertex for each of the graphs $(V,E_l), 1 \le l \le |\mathcal{E}|$; the set of $E_l$ colored neighbors of every vertex being the $1$-dimensional approximation of its orbits. If the graph $E_l$ is not \emph{regular} then
one can separate the roots or equivalently obtain a non-trivial factor of $f$. A natural next step in generalizing this 
algorithm would be to come up with a better approximation for the set of orbits for each vertex and thereby tighten the symmetry condition under which factoring fails. 

\section{Weisfeiler-Leman and Factoring}\label{sec:weisfeiler}

In this section we build on the idea of the preceding section to come up with a better approximation for the size of orbits of each vertex by showing that it is possible to implicitly compute the $2$-dimensional 
\emph{Weisfeiler-Leman} approximation of the orbits. We briefly touch upon the general $2$-dimensional \emph{Weisfeiler-Leman} algorithm before discussing it in the context of polynomial factoring.

\subsection{2-dimensional Weisfeiler Leman}
A more thorough treatment of the general algorithm can be found here \cite{barbados}. Consider a multigraph $G=(V,\mathcal{E})$ ( where $\mathcal{E}$ is a set of colors) which satisfies the following properties:

  \begin{enumerate}
	  \item For every $E_i,E_j \in \mathcal{E},i\neq j, E_i \cap E_j = 
	  \emptyset$, i.e. the colors are \emph{disjoint}.

	  \item $\displaystyle\sum_{i:E_i\in \mathcal{E}} E_i = J$ where $J$ is the all $1$'s matrix.

  \end{enumerate}

  We wish to further refine this set into a set of colors $\mathcal{S}$ which satisfy the following \emph{well-behaved} property:

 \begin{enumerate}
	 \item The entries of each $S_i \in \mathcal{S}$ come from the set $\{0,1\}$.

	 \item $\forall S_i,S_j \in \mathcal{S}, i \neq j, S_i \cap S_j = \emptyset$
	  and $\displaystyle\sum_{i:S_i\in \mathcal{S}}^k S_i =J$, $J$ being the all $1$'s matrix.

	 \item Let $P$ be any automorphism of the multigraph $(V,\mathcal{E})$, i.e. $\forall E_i \in \mathcal{E}, PE_iP^{-1}=E_i$. Then the colors should be unchanged by $P$, i.e. $PS_iP^{-1}=S_i,\forall S_i \in \mathcal{S}$.
		 
 \end{enumerate}

 It is clear from the definition that the initial multigraph $G$ satisfies this \emph{well-behaved} property. This gives us the set of \emph{well-behaved} colors for the first iteration. Then the algorithm proceeds in the following
 fashion:

 \begin{algorithm}[H]
	 \caption{The 2-Dimenstional Weisfeiler Leman Algorithm}\label{wl}
 \begin{algorithmic}
     \State $\mathcal{C} \leftarrow \mathcal{E}$
     \While{True} 
     \State $k \leftarrow |\mathcal{C}|$
      \If {$\forall C_i,C_j \in \mathcal{C}$, $C_iC_j= \sum_{l=1}^k lC_{\alpha_l}$ for some set $\{\alpha_l\}$ of color indices} 
     \State Output $\mathcal{C}$ as the final set of colors
	 \Else 
	         \State $C_iC_j=\sum_{l=1}^k lD_l$ where each $D_l$ is a $\{0,1\}$ matrix.
		 \State $\mathcal{C} \leftarrow \mathcal{C} \cup_{l=1}^k \{D_l\}$. Let $\mathcal{C} = \{C'_1,\cdots,C'_{k'}\}$.
		 \While {$\exists 1\le i,j \le k',i \neq j, C'_i \cap C'_j \neq 
		 \emptyset$}
		 \State $\mathcal{C} \leftarrow (\mathcal{C}\setminus \{C'_i,C'_j\}) \cup \{D'_1=C'_i\setminus C'_j,D'_2=C'_j\setminus C'_i,D'_3=C'_i \cap C'_j\}$
		 \EndWhile
	\EndIf
     \EndWhile
 \end{algorithmic}
\end{algorithm}

This algorithm terminates in time polynomial in $n$, since the set of colors either increases by atleast one or the algorithm terminates and the size of the set of colors is bounded by $n$.
Let $\mathcal{C}$ be the final set of colors so obtained. It is clear that this set satisfies condition $1$ and $2$ of the \emph{well-behaved} properties. The next lemma shows that it is \emph{well-behaved}, 
 i.e. it satisfies all the properties.

 \begin{lemma}\label{wellbehaved}
	 The set of colors $\mathcal{C}$ so obtained by this algorithm is well behaved.
\end{lemma}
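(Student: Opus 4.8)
The plan is to verify each of the three \emph{well-behaved} conditions for the final color set $\mathcal{C}$ produced by Algorithm~\ref{wl}. Conditions 1 and 2 are essentially maintained by construction: the inner \textbf{while} loop that replaces overlapping colors $C'_i, C'_j$ by the three pieces $C'_i \setminus C'_j$, $C'_j \setminus C'_i$, $C'_i \cap C'_j$ guarantees that at termination the colors are pairwise disjoint $\{0,1\}$-matrices, and since we only ever split existing colors into pieces that partition them, the sum $\sum_i C_i$ is preserved as an invariant and equals $J$ because it did so for the initial set $\mathcal{E}$ (property 2 of the input multigraph). So the first thing I would do is state these two facts as loop invariants and observe they survive both the splitting step $C_iC_j = \sum_l l D_l$ and the disjointification step.

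The real content is condition 3: every automorphism $P$ of $(V,\mathcal{E})$ must satisfy $PS_iP^{-1}=S_i$ for every final color $S_i$. I would prove this by induction on the iterations of the outer \textbf{while} loop. The base case is the observation already made in the text that the initial multigraph satisfies the well-behaved property, so $PE_iP^{-1}=E_i$ for all $i$ by definition of $P$ being an automorphism. For the inductive step, suppose every color $C \in \mathcal{C}$ at the start of an iteration satisfies $PCP^{-1}=C$. When we form a product $C_iC_j$ and decompose it as $\sum_{l=1}^k l D_l$, note that $P(C_iC_j)P^{-1} = (PC_iP^{-1})(PC_jP^{-1}) = C_iC_j$; since conjugation by the permutation matrix $P$ permutes entries without changing their values, and the decomposition into the $D_l$ is canonically determined by \emph{which} entries of $C_iC_j$ carry which value $l$, we get $PD_lP^{-1}=D_l$ for each $l$. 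Finally, the disjointification step builds new colors as Boolean combinations (set differences and intersections) of colors already known to be $P$-invariant, and these operations commute with conjugation by $P$, so $P$-invariance is preserved through to the end of the iteration. Taking the inductive step through all iterations until termination yields $PS_iP^{-1}=S_i$ for all final colors.

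The step I expect to be the main obstacle — or at least the one requiring the most care — is making precise the claim that the decomposition $C_iC_j = \sum_{l=1}^k l D_l$ is \emph{canonical}, i.e. invariant under conjugation by an automorphism. The subtlety is that $C_iC_j$ is an integer matrix whose $(a,b)$ entry counts the number of ``two-step'' paths of the appropriate colors from $a$ to $b$; the matrices $D_l$ are the $\{0,1\}$ indicator matrices of the level sets $\{(a,b) : (C_iC_j)_{ab} = l\}$. Since $P$ is a permutation matrix, $(P(C_iC_j)P^{-1})_{ab} = (C_iC_j)_{P^{-1}(a)\,P^{-1}(b)}$, so conjugation permutes the positions but preserves the entry values, hence maps each level set to itself as a set of positions; thus $PD_lP^{-1}=D_l$. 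I would spell this out carefully, perhaps as a small separate claim, because it is the one place where the argument genuinely uses that automorphisms act by \emph{value-preserving} position permutations rather than just abstract algebra-preserving maps. Everything else is bookkeeping about invariants of the two nested loops.
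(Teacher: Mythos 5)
Your proof is correct and follows essentially the same route as the paper's: induction on iterations, using $P(S_iS_j)P^{-1}=S_iS_j$ together with the canonical level-set decomposition $S_iS_j=\sum_l lD_l$ to conclude $PD_lP^{-1}=D_l$, and then noting the set operations in the disjointification step commute with conjugation. The only difference is cosmetic — you spell out the ``canonicality'' of the decomposition (conjugation by a permutation matrix permutes positions while preserving entry values) more carefully than the paper's terse ``by identifying the matrix whose entries are $l$,'' and you also make conditions 1 and 2 explicit loop invariants, which the paper handles with a one-line remark before the lemma.
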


\begin{proof}
	Proof is by induction on the iteration steps. Since the set $\mathcal{E}$ is \emph{well behaved} the base assertion certainly holds. Suppose the claim holds for the set of colors $\mathcal{S}$ at the $i^{th}$ step. 
	Then according to the inductive hypothesis $\forall P\in Aut(G),\forall S_i \in \mathcal{S}, PS_iP^{-1}=S_i$, so that $\forall S_i,S_j \in \mathcal{S}, PS_iS_jP^{-1}=PS_iP^{-1}PS_jP^{-1}=S_iS_j$. However we have
	$S_iS_j = \sum_{l=1}^k lD_l$ so that $\sum_{l=1}^k lPD_lP^{-1}=\sum_{l=1}^klD_l$. Since each $D_l$ is a $0/1$ matrix, by identifying the matrix whose entries are $l$ on both sides we conclude that $PD_lP^{-1}=D_l$.
	Upon adding these new matrices let the set be $\mathcal{S}'$. If $\exists S'_i,S'_j \in \mathcal{S}',S'_i \cap S'_j\neq \emptyset$ then we replace $S'_i$
	 and $S'_j$ by $D'_1=S'_i\setminus S'_j$, $D'_2=S'_j\setminus S'_i$ and 
	$D'_3=S'_i\cap S'_j$. Consider an edge $(u,v) \in D'_3$ and any automorphism $P \in Aut(G)$. Then since $(u,v) \in S'_i,S'_j$ and $P$ preserves them both we have $P$ preserves $D'_3$. It therefore follows that $P$
	preserves $D'_1=S'_i\setminus D'_3$ and $D'_2=S'_j\setminus D'_3$.
\end{proof}

There is another property besides the \emph{well-behavedness} of this set that we need to show. We call a set of colors $\mathcal{S}$ closed under taking transposes, iff $S_i \in \mathcal{S} \Rightarrow S_i^\top \in \mathcal{S}$.
Note that the set $\mathcal{E}$ of the graph $G_f$ is closed under taking transposes due to lemma \ref{transpose}. We then have the following lemma.

\begin{lemma}\label{transposepol}
	If the original set of colors $\mathcal{E}$ is closed under taking transposes, then the set $\mathcal{S}$ obtained after every iteration of algorithm \ref{wl} remains closed under transposes.
\end{lemma}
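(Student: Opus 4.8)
The plan is to argue by induction on the iterations of Algorithm~\ref{wl}, paralleling the proof of Lemma~\ref{wellbehaved}, but tracking the transpose operation through each of the two ways the color set changes: the ``product refinement'' step where a product $C_iC_j$ is split according to its integer entries, and the ``intersection refinement'' step where overlapping colors are replaced by their differences and intersection. The base case is immediate from Lemma~\ref{transpose}: the initial set $\mathcal{E}$ of $G_f$ is closed under transposes by hypothesis.

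For the inductive step, suppose the current color set $\mathcal{S}$ is closed under transposes. First I would handle the product refinement. The key identity is $(C_iC_j)^\top = C_j^\top C_i^\top$. By the inductive hypothesis $C_i^\top, C_j^\top \in \mathcal{S}$, so the product $C_j^\top C_i^\top$ is also one of the products the algorithm examines, and it decomposes as $C_j^\top C_i^\top = \sum_{l=1}^k l D_l^\prime$ for $\{0,1\}$ matrices $D_l^\prime$. Taking transposes of the decomposition $C_iC_j = \sum_{l=1}^k l D_l$ gives $\sum_{l=1}^k l D_l^\top = \sum_{l=1}^k l D_l^\prime$; since each $D_l$ and each $D_l^\prime$ is a $0/1$ matrix, matching the entries equal to $l$ on both sides forces $D_l^\top = D_l^\prime$. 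Hence every new matrix $D_l$ added to the color set has its transpose $D_l^\top$ also added (as one of the $D_l^\prime$), so the enlarged set $\mathcal{S}^\prime$ remains closed under transposes.

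Next I would handle the intersection refinement. Here we observe that set operations on edge sets commute with transposition: $(S_i^\prime \setminus S_j^\prime)^\top = (S_i^\prime)^\top \setminus (S_j^\prime)^\top$ and $(S_i^\prime \cap S_j^\prime)^\top = (S_i^\prime)^\top \cap (S_j^\prime)^\top$, viewing these as $0/1$ matrices. If $S_i^\prime \cap S_j^\prime \neq \emptyset$ and we replace the pair by $D_1^\prime = S_i^\prime \setminus S_j^\prime$, $D_2^\prime = S_j^\prime \setminus S_i^\prime$, $D_3^\prime = S_i^\prime \cap S_j^\prime$, then by closure of $\mathcal{S}^\prime$ under transposes the pair $(S_i^\prime)^\top, (S_j^\prime)^\top$ also lies in $\mathcal{S}^\prime$ and is likewise overlapping, so it too gets replaced by $(D_1^\prime)^\top, (D_2^\prime)^\top, (D_3^\prime)^\top$. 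Thus transposes of the new pieces are again present, and closure is preserved. Since only finitely many such replacements occur, the final set $\mathcal{S}$ of each iteration is closed under transposes, completing the induction.

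The main subtlety — really the only place requiring care rather than routine bookkeeping — is the matching argument in the product step: one must be sure that the decomposition of $C_j^\top C_i^\top$ into $0/1$ matrices is the one obtained by transposing the decomposition of $C_iC_j$, which relies on the fact that a matrix with nonnegative integer entries has a unique expression as $\sum_l l D_l$ with the $D_l$ disjoint $0/1$ matrices. A second point worth stating explicitly is that when $S_i^\prime$ and $S_j^\prime$ happen to be transposes of each other (the ``diagonal'' case $i$ corresponding to $j$ under transposition), the argument still goes through: the replacement produces $D_3^\prime = S_i^\prime \cap (S_i^\prime)^\top$, which is symmetric, together with $D_1^\prime$ and $D_2^\prime = (D_1^\prime)^\top$, so the transpose-closure of the triple is self-evident. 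With these observations the lemma follows.
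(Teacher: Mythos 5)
Your proof is correct and follows essentially the same inductive strategy as the paper's: base case from Lemma~\ref{transpose}, then showing that the product-refinement step and the intersection-refinement step each preserve closure under transposes. One small point in your favor: you correctly use the identity $(C_iC_j)^\top = C_j^\top C_i^\top$, whereas the paper writes $S_i^\top S_j^\top = \sum_l l D_l^\top$ (a typo — the order should be reversed); your version is the rigorous one, and your explicit appeal to the uniqueness of the decomposition of a nonnegative-integer matrix into $\sum_l l D_l$ with disjoint $0/1$ matrices $D_l$ makes the matching step watertight.
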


\begin{proof}
	The proof is by induction on the iterations of the algorithm. The base case is true by assumption. Suppose at the $t^{th}$ iteration this assertion holds. Let the set of colors at that stage be 
	$\mathcal{S}$. Consider the additional colors $D_l$ that are introduced by multiplying $S_iS_j$ where $S_i,S_j \in \mathcal{S}$. Suppose 
	$S_iS_j=\sum_{l}lD_l$ then we also have $S_i^\top S_j^\top=\sum_{l}lD_l^\top$ so that if $D_l$ is added to the set, then so is $D_l^\top$. Let this new set be $\mathcal{T}$. Then this set by the previous
	argument is closed under taking transpose. Suppose that $\exists T_i,T_j \in \mathcal{T},T_i \cap T_j \neq \emptyset$, so that we replace $T_i,T_j$ with $T_i \cap T_j$, $T_i \setminus T_j$ and $T_j \setminus T_i$.
	This does not affect the transpose property
	since $T_i^\top \cap T_j^\top= (T_i\cap T_j)^\top$, $T_i^\top \setminus T_j^\top = (T_i \setminus T_j)^\top$ and $T_j^\top \setminus T_i^\top = (T_j \setminus T_i)^\top$. Therefore the set at the $t+1^{th}$ step retains
	its closure property under taking transposes.
\end{proof}

\subsection{Weisfeiler-Leman and Polynomials}
	In this section we illustrate how the steps of the $2$-dimensional \emph{Weisfeiler-Leman} may be carried out with polynomials. Consider the set $\mathcal{E}$, the set of polynomials $\in \mathcal{R}[y]$ obtained
	from algorithm \ref{wl}. Let $g_l \in \mathcal{E}$ be such a polynomial, then we have associated with it the matrix $E_l$ which was defined in definition \ref{matrixdef}.
        In the preceding section we defined the multigraph $G_f=(V,\mathcal{E})$. We then have the following lemma.

	\begin{lemma}
		The set $\{I\} \cup \mathcal{E}$ is well behaved.
	\end{lemma}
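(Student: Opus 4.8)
The plan is to verify the three \emph{well-behaved} properties directly for the set $\{I\}\cup\mathcal{E}$, where $I$ is the $n\times n$ identity matrix. First I would check the two combinatorial conditions: each $E_l$ is a $0/1$ matrix by Definition~\ref{matrixdef}, and $I$ is trivially a $0/1$ matrix, so property~$1$ holds. For property~$2$, the diagonal entries $E_l(i,i)$ are all zero (since $\xi_i-\xi_i=0$ is never a root of $g_l$ — the polynomials $g_l$ arise from the factorization of $g(y,x)=f(-y+X)/(-y)$, whose roots are the \emph{nonzero} differences $\xi_i-\xi_j$), so $I$ is disjoint from every $E_l$. Moreover, every ordered pair $(i,j)$ with $i\neq j$ corresponds to a nonzero difference $\xi_i-\xi_j$ which, by the construction of Algorithm~\ref{alg:squarebal} and Lemma~\ref{welldef}, lies in exactly one class $E_l$; together with $I$ accounting for the diagonal, this gives $I+\sum_{l}E_l=J$ and pairwise disjointness, so property~$2$ holds.

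The main content is property~$3$: invariance under $\operatorname{Aut}(G_f)$. Since $\mathcal{E}$ is already well behaved (this was observed in the excerpt for the initial multigraph $G_f$), every $P\in\operatorname{Aut}(G_f)$ satisfies $PE_lP^{-1}=E_l$ for all $l$. Any such $P$ is a permutation matrix, hence $PIP^{-1}=PP^{-1}=I$ automatically. Therefore adjoining $I$ does not enlarge or disturb the symmetry group's action on the colors, and all of $\{I\}\cup\mathcal{E}$ is fixed by $\operatorname{Aut}(G_f)$. I would also remark that $\operatorname{Aut}(G_f)\subseteq\operatorname{Aut}(G_f\cup\{I\})$ and conversely, so the automorphism group is unchanged; this is the natural normalization needed so that the $2$-dimensional Weisfeiler--Leman refinement of Algorithm~\ref{wl} can be started from $\{I\}\cup\mathcal{E}$.

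I expect the only genuinely delicate point to be the justification that the diagonal is not already covered by the $E_l$ and that the off-diagonal is covered exactly once — i.e. pinning down precisely which differences the polynomials in $\mathcal{E}$ encode. This traces back to the fact that in Algorithm~\ref{alg:squarebal} the set $\mathcal{E}$ collects the leaves $g_l$ at level $k=r-1$, and by Lemma~\ref{welldef} the classes $\{(i,j):E_l(i,j)=1\}$ partition the set of ordered pairs with distinct entries according to the $2$-Sylow component of $\xi_i-\xi_j$. Once that partition statement is in hand, properties $1$ and $2$ are immediate and property $3$ follows from $P I P^{-1}=I$ for permutation matrices $P$. Hence the proof is short: state that $\{I\}\cup\mathcal{E}$ consists of disjoint $0/1$ matrices summing to $J$, and observe that $I$ is fixed by every permutation, so well-behavedness is inherited from that of $\mathcal{E}$.
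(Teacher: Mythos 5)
Your proposal is correct and tracks the paper's own argument closely: property~$1$ is immediate, property~$2$ follows from the fact that $y\nmid g$ (so the diagonal is not covered by any $E_l$) together with Lemma~\ref{welldef} giving disjointness of the $E_l$'s and $\prod_l g_l=g$ giving coverage, and property~$3$ is tautological from the definition of $\operatorname{Aut}(G_f)$ plus $PIP^{-1}=I$ for permutation matrices. No substantive difference from the paper's proof.
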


	\begin{proof}
		It is easily seen that every element of $\mathcal{E}$ when thought of as a matrix has its entries from $\{0,1\}$ so that condition $1$ of \emph{well behavedness} is satisfied. Further, we claim that for 
		any $E_l,E_m \in \mathcal{E},l\neq m, E_l \cap E_m = \emptyset$. Suppose if $E_l$ is identity then this is clearly true since $g_m$ is a factor of $g \in \mathcal{R}[y]$ and we know that $y \nmid g$ so that
		$\forall i,(\xi_i,\xi_i) \notin
		E_m$. If neither of them are the
		identity matrix then from lemma \ref{welldef} we have that $E_l=E_m$. Further $\forall 1 \le i,j \le n,i\neq j,\exists 1 \le l \le |\mathcal{E}|, E_l(i,j)=1$, since $\prod_{l=1}^{|\mathcal{E}|}g_l=g$
		. Hence it follows that $I+\sum_{E_l \in \mathcal{E}}E_l=J$ so that condition $2$ is satisfied as well. The last condition follows from the defintion of $G_f$.
	\end{proof}

	Hence one can think of applying the $2-dimensional$ \emph{Weisfeiler-Leman} algorithm in this case. Note however that we do not explicitly know the matrices $E_l$ but rather have their polynomial forms $g_l \in \mathcal{R}[y]$.
	Therefore we must show that the algorithm \ref{wl} may be duplicated with just this set of polynomials.

	\subsubsection{Multiplication of Matrices}
	Consider a \emph{well behaved} set of colors $\mathcal{S}=\{S_1,S_2,\cdots,S_m\}$ with respect to the complete graph on $n$ vertices. Assume also that this set is closed under taking transposes.	
	Suppose they are given in their polynomial forms, i.e. the polynomial corresponding to $S_l$ is 
	defined by 

	\begin{align*}
		g_l(y,x) &= \sum_{i=1}^n \prod_{j:S_l(i,j)=1} (y-(\xi_i-\xi_j)) \mu_i
	\end{align*}

        where $\mu_i$ depends on $x$. We now wish to show that from their polynomial forms we can recover the polynomial form of the matrix corresponding to $S_lS_t$ for some $S_l$ and $S_t$ in this \emph{well behaved} set. 
	Denote by $\bar{g}_l$ the 
	polynomial corresponding to $S_l^\top$. Consider the algebra $\mathcal{T} \equiv \mathcal{R}[y]/(\bar{g}_l) \equiv \mathcal{R}[Y]$, where $Y \equiv y \pmod{\bar{g}_l}$. Since $\bar{g}_l$ is square-free and completely
	splitting over $\mathcal{R}$, we have that $\mathcal{T}$ is a semisimple $\mathcal{R}$-algebra. Let its primitive idempotents over $\mathcal{R}$ be $\nu_1,\cdots,\nu_{d_l}$, where $deg(g_l)=deg(\bar{g}_l)=d_l$. Suppose
	$\bar{g}_l$ splits as below over $\mathcal{R}[y]$:
	
	\begin{align*}
		\bar{g}_l(y,x) &= \prod_{j'=1}^{d_l} (y-\chi_{j'}) = \sum_{i=1}^n \prod_{j:S_l^\top(i,j)=1} (y-(\xi_i-\xi_j))\mu_i
	\end{align*}

	where each of the $\chi_{j'} \in \mathcal{R}$ and hence can be written as $\chi_{j'}=\sum_{i=1}^n \chi_{ij'} \mu_i$. It then follows that $Y = \sum_{j'=1}^{d_l} \chi_{j'}\nu_{j'}$. It also follows from the expression 
	for $\bar{g}_l$ above that the set $\{\chi_{i1},\chi_{i2},\cdots,\chi_{ij'},\cdots,\chi_{d_li}\}$ is a permutation of $\{(\xi_i-\xi_j)|g_l(\xi_i-\xi_j)\mu_i=0\}$. Let the inverse permutation of indices be denoted by 
	$\pi_i$, i.e. $\pi_i(j)=j'$, where $1 \le j' \le d_l$ and $j$ indexes $S_l^\top(i,j)=1$ for a fixed $i$. Then $Y \in \mathcal{T}$ may equivalently be written as 

	\begin{align*}
		Y &= \sum_{j'=1}^{d_l}\chi_{j'}\nu_{j'} = \sum_{j'=1}^{d_l}\sum_{i=1}^n \chi_{ij'}\mu_i \nu_{j'}\\
		  &= \sum_{i=1}^n \sum_{j'=1}^{d_l} \chi_{ij'} \nu_{j'}\mu_i \\
		  &= \sum_{i=1}^n \sum_{j:S_l^\top(i,j)=1} (\xi_i-\xi_j) \nu_{\pi_i(j)} \mu_i 
	\end{align*}

	where the set $\{\nu_{\pi_i(j)}|S_l^\top(i,j)=1\}$ is the same as  $\{\nu_1,\cdots,\nu_{d_l}\}$. Consider the polynomial ring $\mathcal{T}[z]$ and let $g_t \in \mathcal{T}[z]$. Since the expression for $g_t \in 
	\mathcal{R}[z]$ is

	\begin{align*}
		g_t(z,x) &= \sum_{i=1}^n \prod_{k:S_t(i,k)=1} (z-(\xi_i-\xi_k)) \mu_i
	\end{align*}

	Hence the expression for $g_t \in \mathcal{T}[z]$ is given by 

	\begin{align*}
		g_t(z,y,x) &= \sum_{i=1}^n \sum_{j'=1}^{d_l} \prod_{k:S_t(i,k)=1} (z-(\xi_i-\xi_k)) \nu_{j'}\mu_i
	\end{align*}

	Consider now the polynomial $g_t(z+Y,y,x) \in \mathcal{T}[z]$.

	\begin{align*}
		g_t(z+Y,y,x) &= \sum_{i=1}^n \sum_{j'=1}^{d_l} \prod_{k:S_t(i,k)=1} 
		\left(z+Y-(\xi_i-\xi_k)\right) \nu_{j'}\mu_i \\
		&= \sum_{i=1}^n \sum_{j'=1}^{d_l} \prod_{k:S_t(i,k)=1} \left(z+
		 \sum_{i_1=1}^n \sum_{j_1:S_l^\top(i_1,j_1)=1} (\xi_{i_1}-\xi_{j_1})\nu_{\pi_{i_1}(j_1)} \mu_{i_1} -(\xi_i-\xi_k)\right) \nu_{j'}\mu_i \\
		&= \sum_{i=1}^n \sum_{j:S_l^\top(i,j)=1} \prod_{k:S_t(i,k)=1} (z - (\xi_j-\xi_k)) \nu_{\pi_i(j)}\mu_i \\
		&= \sum_{i=1}^n \sum_{j:S_l(j,i)=1} \prod_{k:S_t(i,k)=1} (z-(\xi_j-\xi_k)) \nu_{\pi_i(j)}\mu_i
	\end{align*}

	Where the last couple of steps follow from the previous step because $\mu_i\mu_{i'}=0, i \neq i'$ and $\mu_i^2=\mu_i$, while $\nu_{j}\nu_{j'}=0,j\neq j'$, otherwise $\nu_{j}^2$ being just $\nu_{j}$. Note that the polynomial
	corresponding to $S_lS_t$ (denoted by $g_{lt} \in \mathcal{R}[y]$) would be of the form:

	\begin{align*}
		g_{lt}(z,x) &= \sum_{j=1}^n \prod_{\substack{
		                                            i,k: \\
							    S_l(j,i)=1 \\
						            S_t(i,k)=1}} (z-(\xi_j-\xi_k))\mu_j
	\end{align*}

	We wish to obtain $g_{lt}$ from the polynomial $g_t(z+Y,y,x)=h(z,y,x)$. This is obtained by eliminating $z,x$ from $h$ as follows. Consider the ring $\mathcal{R}'\equiv \mathbb{F}_p[y]/(f(y))$ and $\mathcal{T}'\equiv
	\mathcal{R}'[x]/(g_l(x,y))$. Consider $h(z,y,x) \in \mathcal{T}'[z]$ from which we construct the ring $\mathcal{U}\equiv \mathcal{T}'[z]/(h(z,y,x)) \equiv \mathcal{T}'[Z]$ where $Z\equiv z \pmod{h}$.
	Let $c_{\mathcal{R}'}(w,y) \in \mathcal{R}'[w]$ be the characteristic polynomial of $Z$ over the ring $\mathcal{R}'$. Then we have the following lemma:

	\begin{lemma}
		$\gcd(c_{\mathcal{R}'}(z,y),g(z,y)) = g_{lt}(z,y) \in \mathcal{R}'[z]$ where $g$ is the polynomial defined in lemma \ref{g}. In other words $g_{lt}(z,x) \in \mathcal{R}[z]$ may be obtained by merely substituting $x$
		for $y$ in this $\gcd$ polynomial.
	\end{lemma}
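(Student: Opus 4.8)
The plan is to trace through the three-layer tower of rings $\mathcal{R}'\subseteq\mathcal{T}'\subseteq\mathcal{U}$ and show that the characteristic polynomial $c_{\mathcal{R}'}(w,y)$ of $Z\equiv z\pmod h$ is, after taking $\gcd$ with $g(z,y)$, exactly $g_{lt}(z,y)$. The key observation is that we already have, from the computation immediately preceding the statement, the explicit splitting
\begin{align*}
	h(z,y,x) = g_t(z+Y,y,x) = \sum_{i=1}^n \sum_{j:S_l(j,i)=1} \prod_{k:S_t(i,k)=1} (z-(\xi_j-\xi_k))\,\nu_{\pi_i(j)}\mu_i .
\end{align*}
So over the semisimple algebra $\mathcal{T}'$ (which has the $\mu_i\nu_{j'}$ as primitive idempotents once we identify $\mathcal{T}'\equiv\mathcal{R}'[x]/(g_l(x,y))$ with the appropriate localization), the element $Z$ acts on each primitive component indexed by a pair $(i,j)$ with $S_l(j,i)=1$ as multiplication by the scalar family $\{\xi_j-\xi_k : S_t(i,k)=1\}$.

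First I would make the idempotent bookkeeping precise: identify the primitive idempotents of $\mathcal{T}'$ over $\mathcal{R}'$ and check that $Z$ restricted to the $(i,j')$-component is diagonalizable with eigenvalues the $\xi_j-\xi_k$ ranging over $k$ with $S_t(i,k)=1$, where $j$ is determined by $j'=\pi_i(j)$. Second, I would compute the characteristic polynomial of $Z$ over $\mathcal{R}'$ by taking the product over all $\mathcal{R}'$-components: this collects, for each fixed root index $j$ of $f(y)$ (i.e. for each primitive idempotent of $\mathcal{R}'$), the contributions from all $i$ with $S_l(j,i)=1$ and all $k$ with $S_t(i,k)=1$, giving
\begin{align*}
	c_{\mathcal{R}'}(w,y) = \sum_{j=1}^n \prod_{\substack{i,k:\\ S_l(j,i)=1\\ S_t(i,k)=1}} (w-(\xi_j-\xi_k))^{m_{j}}\,\mu_j(y)
\end{align*}
for suitable multiplicities $m_j$ coming from the fact that $\mathcal{T}'$ has dimension $d_l$ over $\mathcal{R}'$, so possibly some eigenvalues repeat. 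Third — and this is the point of the $\gcd$ — intersecting with $g(z,y)=\sum_j\prod_{j'\neq j}(z-(\xi_j-\xi_{j'}))\mu_j$, which is squarefree in each component (since $f$ is squarefree), strips off the multiplicities and any spurious linear factors $w-(\xi_j-\xi_k)$ that do not correspond to an actual difference of two roots, leaving precisely
\begin{align*}
	\sum_{j=1}^n \prod_{\substack{i,k:\\ S_l(j,i)=1\\ S_t(i,k)=1}} (w-(\xi_j-\xi_k))\,\mu_j(y) = g_{lt}(w,y).
\end{align*}
Finally, substituting $x$ for $y$ (i.e. mapping $\mathcal{R}'\to\mathcal{R}$ via $y\mapsto x$) recovers $g_{lt}(z,x)\in\mathcal{R}[z]$.

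The main obstacle I anticipate is the careful handling of multiplicities and the possibility that $\xi_j-\xi_k=\xi_j-\xi_{k'}$ forces $k=k'$ (true, since $f$ is squarefree) versus $\xi_j-\xi_k$ coinciding with some $\xi_j-\xi_{j'}$ in $g$ for a $j'$ that is not reachable via the product $S_lS_t$ — one must argue that $\gcd$ with the squarefree $g$ keeps exactly those linear factors $(w-(\xi_j-\xi_k))$ with $k$ an actual root index, and that the pattern of which $k$ survive is governed precisely by the combinatorial condition "$\exists i: S_l(j,i)=1 \text{ and } S_t(i,k)=1$", i.e. by the $(j,k)$ entry of the Boolean product $S_lS_t$. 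A secondary technical point is justifying that $c_{\mathcal{R}'}$ can be computed in the claimed sense over the non-field ring $\mathcal{R}'$ — but this is exactly the component-wise characteristic polynomial, legitimate because $\mathcal{R}'$ is semisimple and $\mathcal{U}$ is a finite free module over it, so the determinant of $wI-Z$ makes sense and decomposes along the idempotents of $\mathcal{R}'$ as in Definition \ref{gcd}.
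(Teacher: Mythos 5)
Your overall plan---compute the characteristic polynomial of $Z$ component-wise along the idempotents of $\mathcal{R}'$ and then strip multiplicities and extraneous linear factors by taking a gcd with $g$---is the same scaffold the paper uses. But the step you defer as ``idempotent bookkeeping'' is exactly the substance of the paper's proof, and your sketch assumes it away. The polynomial $h$ inherited from the preceding computation is written in terms of the idempotent products $\nu_{\pi_i(j)}\mu_i$ of $\mathcal{T}=\mathcal{R}[y]/(\bar g_l)$, where $x$ plays the role of a root of $f$ and $y$ a root difference. The ring $\mathcal{T}'=\mathcal{R}'[x]/(g_l(x,y))$ swaps these roles, so its primitive idempotents $\nu'_j\mu'_i$ are genuinely different elements of $\mathbb{F}_p[x,y]$. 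The paper's proof is essentially an explicit change-of-basis computation,
\begin{align*}
	\nu_{\pi_i(j)}\mu_i &= \nu'_j\mu'_i + \sum_{r:S_l(r,i)=0}\alpha_{rj}\,\nu'_r\mu'_i,
\end{align*}
followed by the observation that the cross-terms $\alpha_{rj}\nu'_r\mu'_i$ do not survive the gcd. Your claim that $Z$ acts on each primitive component indexed by $(i,j)$ with eigenvalues $\xi_j-\xi_k$ silently identifies the $\mathcal{T}$-idempotents with the $\mathcal{T}'$-idempotents ``up to localization,'' which is precisely what is not true here and what has to be proved.

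A symptom of the same gap is your explicit formula for $c_{\mathcal{R}'}$: you attach a single exponent $m_j$ to the whole $j$-th component, but the actual multiplicity of the factor $w-(\xi_j-\xi_k)$ is the intersection number $|\{i:S_l(j,i)=1,\,S_t(i,k)=1\}|$, which depends on $k$ as well as $j$; moreover, the cross-terms contribute linear factors with roots coming from other $j'$'s that are not obtained by powering up the factors you listed. The gcd with the squarefree $g$ does repair all of this, so your instinct for why the lemma holds is sound, but turning the outline into a proof requires carrying out the change-of-basis calculation (or an equivalent direct identification of the $\mathbb{F}_p$-points of $\mathcal{U}=\mathcal{T}'[z]/(h)$) and checking that the extraneous factors are coprime to $g$ in each $\nu'_j$-component, which is the work the paper actually does.
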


	\begin{proof}
		Let the primitive idempotents of $\mathcal{R}'$ over $\mathbb{F}_p$ be $\nu'_1,\cdots,\nu'_n$ ($\mathcal{R}'$ is \emph{semisimple} over $\mathbb{F}_p)$ and the primitive idempotents of $\mathcal{T}'$ over 
		$\mathcal{R}'$ be $\mu'_1,\cdots,\mu'_{d_l}$. We wish to consider $h(z,y,x)$ as a polynomial in $\mathcal{T}'[z]$; hence expressing every $\nu_{j} \mu_i$ in terms of $\mu'_j\nu'_i$'s is enough to express $h$
		as a member of this ring. We have

		\begin{align*}
			\nu_{\pi_i(j)}\mu_i &= \nu'_{j}\mu'_{i} + \sum_{r:S_l(r,i)=0} 
			\prod_{\substack{
												 q: \\
												 q \neq j \\
												 S_l(j,i)=1 \\
											 S_l(q,i)=1}} 
											 \frac{\xi_r-\xi_q}{\xi_j-\xi_q} 
											 \nu'_r\mu'_i
		\end{align*}

		where  

		\begin{align*}
			\nu_{\pi_i(j)}\mu_i &= \prod_{k:k\neq i} \frac{x-\xi_k}{\xi_i-\xi_k}\prod_{\substack{
													     q: \\
													     q \neq j\\
													     S_l(j,i)=1\\
													     S_l(q,i)=1}} \frac{y-\xi_q}{\xi_j-\xi_q}\\
			\nu'_j\mu'_i &= \prod_{q:q\neq j} \frac{y-\xi_q}{\xi_i-\xi_q} \prod_{\substack{
													k:\\
													k \neq i\\
												        S_l(j,i)=1\\
												S_l(j,k)=1}} \frac{x-\xi_k}{\xi_j-\xi_k}
		\end{align*}

		The expression for $\nu_{\pi_i(j)}{\mu_i}$ follows from evaluating it at $x=\xi_j,y=\xi_i,\xi_j \in \{\xi_1,\cdots,\xi_n\}$ and $\xi_i:S_l(\xi_j,\xi_i)=1$. This then gives us the expression for $h(z,y,x) \in 
		\mathcal{T}'[z]$ as follows

		\begin{align*}
			h(z,y,x) &= \sum_{i=1}^n \sum_{j:S_l(j,i)=1} 
			\prod_{k:S_t(i,k)=1} \left(z-\left(\xi_j-\xi_k\right)\right)
			 \left(\nu'_j + \sum_{r:S_l(r,i)=0} \alpha_{rj}\nu'_r\right)\mu'_i\\
			\alpha_{rj} &= \prod_{\substack{
							q:\\
							q\neq j\\
							S_l(j,i)=1\\
						S_l(q,i)=1}} \frac{\xi_r-\xi_q}{\xi_j-\xi_q}
		\end{align*}

		Observe that for any $1 \le i \le n$, $\nu'_j\nu'_r=0$ for $j,r:S_l(j,i)=1,S_l(r,i)=0$. Therefore we have that 

		\begin{align*}
			\gcd(c_{\mathcal{R}'}(z,y),g(z,y)) &= \sum_{j=1}^n \prod_{\substack{
											i,k:\\
											S_l(j,i)=1\\
										S_t(i,k)=1\\}} (z-(\xi_j-\xi_k))\mu'_j = g_{lt}(z,x)
		\end{align*}
	\end{proof}

	\subsubsection{The Remaining Steps}
	Now that we have the polynomial $g_{lt}$ corresponding to $S_lS_t$ we need to show that we can express the product as $\sum_{k}kD_k$ where each $D_k$ has entries from the set $\{0,1\}$. This is however easy to reproduce: 
	expressing $g_{lt}=\prod_{k}g_k^k$ where each $g_k$ is \emph{square-free} and \emph{mutually prime} gives us the polynomials corresponding to $D_k$. Further we have the following relations

	\begin{align*}
		S_l \cap S_t &\equiv \gcd(g_l,g_t) \\
		S_l \setminus S_t &\equiv \frac{g_l}{\gcd(g_l,g_t)}\\
		S_t \setminus S_l &\equiv \frac{g_t}{\gcd(g_l,g_t)}
	\end{align*}

	where we use the $\equiv$ symbol to denote the equivalence between a color matrix and its corresponding polynomial form. Note also that from lemma \ref{transposepol} that at every step of the iteration the set $\mathcal{S}$
	is closed under taking transposes, so that for every color polynomial $g_l \in \mathcal{S}$ we may also find $\bar{g}_l \in \mathcal{S}$. Hence it is possible to repeat this process till we achieve a stabilization of the
	colors. This gives rise to the following algorithm for the polynomials.

	\begin{algorithm}[H]
		\caption{2D WL for Polynomials}
		\label{polywl}
		\begin{algorithmic}
			\State $\mathcal{C} \leftarrow \mathcal{E}$
			\While{True}
			\If{\State $\forall g_i,g_j \in \mathcal{C}, g_{ij} = \prod_{l}g_{\alpha_l}^l$ for some index $\{\alpha_l\}$ of $\mathcal{C}$}
			   \State Output $\mathcal{C}$ as the final set
			\Else
			   \State $g_{lt}= \prod_{l}(h_l)^l$ where each $h_l$ is \emph{square-free} and \emph{mutually prime} to each other
			   \State $\mathcal{C} \leftarrow \mathcal{C} \cup_{l} \{h_l\}$
			   \While{$\exists g_i,g_j \in \mathcal{C},\gcd(g_i,g_j)\neq 1$}
			         \State $g=\gcd(g_i,g_j)$
				 \State $\mathcal{C} \rightarrow (\mathcal{C} \setminus \{g_i,g_j\}) \cup \{g,\frac{g_i}{g},\frac{g_j}{g}\}$
			   \EndWhile
			\EndIf
			\EndWhile
		\end{algorithmic}
	\end{algorithm}

         Note that given two polynomials $g_l$ and $g_t$, the polynomial representation for $g_{lt}$ is computable in time polynomial in the degrees of $g_l,g_t$ and $\log{p}$. Since the degree of each $g_l$ is always bounded above
	 by $n$, algorithm \ref{polywl} terminates in time polynomial in $n$ and $\log{p}$.

	 \subsection{Colors and Schemes}\label{sec:scheme}
	 Consider the final set of colors (or polynomials) $\mathcal{C}$ we obtain from algorithm \ref{polywl}. In this section we will prove some simple properties about this set $\mathcal{C}$.

	 \begin{lemma}\label{thin}
		 $2 \le |\mathcal{C}| \le n$. Further if $|\mathcal{C}|=n$ then $f$ may be factored.
	 \end{lemma}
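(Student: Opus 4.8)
The plan is to read off both bounds from the structure established just above — namely that the output $\mathcal{C}$ of algorithm~\ref{polywl} is (once the diagonal is seen not to split) the set of adjacency matrices of an association scheme on the $n$ roots — and then to treat the extremal case $|\mathcal{C}|=n$ as the \emph{thin} case, where the colours form a group and the primitive idempotents of $\mathcal{R}$ become accessible. For the lower bound: algorithm~\ref{polywl} only ever refines the colour set (it replaces $g_i,g_j$ by $\gcd(g_i,g_j)$ and the two quotients, and adjoins square-free parts), so $|\mathcal{C}|$ never drops below the size of the starting system $\{I\}\cup\mathcal{E}$, which is at least $2$ as soon as $n\ge 2$ (the identity colour together with at least one edge colour).

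For the upper bound I would first set up a dichotomy on whether the diagonal splits. If at some stage a colour supported entirely on the diagonal but distinct from $I$ appears, its polynomial has the form $y\cdot e$ with $e=\sum_{i\in S}\mu_i$ for some $\emptyset\subsetneq S\subsetneq\{1,\dots,n\}$; this is an explicitly computed non-trivial idempotent of $\mathcal{R}$, so $f$ is factored and we are done. Otherwise the diagonal stays the single colour $I$, and then by Lemmas~\ref{wellbehaved} and~\ref{transposepol} together with the stabilisation condition of algorithm~\ref{polywl}, the matrices $\mathcal{C}=\{A_0=I,A_1,\dots,A_d\}$ are the adjacency matrices of a homogeneous association scheme. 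In such a scheme each $A_i$ has a constant row sum (valency) $k_i\ge 1$, and applying $\sum_i A_i=J$ to the all-ones vector gives $\sum_{i=0}^{d}k_i=n$; since $|\mathcal{C}|=d+1$ and each $k_i\ge 1$ this forces $|\mathcal{C}|\le n$, with equality only if every $k_i=1$.

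It remains to show that $|\mathcal{C}|=n$ lets us factor. In that case every $A_l$ is a permutation matrix, the set $\{A_0,\dots,A_{n-1}\}$ is closed under product and inverse and hence is a group $G$ of order $n$ acting regularly on the roots; in particular every non-identity colour is fixed-point-free, so its polynomial is linear, $g_l=y-\beta_l$ with $\beta_l\in\mathcal{R}^\times$. The substitution $q(X)\mapsto q(X-\beta_l)$ is then a well-defined $\mathbb{F}_p$-algebra automorphism $\rho_l$ of $\mathcal{R}$ permuting the $\mu_i$ exactly as $A_l$ permutes the roots, and — the crucial point — $\rho_l$ is computable from the known element $\beta_l$. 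Now by Cauchy's theorem pick a subgroup $H\le G$ of prime order $q\mid n$. If $q<n$, then $\mathcal{R}^{H}=\{z:\rho_h(z)=z\ \forall h\in H\}$, obtained by linear algebra, is a subalgebra isomorphic to $\mathbb{F}_p^{\,n/q}$ (functions constant on the $q$-element $H$-orbits), hence proper and non-trivial, and a primitive idempotent of it is a non-trivial idempotent of $\mathcal{R}$, i.e.\ a factor of $f$. If instead $n=q$ is prime, so $G\cong\mathbb{Z}/q\mathbb{Z}$ has no proper non-trivial subgroup, I would pass to $\mathbb{F}_{p^{d}}$ with $d=\mathrm{ord}_q(p)\le n-1$, use ERH to produce a primitive $q$-th root of unity $\zeta$ (and a $q$-th non-residue) there, form a nonzero $\rho$-eigenvector $w=\sum_{m=0}^{q-1}\zeta^{-m}\rho^{m}(X)$, note that its minimal polynomial over $\mathbb{F}_{p^{d}}$ is $t^{q}-w^{q}$ with $w^{q}$ a computable scalar, factor that polynomial using $\zeta$ and a $q$-th root of $w^{q}$, and observe that the resulting idempotents are $0/1$-valued, hence primitive idempotents of $\mathcal{R}$ itself, so $f$ splits completely. (The degenerate case $p=q$, where $f=x^{p}-x$, is trivial.)

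The routine part is the counting giving $|\mathcal{C}|\le n$; the substantive part is the thin case, and within it the prime-degree sub-case is the main obstacle: there is no subgroup to take invariants of, which forces the Galois-flavoured eigenvector construction over $\mathbb{F}_{p^{d}}$ and the appeal to ERH for the $q$-th root of unity — precisely the primitive/prime-scheme difficulty that the rest of the thesis isolates. A cleaner alternative, if the next section already establishes that thin schemes are factorable in general, is simply to invoke that statement for the case $|\mathcal{C}|=n$ rather than reproving it here.
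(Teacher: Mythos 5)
Your treatment of the two bounds is fine and essentially the paper's: the lower bound because the starting system already has at least two colours and Algorithm~\ref{polywl} only refines (the paper gets $|\mathcal{E}|\ge 2$ from the fact that $\xi_i-\xi_j$ and $\xi_j-\xi_i=\eta^{2^{r-1}}(\xi_i-\xi_j)$ differ in the top bit of their $2$-Sylow expansion), and the upper bound from disjointness, regularity and $\sum_l C_l=J$. Where you diverge is the case $|\mathcal{C}|=n$, and there your argument has a genuine gap. In the composite-$n$ branch you say that the fixed algebra $\mathcal{R}^{H}\cong\mathbb{F}_p^{\,n/q}$ is computable by linear algebra and that ``a primitive idempotent of it is a non-trivial idempotent of $\mathcal{R}$.'' The second half is true, but obtaining a primitive idempotent of $\mathcal{R}^{H}$ from an $\mathbb{F}_p$-basis of it is not a routine step: it is exactly the original problem (factoring a square-free completely splitting polynomial of degree $n/q$), since having a basis of a split semisimple algebra does not reveal its idempotents --- otherwise one could factor $f$ directly from $\mathcal{R}$ itself. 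Nor can you always escape by iterating your prime-case resolvent construction down a chain of subgroups: the regular group $G$ of order $n$ you build from the permutation colours need not be solvable (e.g.\ $n=60$, $G\cong A_5$), so there is no tower with cyclic quotients carrying explicit actions on the successive fixed algebras. At best the fixed-algebra step gives a \emph{reduction} to a smaller instance, which is the content of Theorem~\ref{primitive}, not an outright factorization.

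The prime-$n$ branch is morally the right Lagrange-resolvent/Kummer argument, but it quietly assumes several ERH-based ingredients the thesis has not set up: deterministic construction of $\mathbb{F}_{p^{d}}$, a $q$-th non-residue (equivalently a primitive $q$-th root of unity) in that \emph{extension} field --- Ankeny's theorem as quoted in Section~\ref{sec:erh} only covers prime fields --- and non-vanishing of the particular resolvent $w$ (if it vanishes one must pass to resolvents of other elements). These are precisely the details handled in the literature the paper leans on. The paper's own proof is much shorter: if $|\mathcal{C}|=n$ then every colour polynomial $g_l\mu_i$ has degree $1$, i.e.\ each $g_l=y-\beta_l$ exhibits an explicit non-trivial endomorphism of the roots (your automorphism $\rho_l$), and one then invokes \cite{evdokimov}, where it is shown that such explicitly given root endomorphisms suffice to factor $f$ in deterministic polynomial time under ERH. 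Your closing remark --- that one should simply invoke the known factorability of thin schemes --- is in fact the intended argument; I would replace the composite/prime case analysis by that citation, or else supply the missing machinery rather than assert it.
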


	 \begin{proof}
		The first part follows because $|\mathcal{E}| \ge 2$, since for any $\xi_i,\xi_j$ that are roots of $f \in \mathbb{F}_p$, $\xi_i-\xi_j$ and $\xi_j-\xi_i$ differ in their \emph{$2$-Sylow} expansion in atleast one place
		- namely, the most significant bit. This is because $\xi_i-\xi_j = -1(\xi_j-\xi_i) = \eta^{2^{r-1}} (\xi_j-\xi_i)$. Since $|\mathcal{C}| \ge |\mathcal{E}| \ge 2$ the first inequality follows. The second inequality 
		is trivial since the product of all the polynomials in $\mathcal{C}$ is $f(y,x) \in \mathcal{R}[y]$ (or equivalently, $\sum_{C_l \in \mathcal{C}} C_l = J$). If $|\mathcal{C}|=n$ then the degree of any polynomial
		$g_l \in \mathcal{C}$ is $1$ which gives us an endomorphism of the roots of $f$ and hence $f$ may be factored by \cite{evdokimov}.
	\end{proof}

	\begin{lemma} \label{identity}
		If $I \notin \mathcal{C}$ then $f$ may be factored.
	\end{lemma}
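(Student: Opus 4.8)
The plan is to lean on the well-behavedness result just proved for $\{I\}\cup\mathcal E$: this is the set of colours Algorithm \ref{polywl} is genuinely seeded with, since $\sum_{E_l\in\mathcal E}E_l=J-I\neq J$, so one must adjoin the identity colour (represented by the polynomial $y\in\mathcal R[y]$) to obtain a well-behaved starting configuration. The structural fact I would establish first is an invariant of the run of Algorithm \ref{polywl}: at every stage the diagonal $\Delta=\{(i,i):1\le i\le n\}$ is exactly the union of those current colours contained in it, and every current colour is either contained in $\Delta$ or disjoint from it. This holds initially. In one outer iteration the multiplication step adjoins brand-new colours $h_m$ (from $g_{lt}=\prod_m h_m^{\,m}$); the ensuing disjointification loop can only refine colours, and refining a colour $C\subseteq\Delta$ produces pieces $C\cap C'$, $C\setminus C'$ still inside $\Delta$, while a newly added colour that straddles $\Delta$ overlaps the (possibly already refined) diagonal colours, whose union is still exactly $\Delta$, so it is progressively carved into a purely diagonal part and a part disjoint from $\Delta$. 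This is just the polynomial form of the standard fact that $2$-dimensional Weisfeiler--Leman, started from a colouring in which the diagonal is distinguished, returns a coherent configuration in which $\Delta$ is a union of cells.

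Granting the invariant, suppose $I\notin\mathcal C$. The diagonal colours of the final $\mathcal C$ have union $\Delta$; if there were only one such colour $C$ it would satisfy $C=\Delta=I$, contrary to assumption, so there are $s\ge 2$ of them, say $I_1,\dots,I_s$, where $I_t$ is the $\{0,1\}$ diagonal matrix of a set $\emptyset\neq V_t\subsetneq V$ with $V=V_1\sqcup\cdots\sqcup V_s$. From the polynomial form of $I_t$ I would read off a non-trivial idempotent of $\mathcal R$. By the defining formula for polynomial forms, the polynomial attached to $I_t$ is
\begin{align*}
g_{I_t}(y,x)=\sum_{i\in V_t}\bigl(y-(\xi_i-\xi_i)\bigr)\mu_i+\sum_{i\notin V_t}\mu_i=y\,e_t(x)+\bigl(1-e_t(x)\bigr),\qquad e_t:=\sum_{i\in V_t}\mu_i,
\end{align*}
so the coefficient of $y$ in the stored polynomial $g_{I_t}\in\mathcal R[y]=\mathbb F_p[x][y]/(f)$ is exactly $e_t\in\mathcal R$, and $e_t$ is a non-trivial idempotent because $0\neq|V_t|<n$.

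It then remains to convert $e_t$ into a factor. Taking the degree-$(<n)$ representative $e_t(x)\in\mathbb F_p[x]$ and using $\mu_i(\xi_j)=\delta_{ij}$, we get $e_t(\xi_i)=1$ for $i\in V_t$ and $e_t(\xi_i)=0$ for $i\notin V_t$, hence $\gcd\bigl(e_t(x),f(x)\bigr)=\prod_{i\notin V_t}(x-\xi_i)$ has degree $n-|V_t|\in\{1,\dots,n-1\}$, a non-trivial factor of $f$. Everything is polynomial time: Algorithm \ref{polywl} runs in time polynomial in $n$ and $\log p$, extracting the $y$-coefficient is immediate, and the final $\gcd$ is in $\mathbb F_p[x]$.

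The step I expect to be the main obstacle is the first one — pinning down the invariant that $\Delta$ survives as a union of colour classes throughout Algorithm \ref{polywl}, in particular that a newly created colour $h_m$ coming from $g_{lt}=\prod_m h_m^{\,m}$ cannot permanently straddle $\Delta$. One has to check that the disjointification loop, run against the diagonal colours currently present (which always cover exactly $\Delta$), does split every new colour into a purely diagonal and a purely off-diagonal piece; making this precise as an invariant preserved across a full outer iteration is the crux, and it is also the place where one must be careful about the ``$\mathcal C\leftarrow\mathcal E$ versus $\mathcal C\leftarrow\{y\}\cup\mathcal E$'' point in the statement of Algorithm \ref{polywl}.
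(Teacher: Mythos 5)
Your proposal is correct and follows essentially the same route as the paper: both extract a non-trivial idempotent of $\mathcal{R}$ from the polynomial form of a proper, non-empty diagonal sub-colour (you read it off as the $y$-coefficient; the paper frames it as the leading coefficient of $g_l$ being a zero divisor because the degree of $g_l\mu_i$ jumps between $0$ and $1$ across $i$). The diagonal-preservation invariant you flag as the main obstacle is sidestepped in the paper by arguing at the first moment $I$ is split, where a proper non-empty piece of $I$ is produced immediately; your version carries the same observation through to the final colouring, a stylistic rather than substantive difference.
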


	\begin{proof}
			Since the starting set contained $I$, the only way $I \notin \mathcal{C}$ would be that $I$ decomposes into two or more colors at some stage. Let $g_I(y,x)$ be the polynomial form of $I$:
			\begin{align*}
				g_I(y,x) &= \sum_{i=1}^n y \mu_i
			\end{align*}
			
			Suppose a color $C_l$ is a non-trivial decomposition of $I$, i.e. $C_l \subsetneq I$ and $C_l \neq \emptyset$. Then since $\forall 1 \le i \le n$ the degree of $g_I\mu_i=1$, $\exists 1 \le j \le n,
			deg(g_l\mu_j)=0$. Also since $C_l\neq \emptyset$,
			$\exists 1 \le i \le n, deg(g_l\mu_j)=1$. Hence the leading coefficient of $g_l$ will be a zero divisor in $\mathcal{R}$ and we get a decomposition of $f \in \mathbb{F}_p$.

		\end{proof}

		\begin{lemma}\label{regular}
		 Let $C_l$ be any color in the set $\mathcal{C}$. If $\exists C_l \in \mathcal{C},\exists 1 \le u,v \le n,\displaystyle \sum_{w=1}^n C_l(u,w) \neq \sum_{w=1}^n C_l(v,w)$ then $f$ may be factored. In other words, for
		 $f$ not to be factored each color $C_l \in \mathcal{C}$ must be regular.
	 \end{lemma}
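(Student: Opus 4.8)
The plan is to translate the statement ``$C_l$ is not regular'' into an algebraic obstruction in $\mathcal{R}$, exactly as was done for the polynomials $d_k$ in the proof of Lemma~\ref{g} and for the colors in $\mathcal{E}$. The key observation is that the polynomial form $g_l(y,x) = \sum_{i=1}^n \prod_{j:C_l(i,j)=1}(y-(\xi_i-\xi_j))\mu_i$ has, in its $i$-th component $g_l\mu_i$, degree exactly equal to the $i$-th row sum $\sum_{w=1}^n C_l(i,w)$. So the hypothesis $\sum_w C_l(u,w) \neq \sum_w C_l(v,w)$ says precisely that the degrees of $g_l\mu_u$ and $g_l\mu_v$ differ, i.e. the leading coefficient of $g_l$ (as a polynomial in $y$ over $\mathcal{R}$) is supported on some components but not others.

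First I would fix $C_l \in \mathcal{C}$ and write $d = \max_i \deg(g_l\mu_i) = \max_i \sum_w C_l(i,w)$; by hypothesis this maximum is not attained on every component, so the coefficient $a \in \mathcal{R}$ of $y^d$ in $g_l$ satisfies $a\mu_u \neq 0$ for some $u$ but $a\mu_v = 0$ for some $v$. Next I would note that $a$ is a zero divisor in $\mathcal{R}$: it is nonzero (since $a\mu_u \neq 0$) and not a unit (since $a\mu_v = 0$, so $a$ annihilates the nonzero idempotent $\mu_v$). Then, invoking the now-standard device used in Lemma~\ref{gcdbasis}, I would form the two orthogonal ideals $\mathcal{R}a$ and $\mathcal{R} - \mathcal{R}a$ and compute their identity idempotents $e_1, e_2$ from the known $\mathbb{F}_p$-basis $\{1,x,\dots,x^{n-1}\}$ of $\mathcal{R}$; since $e_1, e_2$ are non-trivial idempotents corresponding to a non-trivial partition of the roots, $\gcd(e_1(x), f(x))$ (equivalently, reading off the factor of $f$ cut out by $e_1$) is a non-trivial factor of $f$, computable in time polynomial in $n$ and $\log p$.

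The one point requiring a little care — and the step I expect to be the main (minor) obstacle — is verifying cleanly that $\deg(g_l\mu_i)$ really equals the row sum, i.e. that the leading coefficient does not accidentally vanish after reduction modulo $f$. This holds because $g_l\mu_i = \left(\prod_{j:C_l(i,j)=1}(y-(\xi_i-\xi_j))\right)\mu_i$ and the product over a nonempty index set is a monic polynomial in $y$ of the stated degree, while $\mu_i \neq 0$ in $\mathcal{R}$, so the product is nonzero with the expected degree; the only subtlety is that we do not see the $\mu_i$ explicitly, but we do not need to — we only need the coefficient $a$ of $y^d$ in $g_l$ itself, which we can extract, and then the argument that $a$ is a zero divisor proceeds purely through its action on the (unknown but existent) idempotents as above, while the actual \emph{extraction} of the factor goes through the explicit Euclidean-algorithm machinery of Lemma~\ref{gcdbasis}. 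The rest of the proof is then a direct citation of Lemma~\ref{gcdbasis}: once we have a zero divisor in $\mathcal{R}$ in hand, we are done.
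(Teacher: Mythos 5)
Your proposal follows exactly the paper's argument: write $C_l$ in its polynomial form $g_l$, observe that the row sum $\sum_w C_l(i,w)$ equals $\deg(g_l\mu_i)$, and conclude that unequal row sums force the leading coefficient of $g_l$ to be a zero divisor in $\mathcal{R}$, yielding a factor of $f$. Your additional remarks on explicitly extracting the factor via the machinery of Lemma~\ref{gcdbasis} merely spell out what the paper leaves implicit, so this is the same proof.
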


	 \begin{proof}
		 Consider the polynomial form of such a $C_l \in \mathcal{C}$. 

		 \begin{align*}
			 g_l &= \sum_{i=1}^n \prod_{j:C_l(i,j)=1} (y-(\xi_i-\xi_j)) \mu_i
		 \end{align*}

		 Then it is easy to see that $\sum_{w=1}^n C_l(u,w) = deg(g_l\mu_u)$. If $deg(g_l\mu_u) \neq deg(g_l\mu_v)$, then the leading coefficient of $g_l$ is a zero-divisor in the algebra $\mathcal{R}$ and hence
		 we obtain a decomposition of $f \in \mathbb{F}_p$.
	 \end{proof}

	 It is possible to strengthen this lemma and prove that the colors are not just \emph{regular}, but also \emph{strongly regular}. We prove this in the following lemma.

	 \begin{lemma}\label{intersectionnum}
		 Consider any three colors $C_s,C_t,C_l \in \mathcal{C}$. Consider any $(i,j) \in C_l$. Then the cardinality of the set $\{k: (i,k) \in C_s,(k,j) \in C_t\}$ is independent of the choice of the edge $(i,j) \in C_l$.
		 This cardinality will henceforth be denoted by $a_{stl}$.
	 \end{lemma}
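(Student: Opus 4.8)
The plan is to show that the intersection number $a_{stl}$ is well-defined by exploiting the ``well-behaved'' property of $\mathcal{C}$ established in Lemma~\ref{wellbehaved}, together with the transitivity of the automorphism group action on the edges of each color. Recall that the quantity we want to count, for a fixed edge $(i,j) \in C_l$, is $\#\{k : (i,k) \in C_s, (k,j) \in C_t\}$, and this is exactly the $(i,j)$ entry of the matrix product $C_s C_t$. So the statement to prove is precisely that the matrix $C_s C_t$ is constant on the support of $C_l$; equivalently, that $C_s C_t = \sum_l a_{stl} C_l$ for nonnegative integers $a_{stl}$.

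The key observation is that this is guaranteed by the termination condition of Algorithm~\ref{wl} (equivalently Algorithm~\ref{polywl}): the algorithm only halts once $\mathcal{C}$ is \emph{stable}, meaning that for all $C_s, C_t \in \mathcal{C}$ we have $C_s C_t = \sum_{l} \ell\, C_{\alpha_\ell}$ for some choice of color indices, i.e. the product decomposes with integer multiplicities over the colors already present. I would first record this: since $\mathcal{C}$ is the output of the algorithm, it is by construction closed under multiplication in the sense that $C_s C_t$ is a nonnegative-integer combination of the $C_l$'s. Then for any particular color $C_l$, reading off the coefficient of $C_l$ in this decomposition and evaluating at any entry $(i,j)$ in the support of $C_l$ (where all other colors $C_m$, $m \ne l$, vanish, by disjointness from condition~1 of well-behavedness) yields $(C_s C_t)(i,j) = a_{stl}$, independent of the chosen edge. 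This $a_{stl}$ is then the sought intersection number.

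For completeness I would also give the more intrinsic automorphism-theoretic argument, which is really what makes the algorithm's stabilization meaningful: by Lemma~\ref{wellbehaved}, every $C_l \in \mathcal{C}$ satisfies $P C_l P^{-1} = C_l$ for every $P \in \mathrm{Aut}(G_f)$; hence $P (C_s C_t) P^{-1} = C_s C_t$ as well. If $(i,j)$ and $(i',j')$ are two edges of $C_l$ that lie in the same $\mathrm{Aut}(G_f)$-orbit, say $P$ sends $(i,j)$ to $(i',j')$, then $(C_s C_t)(i,j) = (C_s C_t)(i',j')$. The point is that the Weisfeiler--Leman refinement has been run precisely to the stage where each color class $C_l$ is ``as homogeneous as'' a single orbit with respect to all such products — that is exactly the content of the halting condition — so the entry $(C_s C_t)(i,j)$ depends only on the color $l$ and not on the representative edge.

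The main obstacle, and the place to be careful, is justifying that the algorithm's output genuinely satisfies the stabilization identity $C_s C_t = \sum_l \ell\, C_{\alpha_\ell}$ with $C_{\alpha_\ell} \in \mathcal{C}$, rather than merely that no further refinement \emph{step} produced new colors; concretely, one must make sure that the ``$\Else$'' branch of Algorithm~\ref{wl} can never leave a product whose $\{0,1\}$-layers $D_\ell$ fail to coincide with colors already in $\mathcal{C}$ and simultaneously fail to trigger a refinement. But this is immediate from the loop logic: the algorithm halts \emph{only} when every such product decomposes over the current color set, and otherwise it strictly enlarges $\mathcal{C}$ (which, by Lemma~\ref{thin}, can happen at most $n$ times). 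Hence upon termination the identity holds for all pairs, and reading off coefficients on the support of each $C_l$ gives the well-definedness of $a_{stl}$, completing the proof.
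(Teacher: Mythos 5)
Your core argument is exactly the paper's: since the Weisfeiler--Leman refinement has stabilized, $C_sC_t$ decomposes as a nonnegative-integer linear combination $\sum_k \alpha_k C_k$ over $\mathcal{C}$, and because the colors are disjoint $\{0,1\}$-matrices summing to $J$, the $(i,j)$ entry of $C_sC_t$ for any $(i,j)\in C_l$ is exactly $\alpha_l$, which you rename $a_{stl}$. The supplementary automorphism-theoretic remark is correct in spirit but not needed (and you are right to hedge it, since a color class need not be a single orbit — WL only produces an orbit \emph{approximation}); the stabilization identity alone carries the full proof, just as in the paper.
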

	
	\begin{proof}
		For any $C_s,C_t \in \mathcal{C}$ we know that the set $\mathcal{C}$ is closed under their multiplication, i.e. $C_sC_t = \sum_{C_k \in \mathcal{C}} \alpha_k C_k$ where $\alpha_k$ is a positive integer. It is easy
		to see from this expansion that $a_{stl} = \alpha_l$.
	\end{proof}

	Note that lemma \ref{regular} is a special case of lemma \ref{intersectionnum}, where the out-degree of color $C_l$ is given by $a_{ll^\top1}$ where $1$ denotes the identity color $I$. This brings us to the notion of 
	\emph{Association schemes} which we introduce below:

	\begin{definition}\label{scheme} Let $X$ be a finite, non-empty set and let $S$ be a set of relations on $X$. Then the pair $(X,S)$ form an association scheme if
		\begin{enumerate}
			\item $S$ is a partition of $X\times X$.
			\item For all $s \in S$, $s^* = \{(y,x)|(x,y) \in s\} \in S$.
			\item $I = \{(x,x)\}\in S$
			\item $\forall p,q,r \in S$ there is a number $a_{pqr}$ such that for all $(x,z) \in r$, $|\{y \in X| (x,y) \in p, (y,z) \in q\}|=a_{pqr}$.
		\end{enumerate}
	\end{definition}

	\begin{example}\label{schurian}
			An important class of association schemes called \emph{Schurian} or \emph{group case} schemes arise from the $2$-orbits of a transitive group action. Let $G$ be a group and let $H \le G$ be a subgroup.
			Then the action of $G$ on the cosets of $H$ by left multiplication is transitive. Consider the action of $G$ on $G/H \times
			G/H$ where $g(xH,yH) \rightarrow (gxH,gyH)$. Then the orbits of this action form an association scheme. To see this observe that they do partition $G/H \times G/H$. Further $I = \{(xH,xH)|x \in G\}$ is
			one of the orbits (since $G$ acts 
			transitively on the cosets of $H$). Consider a $2$-orbit $s=G(xH,yH)$ then it's transpose $G(yH,xH)$ is also a $2$-orbit. If $(xH,yH),(uH,vH)\in s$ then $\exists g \in G, (uH,vH)=(gxH,gyH)$ so that
			$(vH,uH)=(gyH,gxH)$ which implies that $(yH,xH),(vH,uH) \in s^*$. Suppose $p,q,r$ are any three of these $2$-orbits. Let $(xH,zH),(uH,vH) \in r$ so that $\exists g \in G (uH,vH)=(gxH,gyH)$. Then there is a
			bijection between $f:\{yH|(xH,yH)\in p, (yH,zH) \in q \} \rightarrow \{wH|(uH,wH) \in p,(wH,vH)\in q\}$ given by $f(yH)=gyH$. Thus the set of $2$-orbits under this action form an association scheme 
			denoted by $(G/H,G//H)$.
	\end{example}

	Let $X=\{1,\cdots,n\}$ where $deg(f)=n$. We then make the following claim:
	\begin{lemma}
		If after the algorithm \ref{polywl} $f$ remains unfactored, then the tuple $(X,\mathcal{C})$ is an association scheme.
	 \end{lemma}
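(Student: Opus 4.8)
The plan is to verify the four axioms of Definition~\ref{scheme} one at a time for the pair $(X,\mathcal{C})$, drawing on the structural lemmas already established about the output of algorithm~\ref{polywl}. First I would observe that axiom~(1), that $\mathcal{C}$ partitions $X \times X$, is exactly the ``well-behaved'' conditions~(1) and~(2) that are preserved by every iteration of the Weisfeiler--Leman refinement: each color is a $\{0,1\}$-matrix, distinct colors are disjoint, and their sum is the all-ones matrix $J$ (equivalently $\prod_{C_l \in \mathcal{C}} g_l = f(y,x)$ in $\mathcal{R}[y]$). Since the merging/splitting steps of algorithm~\ref{polywl} maintain these properties by construction, axiom~(1) holds.

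Next I would dispatch axioms~(2) and~(3). Axiom~(2), closure under taking transposes ($s \in \mathcal{C} \Rightarrow s^* \in \mathcal{C}$), is precisely Lemma~\ref{transposepol}, using that the starting set $\mathcal{E}$ is closed under transposes by Lemma~\ref{transpose}. Axiom~(3), that the diagonal $I = \{(x,x)\}$ is a single color in $\mathcal{C}$, follows from Lemma~\ref{identity}: since $f$ is assumed to remain unfactored, $I$ cannot have been split during the refinement, so $I \in \mathcal{C}$. (Here one uses that $I$ was in the initial set, as established in the lemma stating $\{I\} \cup \mathcal{E}$ is well behaved.)

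The substantive content is axiom~(4): the existence of intersection numbers $a_{pqr}$, i.e. for all $p,q,r \in \mathcal{C}$ and every $(x,z) \in r$, the quantity $|\{y : (x,y)\in p,\ (y,z)\in q\}|$ depends only on $(p,q,r)$ and not on the particular edge $(x,z) \in r$. This is exactly Lemma~\ref{intersectionnum}, whose proof invokes the termination condition of algorithm~\ref{polywl}: the algorithm halts only when $\mathcal{C}$ is closed under the product operation, meaning $C_s C_t = \sum_{C_k \in \mathcal{C}} \alpha_k C_k$ with nonnegative integer coefficients $\alpha_k$; reading off the $(x,z)$ entry of both sides for $(x,z) \in C_l$ gives $a_{stl} = \alpha_l$, which is manifestly independent of the edge. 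So I would simply cite Lemma~\ref{intersectionnum} for axiom~(4), perhaps noting that the regularity of each color (Lemma~\ref{regular}) is the special case $a_{l\,l^*\,I}$.

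The main obstacle — or rather, the one point that deserves care rather than a one-line citation — is making sure the product $C_s C_t$ that the algorithm works with is genuinely the matrix product over the integers (so that the coefficients $\alpha_k$ really are the intersection numbers), and that the polynomial-side computation in algorithm~\ref{polywl} (forming $g_{lt}$ via the characteristic-polynomial/gcd construction, then factoring out square-free mutually-prime pieces $h_l$) faithfully reproduces the matrix-side refinement of algorithm~\ref{wl}. This faithfulness is precisely what the ``Multiplication of Matrices'' and ``The Remaining Steps'' subsections establish, so the proof can lean on those; I would state explicitly that, by that correspondence, the stabilized polynomial set $\mathcal{C}$ corresponds to a stabilized matrix set whose structure constants exist, and conclude that $(X,\mathcal{C})$ satisfies all four axioms and is therefore an association scheme. $\qed$
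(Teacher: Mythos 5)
Your proposal is correct and follows essentially the same route as the paper's own proof: verify each of the four scheme axioms by citing Lemma~\ref{wellbehaved} (partition), Lemma~\ref{transposepol} (closure under transpose), Lemma~\ref{identity} (the identity is a color, using that $f$ remained unfactored), and Lemma~\ref{intersectionnum} (constancy of intersection numbers via the termination condition of algorithm~\ref{polywl}). The paper states this in four terse sentences; your version merely expands the same citations and adds an accurate remark about the polynomial--matrix correspondence already established in the ``Multiplication of Matrices'' subsections.
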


	\begin{proof}
		Condition $1$ follows from the well-behavedness of the set $\mathcal{C}$ (see lemma \ref{wellbehaved}). Condition $2$ follows from lemma \ref{transposepol}. Condition $3$ follows from lemma \ref{identity}. Finally,
		condition $4$ follows because of lemma \ref{intersectionnum}.
	\end{proof}

%Closed subset discussion goes here

	\subsection{Closed Subsets}\label{sec:subset}
      From the definition of schemes it follows that schemes are a generalization of groups, i.e. all groups are schemes. This can be seen in the following fashion. Consider a group $G$. For each element $g \in G$
      we define $C_g=\{(e,f)|e,f\in G,eg=f\}$. Let $\mathcal{C}_G=\cup_{g \in G}C_g$. Then the set $(G,\mathcal{C}_G)$ forms a scheme with $a_{pqr}=1$ if $pq=r$ and $0$ otherwise. If $|G|=n$ then this scheme has $n$ colors and is
      referred to as a \emph{thin scheme}. Notice that by lemma \ref{thin} \emph{thin schemes} are easily factored. 

      In this section we will talk about \emph{closed subsets} which are a generalization of subgroups to the scheme structure and prove some result about the closed subsets of the scheme $(X,\mathcal{C})$ from the previous 
      section. We will using the notation of Zieschang's book on Association Schemes \cite{zieschang}; a more detailed discussion on schemes and closed subsets can also be found there. If $(X,S)$ is a scheme and $R \subseteq S$
      and $x \in X$ then by $xR$ we mean the set $\{y|y \in X,(\exists s \in R,(x,y) \in s\}$.

      \begin{definition}
	      Let $(X,S)$ be any arbitrary scheme. A nonempty subset $R$ of $S$ is called closed if $R^*R \subseteq R$, where $R^*=\{s^*|s \in R\}$.
      \end{definition}

      Note that $I \in R^*R$ which implies that $I \in R$. Further $R^* = R^*I \subseteq R^*R \subseteq R$, which implies that $R=R^*$. Hence $RR \subseteq R$.	Seen in this fashion the choice of the name \emph{closed subsets}
      and the link with subgroups becomes clear. 

      \begin{definition}
	 If $R \subseteq S$ where $(X,S)$ is a scheme, then we define $X/R = \{xR|x \in X\}$.
 \end{definition}

 	We then have a following simple lemma.

	\begin{lemma}\label{partition}
		If $R \subseteq S$ then $R$ is closed if and only if the set $X/R$ is a partition of $X$.
	\end{lemma}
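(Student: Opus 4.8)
The plan is to prove both directions of the equivalence directly from the definition of a closed subset. Throughout, fix $(X,S)$ a scheme and $R\subseteq S$, and recall that for $x\in X$ we write $xR=\{y\in X\mid \exists s\in R,(x,y)\in s\}$. Recall also the two consequences of $R$ being closed noted just before the lemma: if $R^{*}R\subseteq R$ then in fact $I\in R$, $R=R^{*}$, and $RR\subseteq R$; I will use these freely in the forward direction.

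\textbf{Forward direction ($R$ closed $\Rightarrow X/R$ is a partition).} First I would observe that the sets $xR$ cover $X$: since $I\in R$ and $(x,x)\in I$, we have $x\in xR$ for every $x$, so $\bigcup_{x\in X}xR=X$. The real content is disjointness: I claim that if $xR\cap yR\neq\emptyset$ then $xR=yR$. Suppose $z\in xR\cap yR$, so there are $s,t\in R$ with $(x,z)\in s$ and $(y,z)\in t$. Then $(z,y)\in t^{*}\subseteq R$ (using $R=R^{*}$), so $(x,y)\in$ some relation appearing in the product $s t^{*}$; since $S$ is a partition of $X\times X$, that relation is a single $r\in S$, and because $R R\subseteq R$ we get $r\in R$. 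Hence $y\in xR$. Now to show $yR\subseteq xR$: take any $w\in yR$, say $(y,w)\in u\in R$; combined with $(x,y)\in r\in R$ we get $(x,w)$ lying in some relation of $ru$, which again is a single element of $S$ contained in $R$ since $RR\subseteq R$. So $w\in xR$, giving $yR\subseteq xR$, and by symmetry $xR\subseteq yR$. Thus distinct cells of $X/R$ are disjoint, and $X/R$ is a partition.

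\textbf{Converse ($X/R$ a partition $\Rightarrow R$ closed).} Here I would show $R^{*}R\subseteq R$. Let $s\in R$, $t\in R$, and pick any $(x,y)\in s^{*}t$; I need the relation $r\in S$ containing $(x,y)$ to lie in $R$. By definition of $s^{*}t$ there is $z$ with $(x,z)\in s^{*}$, i.e. $(z,x)\in s$, and $(z,y)\in t$. Then $x\in zR$ and $y\in zR$, so $x$ and $y$ lie in the same cell of the partition $X/R$. The point I would need to nail down is that two points in the same cell are always related by a single relation from $R$: if $x,y\in zR$ then, using that $X/R$ is a partition, $xR=zR\ni y$, so $(x,y)\in r$ for some $r\in R$; since $S$ partitions $X\times X$ this $r$ is the unique relation containing $(x,y)$. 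Hence the relation in $s^{*}t$ containing $(x,y)$ is in $R$, which gives $s^{*}t\subseteq R$ at the level of relations, i.e. $R^{*}R\subseteq R$, so $R$ is closed.

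\textbf{Main obstacle.} The only delicate point is the repeated passage from "$(x,y)$ lies in the product of two relations from $R$" to "the unique $s$-relation through $(x,y)$ belongs to $R$". This is exactly where the partition axiom of $S$ and the closure $RR\subseteq R$ (equivalently, in the converse, the partition property of $X/R$) are used, and one must be careful that a product of relations is a union of whole blocks of the partition $S$, not an arbitrary subset of $X\times X$. Once that bookkeeping is set up cleanly, both directions are short.
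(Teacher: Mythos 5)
Your proof is correct and follows essentially the same route as the paper: the forward direction verifies that the cells $xR$ cover $X$ and are pairwise equal-or-disjoint using $I\in R$, $R^{*}=R$, and $RR\subseteq R$, while the converse reads these three facts back off the partition to obtain $R^{*}R\subseteq R$. One shared subtlety worth flagging: the passage from ``$x\in zR$'' to ``$xR=zR$'' tacitly uses $x\in xR$ (equivalently $I\in R$), which does not literally follow from $X/R$ being a partition in the naive set-theoretic sense (the cell of $X/R$ containing $x$ need not a priori be $xR$); the paper's terse converse makes the same implicit assumption.
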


	\begin{proof}
		To see the forward direction we note that the relation $x~y \equiv y \in xR$ is an equivalence relation on $X$. Every $x \in X$ belongs to $xR$ since $I \in R$, so $x ~ x$. If $y \in xR$ then $\exists s \in R$ such 
		that $(x,y) \in s$. Since $R^* = R$, $s^* \in R$ so that $x \in yR$. If $y \in xR$ and $z \in yR$ then $z \in xR$ since $RR \subseteq R$. Hence if $R$ is closed, $X/R$ is a partition of $X$.
		The converse follows in a similar fashion: if $y \in xR$ then $x \in yR$ (since $X/R$ is a partition) so that $R^*=R$. From transitivity it follows that $RR = R^*R \subseteq R$ so that $R$ is \emph{closed}.
	\end{proof}

	\begin{definition}\label{primitivescheme}
		An association scheme $(X,S)$ is called primitive if its only closed subsets are $\{I\}$ and $S$ itself.
        \end{definition}

	From the definition it is clear that \emph{primitive schemes} are generalizations of \emph{prime groups} - groups which have no non-trivial subgroup. In the case of groups we know that groups with no proper subgroups
	are precisely the prime groups - cyclic (hence commutative) groups of prime order.	
	Let $n_R=\sum_{s \in R} a_{ss^*1}$ and $|X|=n$. Therefore since $X/R$ is a partition of $X$ we have from lemma \ref{partition}
	$n_R|n$. It is then obvious that schemes on a prime number of vertices ($|X|=p$, $p$ prime) must be \emph{primitive}. From \cite{Hanaki:2006} we also know that association schemes of prime order are commutative. It is 
	tempting to extend the analogy with groups and conjecture that \emph{primitive schemes} would always have a prime order (or atleast be commutative/cyclic). However this is \emph{false} (for an argument see Appendix
	\ref{sec:counterexample}).
	
	Consider the scheme $(X,\mathcal{C})$ from the previous section. We wish to prove that the
	problem of factoring the polynomial $f$ (and its associated scheme $(X,\mathcal{C})$) can always be reduced to factoring a polynomial of degree $\le deg(f)$ whose associated scheme is \emph{primitive}. Before that we
	need a definition and a minor lemma.

 	\begin{definition}
		Let $(X,S)$ be an arbitrary scheme and $R$ any subset of $S$. Then by $(R)$ we denote the intersection of all closed subsets of $(X,S)$ which contain $R$.
	\end{definition}

	We set $R^0 =\{I\}$ and inductively define $R^i=R^{i-1}R$. Then we have the following

	\begin{lemma}
		The set $(R)$ is the union of all sets $(R \cup R^*)^i$ where $i$ is a non-negative integer.
	\end{lemma}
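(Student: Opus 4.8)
The plan is to show $(R) = \bigcup_{i \ge 0} (R \cup R^*)^i$ by proving two inclusions. Set $T = R \cup R^*$ and $U = \bigcup_{i \ge 0} T^i$. Note first that $U$ is nonempty because $T^0 = \{I\}$, so $I \in U$.

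First I would show $U$ is a closed subset containing $R$. Containment of $R$ is immediate since $R \subseteq T = T^1 \subseteq U$. To see $U$ is closed I need $U^* U \subseteq U$. Since $T$ is symmetric ($T^* = (R \cup R^*)^* = R^* \cup R = T$), an easy induction gives $(T^i)^* = T^i$ for every $i$, hence $U^* = U$; so it suffices to show $U U \subseteq U$, i.e. $U$ is closed under multiplication of relations. Given $s \in T^i$ and $t \in T^j$, any relation appearing in the composition $st$ lies (by the associativity of relational composition and the defining property of the structure constants $a_{pqr}$) in the set of colors occurring in $T^{i} T^{j} = T^{i+j}$, so $st \subseteq U$. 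Thus $U$ is a closed subset containing $R$, and since $(R)$ is the intersection of all such, $(R) \subseteq U$.

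For the reverse inclusion $U \subseteq (R)$, let $Q$ be any closed subset with $R \subseteq Q$. By Lemma on closed subsets (the remark following the definition of closed), $Q = Q^*$, so $R^* \subseteq Q^* = Q$, hence $T \subseteq Q$. Now induct on $i$: $T^0 = \{I\} \subseteq Q$ since $I \in Q$, and if $T^{i-1} \subseteq Q$ then $T^i = T^{i-1} T \subseteq QQ \subseteq Q$ because $Q$ is closed. Therefore $T^i \subseteq Q$ for all $i$, so $U \subseteq Q$. Since this holds for every closed subset $Q$ containing $R$, we get $U \subseteq (R)$, completing the proof.

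The only mildly delicate point is the closure-under-multiplication step for $U$: one must be careful that ``$st$ as a union of colors lands in $T^{i+j}$'' really follows from how composition of relations in a coherent configuration behaves, i.e. that $T^{i}T^{j}$ as a set of colors is exactly $T^{i+j}$. This is where the structure-constant condition (condition 4 of Definition \ref{scheme}) does the work, guaranteeing that composition of a color in $T^i$ with a color in $T^j$ decomposes into colors each lying in the product set; everything else is a routine induction. I expect no real obstacle beyond stating this composition bookkeeping cleanly.
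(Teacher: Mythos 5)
Your proof is correct and takes essentially the same approach as the paper: show that $U=\bigcup_i (R\cup R^*)^i$ is a closed subset containing $R$ (using $T^*=T$ and $T^iT^j\subseteq T^{i+j}$ via associativity of complex multiplication), hence $(R)\subseteq U$, and then show every closed superset of $R$ contains each $T^i$, hence $U\subseteq(R)$. The paper phrases the reverse inclusion as $P^i\subseteq(P^i)\subseteq(P)=(R)$ rather than by direct induction over an arbitrary closed $Q\supseteq R$, but the underlying argument is identical.
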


	\begin{proof}
		Let $P=R \cup R^*$ and let $Q = \cup_{i \in \mathbb{Z}_+}(R^*\cup R)^i$. We wish to show $(R)=Q$. Since $(P^i)^*=(P^*)^i$, we have $(P^i)^*=P^i$ since $P^*=P$. Thus for any non-negative integers $l,m$ we have

		\begin{align*}
			(P^l)^*P^m = P^lP^m = P^{(l+m)} \subseteq Q
		\end{align*}

		Therefore $Q$ is closed. Further since $R \subseteq Q$, we must have $(R) \subseteq Q$ by definition. The converse follows because

		\begin{align*}
			\forall i \in \mathbb{Z}_+,P^i \subseteq (P^i) \subseteq (P) = (R)
		\end{align*}

		Hence $Q \subseteq (R)$.
	\end{proof}

	Note that when generating $(R)$ by taking the union of $(R^* \cup R)^i$ one only needs to go till $i \le |\mathcal{S}|$. This is because at every step the size of this set grows by at least one and it cannot grow beyond
	$|\mathcal{S}|$, hence it must stabilize for some $i \le |\mathcal{S}|$. We then come to the main lemma.

	\begin{theorem}\label{primitive}
		Let the association scheme we get from algorithm \ref{polywl} on the polynomial $f \in \mathbb{F}_p[x]$ be $(X,\mathcal{C})$. Here $|X|=deg(f)=n$. Then the problem of factoring $f$ over $\mathbb{F}_p$ may be 
		polynomial time reduced	to that of factoring a polynomial $g \in \mathbb{F}_p[x]$, $deg(g) \le deg(f)$ whose association scheme $(Y,\mathcal{S})$ is primitive.
	\end{theorem}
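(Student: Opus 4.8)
The plan is a descent on the degree. If the scheme $(X,\mathcal{C})$ produced by Algorithm~\ref{polywl} is already primitive, take $g=f$. Otherwise $\mathcal{C}$ contains a closed subset $R$ with $\{I\}\subsetneq R\subsetneq\mathcal{C}$, and I will use $R$ to produce, in time polynomial in $n$ and $\log p$, a monic, square-free, completely splitting polynomial $h\in\mathbb{F}_p[x]$ with $2\le\deg h<n$, together with an explicit map taking any proper factor of $h$ to a proper factor of $f$. Running Algorithm~\ref{polywl} on $h$ and iterating, the degree strictly decreases at each stage but never drops below $2$, so after fewer than $n$ stages we reach a polynomial $g$ with $\deg g\le\deg f$ whose scheme is primitive (at any earlier stage the scheme is imprimitive and the construction applies again). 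Composing the at most $n$ factor-lifts gives the desired polynomial-time reduction. (If at some stage the current polynomial is factored outright --- e.g.\ by Lemma~\ref{thin}, Lemma~\ref{identity} or Lemma~\ref{regular} --- we simply lift that factor back to $f$ and are done.)

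Deciding primitivity and locating $R$ is routine. The colour list $\mathcal{C}$ has size at most $n$, and all structure constants $a_{pqr}$ can be read off from the products $C_pC_q$, computed exactly as inside Algorithm~\ref{polywl}. For each colour $s\neq I$, form the smallest closed subset containing $s$ by iterating $P^i$ with $P=\{I,s,s^*\}$ (this stabilises after at most $|\mathcal{C}|$ steps, by the closure lemma preceding this theorem). If every such closure equals $\mathcal{C}$ the scheme is primitive; otherwise a proper closure is the required nontrivial closed subset $R$.

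For the core construction, put $g_R(y,x)=\prod_{s\in R}g_s(y,x)\in\mathcal{R}[y]$, where $g_s$ is the colour polynomial of $s$; since the colours in $R$ are disjoint, the component of $g_R$ at $\mu_i$ is $\prod_{j\in iR}\bigl(y-(\xi_i-\xi_j)\bigr)$, of degree $n_R$. Substituting $y\mapsto X-z$ (with $X\equiv x\pmod f$) and simplifying componentwise,
\begin{align*}
 G(z,x)\;:=\;(-1)^{n_R}\,g_R(X-z,x)\;=\;\sum_{i=1}^n\Bigl(\prod_{j\in iR}(z-\xi_j)\Bigr)\mu_i\;\in\;\mathcal{R}[z].
\end{align*}
By Lemma~\ref{partition}, $iR$ is precisely the cell of $X/R$ containing $i$, so every coefficient of $G$ in $z$ is an element of $\mathcal{R}$ that is constant on the cells of $X/R$; and since $f$ is square-free, the monic polynomial $\prod_{j\in iR}(z-\xi_j)$ determines the cell $iR$, so these coefficients collectively separate the cells. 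Because $R\subsetneq\mathcal{C}$ there are at least two cells, hence at least one coefficient of $G$ is not a scalar; call it $\theta\in\mathcal{R}$, and let $h$ be its minimal polynomial over $\mathbb{F}_p$ (obtained from the first linear dependence among $1,\theta,\theta^2,\dots$ in the $n$-dimensional $\mathbb{F}_p$-space $\mathcal{R}$, using the basis $1,x,\dots,x^{n-1}$). Then $h$ is square-free, splits completely over $\mathbb{F}_p$, and $2\le\deg h\le|X/R|=n/n_R<n$. Finally, write $\theta=\theta(x)\bmod f$; if $h_1$ is any proper factor of $h$ and $h_2=h/h_1$, then $\gcd\bigl(h_1(\theta(x)),\,f(x)\bigr)$ is a proper factor of $f$ --- it cannot be $f$, since then $h\mid h_1$, and it cannot be $1$, since then $f\mid h_2(\theta(x))$ and so $h\mid h_2$, both impossible as $\deg h_i<\deg h$. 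Every operation above is polynomial in $n$ and $\log p$, which completes the reduction step and hence the theorem.

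The main obstacle is manufacturing the non-scalar element $\theta$: one has to extract from the colour polynomials alone --- with no access to the primitive idempotents $\mu_i$ nor to the cell partition $X/R$ --- an element of $\mathcal{R}$ that records the symmetric functions of the cells, and then argue that it is genuinely non-constant, so that its minimal polynomial is reducible and of degree strictly below $n$. The surrounding steps --- enumerating closed subsets via the closure construction, performing the $\gcd$-lift, and terminating the recursion by descent on the degree --- are comparatively mechanical.
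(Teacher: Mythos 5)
Your proof is correct and follows the same overall strategy as the paper: detect imprimitivity by forming the closures $(C_l)$ of single colours, take the product polynomial $g_{\mathcal{D}}=\prod_{C_l\in\mathcal{D}}g_l$ of a nontrivial closed subset $\mathcal{D}$, extract from it a non-scalar element of $\mathcal{R}$ whose evaluations are constant on the cells of $X/\mathcal{D}$, pass to a completely splitting polynomial of degree at most $n/n_{\mathcal{D}}$, lift factors back via gcd's, and recurse (the paper compresses the recursion into a single closing sentence). Where you genuinely diverge is in how the cell-constant element is produced. The paper takes a coefficient of $g_{\mathcal{D}}(y,x)$ itself, and for this it asserts that $g_{\mathcal{D}j}=g_{\mathcal{D}i}$ whenever $j\in i\mathcal{D}$ and that $g_{\mathcal{D}i}\neq g_{\mathcal{D}j}$ otherwise; this is not immediate, since the $\mu_i$-component of $g_{\mathcal{D}}$ has as roots the differences $\xi_i-\xi_j$, which depend on the base point $i$ and not only on its cell. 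Your substitution $y\mapsto X-z$ replaces those roots by the cell elements $\{\xi_j : j\in iR\}$ themselves, so constancy on cells, and the existence of a non-scalar coefficient (from square-freeness of $f$ together with there being at least two cells, as $n_R<n$), become trivial to verify; taking the minimal polynomial of that coefficient is equivalent to the paper's square-free part of $\operatorname{Res}(h(x)-z,f(x))$. Your lift of an arbitrary proper factor $h_1$ of $h$ via $\gcd(h_1(\theta(x)),f)$, rather than via a root of $g$ as in the paper, is also the cleaner way to phrase the factor-to-factor reduction, and your explicit degree-descent argument supplies the termination bound the paper leaves implicit. In short, your route buys a justification of exactly the step the paper leaves unargued, at no extra cost in the algorithm or its running time.
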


	\begin{proof}If the scheme $(X,\mathcal{C})$ is \emph{primitive} then we are already done. Hence, suppose that $(X,\mathcal{C})$ is not \emph{primitive}. Then we first claim that it is possible to find a non-trivial closed
		subset of $\mathcal{C}$ in time polynomial in $n$ and $\log{p}$. This is done by considering $(C_l)$ for all such $C_l \in \mathcal{C}$. Since $|\mathcal{C}| \le n$ and to construct $(C_l)$ we only need to take powers
		of $C_l$ till at most $|\mathcal{C}|$ we can construct all such $(C_l)$'s in time $O(n\log{p})$. Now suppose all of these sets are either $I,\mathcal{C}$ then we claim that $(X,\mathcal{C})$ must be \emph{primitive}.
		Suppose not, then there exists a set $\mathcal{D}$, $\{I\} \subsetneq \mathcal{D} \subsetneq \mathcal{C}$ which is closed. Let $C_l \in \mathcal{D}$ where $C_l \neq I$ (such a $C_l$ exists). Then by definition
		$(C_l) \subseteq \mathcal{D} \subsetneq \mathcal{C}$ and hence we arrive at a contradiction. Thus given $(X,\mathcal{C})$ we either determine that it is primitive, or in polynomial time construct a \emph{closed
		subset} of this scheme. Let $\mathcal{D}$ be the closed subset so obtained. Define by $g_{\mathcal{D}}$ the following polynomial:

		\begin{align*}
			g_{\mathcal{D}} &= \prod_{C_l \in \mathcal{D}} g_l = \sum_{i=1}^n g_{\mathcal{D}i} \mu_i
		\end{align*}

		Where $deg(g_{\mathcal{D}}) = n_{\mathcal{D}}$ and $g_{\mathcal{D}i} \in \mathbb{F}_p[y]$. From lemma \ref{partition} we know that $X/\mathcal{D}$ is a partition of $X$. Hence $\forall j \in i\mathcal{D}$, 
		$g_{\mathcal{D}j}=g_{\mathcal{D}i}$. If $j \notin i\mathcal{D}$ then $g_{\mathcal{D}i} \neq g_{\mathcal{D}j}$ so that $\exists 0 \le \alpha < n_{\mathcal{D}}$ such that the coefficient of $y^{\alpha}$ in
		$g_{\mathcal{D}i}$ is different from that of $g_{\mathcal{D}j}$. Suppose the coefficient of such a $y^{\alpha}$ is $h(x) \in \mathbb{F}_p[x]/(f)$. Then the resultant polynomial $Res(h(x)-z,f(x)) \in F_p[z]$ has
		at most $n/n_{\mathcal{D}}$ distinct roots, so that on making it square-free we get a polynomial $g$ over $\mathbb{F}_p$ whose degree is at most $n/n_{\mathcal{D}}$. Finding a root $\beta$ of $g$ then gives us a zero
		divisor in the algebra $\mathcal{R}$, namely $h(x)-\beta$. Also note that finding such a $y^\alpha$ and its coefficient is easy, all we need to do is ensure $h(x) \notin \mathbb{F}_p$. Note that we may assume that
		the scheme $(Y,\mathcal{S})$ formed by $g$ is \emph{primitive} or we can again reduce it in the above fashion.
	\end{proof}

\chapter{Conclusion and Future Work}

In this thesis we have generalized the approach of Gao \cite{gao2001deterministic} and Saha \cite{saha2008factoring} which has eventually led us closer to the combinatorial structure introduced by Saxena et al. in 
\cite{ivanyos2009schemes}. In their most recent work \cite{arora2012deterministic}, the authors use a structural theorem due to Hanaki and Uno to come up with a $poly(n^{\log{\log{n}}},\log{q})$ algorithm for factoring
infinitely many polynomials of prime degress over the field $\mathbb{F}_q$. Polynomials of prime degrees form association schemes of prime order, which are also \emph{primitive} (see definition \ref{primitivescheme}). Hence
any extension of such a result to the larger class of primitive schemes would by theorem \ref{primitive} lead to better algorithms for polynomials of arbitrary degree.

Another interesting angle that we have not yet explored is how some of the properties of the algebra formed by the colors obtained by algorithm \ref{polywl} relates to obtaining a factor of the polynomial $f$. For any field
$K$ we can consider the algebra $K\mathcal{C}$; if the characteristic of $K$ does not divide the degree of any color $C_i \in \mathcal{C}$ then this algebra is semisimple over $K$. The dimension of this algebra over $K$ is given
by $|\mathcal{C}| < n$ (if $|\mathcal{C}|=n$ then we can factor $f$ by lemma \ref{thin}). Then it would be interesting to see how the idempotents of this algebra over $K$ relate to the idempotents of the algebra $\mathcal{R}$ over
$\mathbb{F}_p$, if there is indeed any relation. In particular what would be relation if $K=\mathbb{F}_p$ (or an algebraic extension of $\mathbb{F}_p$)? Or if $K=\mathbb{C}$? Answering some of these questions would perhaps lead
us to a better understanding of the factoring problem and can serve as a possible direction for further work in this area.

\appendix
\chapter{}
\section{Primitiveness and Commutativity}\label{sec:counterexample}
It is tempting to extend the analogy between groups and schemes and conclude that \emph{primitive association schemes} (see definition \ref{primitivescheme}) are also commutative of prime order. The converse is certainly true. 
Also if we start with any element $s \in S$ (for a primitive scheme $(X,S)$) and close it under taking powers, we get the whole set $S$. However it is not true that a \emph{primitive scheme} should have prime order. Further, it is
also not true that a primitive scheme should be \emph{commutative}. In this appendix we will show how a non-commutative primitive scheme may be constructed; the existence of such a scheme is also enough to show that \emph{primitive
schemes} need not have prime order since \emph{prime order} schemes are always commutative due to \cite{Hanaki:2006}.

Consider an association scheme that is \emph{Schurian} (for definition see Example \ref{schurian}). Let it be $(G/H,G//H)=(X,S)$ for some group $G$ and its subgroup $H$. We have the following lemma.

\begin{lemma}\label{primitiveperm}
	Consider the image of the action of $G$ on $G/H$ as a subgroup of the symmetry group on $G/H$. Then the scheme $(G/H,G//H)=(X,S)$ is primitive iff this image is a primitive permutation group.
\end{lemma}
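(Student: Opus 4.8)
The plan is to identify the closed subsets of $S$ with the $G$-invariant partitions (block systems) of the transitive $G$-set $X = G/H$, and then to invoke the standard fact that a transitive permutation group is primitive exactly when it admits no block system other than the partition into singletons and the one-block partition $\{X\}$.

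First I would show that every closed subset $R \subseteq S$ yields a block system. By Lemma \ref{partition} the family $X/R = \{xR : x \in X\}$ is a partition of $X$. Since each $s \in R$ is a $G$-orbit on $X\times X$, we have $(x,y)\in s \Leftrightarrow (gx,gy)\in s$ for every $g\in G$, so $y\in xR \Leftrightarrow gy\in (gx)R$; hence $X/R$ is preserved by $G$, i.e. it is a block system. Conversely, given a block system $\mathcal{B}$ I would set $R_{\mathcal{B}} = \{\, s\in S : s\subseteq \bigcup_{B\in\mathcal{B}} B\times B \,\}$, the basis relations that only link points lying in a common block. This is well defined because a single $2$-orbit is either contained in $\bigcup_{B} B\times B$ or disjoint from it: if $(x,y)\in s$ with $x,y$ in a block $B$, then for every $g$ we get $(gx,gy)\in s$ with $gx,gy\in gB$, so all pairs of $s$ lie in common blocks. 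I would then check that $R_{\mathcal{B}}$ is closed, that is, nonempty with $R_{\mathcal{B}}^{*}R_{\mathcal{B}}\subseteq R_{\mathcal{B}}$: it contains $I$ since each $(x,x)$ lies in $B\times B$ for the block $B$ of $x$, and if $s,t\in R_{\mathcal{B}}$ and $(x,z)\in s^{*}t$ (so some $y$ has $(y,x)\in s$ and $(y,z)\in t$) then $x,y$ lie in a common block and $y,z$ lie in a common block, hence so do $x,z$; therefore the unique basis relation through $(x,z)$ lies in $R_{\mathcal{B}}$.

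Next I would verify that $R\mapsto X/R$ and $\mathcal{B}\mapsto R_{\mathcal{B}}$ are mutually inverse. If $R$ is closed then $s\in R_{X/R}$ iff every $(x,y)\in s$ has $y$ in the $X/R$-block of $x$, i.e. $y\in xR$; one inclusion is the definition of $X/R$, and the other uses that the basis relations partition $X\times X$, so the unique relation through $(x,y)$ must be the member of $R$ witnessing $y\in xR$ — hence $R_{X/R}=R$. If $\mathcal{B}$ is a block system, the block of $x$ in $X/R_{\mathcal{B}}$ is $\{y : (x,y)\in s \text{ for some } s\in R_{\mathcal{B}}\}$, which is precisely the $\mathcal{B}$-block of $x$, so $X/R_{\mathcal{B}}=\mathcal{B}$. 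Finally I would match the extremes: the partition of $X$ into singletons corresponds to $R=\{I\}$, and $\{X\}$ corresponds to $R=S$, the latter because transitivity of $G$ on $X$ shows that every basis relation meets $\{x\}\times X$, so every $s\in S$ is contained in the single block $X\times X$.

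With this correspondence the lemma follows at once: the image of $G$ acting on $G/H$ is transitive by construction, and it is primitive iff it preserves no block system other than the two trivial ones, which by the bijection is exactly the condition that the only closed subsets of $S$ are $\{I\}$ and $S$, i.e. that $(X,S)$ is primitive in the sense of Definition \ref{primitivescheme}. The routine-but-essential steps are the closedness of $R_{\mathcal{B}}$ and the inverse-bijection check; both hinge on the two facts that the basis relations are $G$-orbits and that they partition $X\times X$, and I expect these bookkeeping verifications to be the only real work.
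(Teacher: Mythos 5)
Your proposal is correct and takes essentially the same approach as the paper: both establish the correspondence between closed subsets of $S$ and $G$-invariant partitions (block systems) of $X$ and then match the trivial ones on each side. The paper phrases it as a contrapositive (a non-trivial block system yields a non-trivial closed subset $T=\cup_{y\in T_x}\mathcal{O}(x,y)$, which coincides with your $R_{\mathcal{B}}$, and Lemma~\ref{partition} gives the converse direction), whereas you spell out the bijection explicitly and check the inverse property, but the underlying ideas are identical.
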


\begin{proof}We will prove this by contradiction. Let $X=\cup_{i} T_i$ be a non-trivial partition of $G/H=X$ that is preserved by the image of this action. By a trivial partition we mean the partition into singletons and the 
	partition consisting of the whole set. For $x,y \in X$ let $\mathcal{O}(x,y)$ denote the orbit under the action of $G$ on $X\times X$. For any $x \in X$ let $T_x$ denote the partition it belongs to. Consider the set 
	$T= \cup_{y \in T_x} \mathcal{O}(x,y)$. Then we claim that that $T$ is a non-trivial closed subset of $S$. Firstly, $T \subsetneq S$ since $\exists z \in X, z \notin T_x$ (since the partition is non-trivial) and hence
	$\mathcal{O}(x,z) \notin T$. Further $T^*=T$ since we may equivalenty represent $T = \cup_{z \in T_y} \mathcal{O}(y,z)$ and $x \in T_y$. We need to show that $TT=T$. Let $s_i,s_j \in T$. Let $s_i=\mathcal{O}(x,y)$
	and $s_j=\mathcal{O}(y,z)$ (such a representation is always possible since the action of $G$ on $X$ is \emph{transitive}). Then $y \in T_x$ and $z \in T_y$ so that $z \in T_x$ so that $\mathcal{O}(x,z) \in T$. The converse
	proceeds in a similar fashion. If $T$ is a closed subset of $S$ then it is easy to see that the set $X/T$ forms a partition of $X$ (lemma \ref{partition}) that is preserved.
\end{proof}

The \emph{O'Nan-Scott theorem} \cite{aschbacher1985maximal} gives us a complete classification of the \emph{maximal subgroups} of $\Sym{n}$, where $\Sym{n}$
denotes the symmetry group on $n$ letters. They are precisely:

\begin{enumerate}
	\item Intransitive group of the form $\Sym{k} \times \Sym{n-k}$.

	\item Imprimitive group of the form $\Sym{k} \wr \Sym{m}$ where $mk=n$ and
	 $\wr$ denotes the \emph{wreath product}.

	\item $\Sym{k} \wr \Sym{m}$ where $k^m=n$.

	\item $\AGL(d,p)$ where $p^d=n$ and $\AGL$ the \emph{affine general linear group}.

	\item $D(T,k)$ where $|T|^{k-1}=n$ and \(D\) is a diagonal group.

	\item An \emph{almost simple} group in some primitive action. (A group is \emph{almost simple} if it lies between a non-abelian simple group and its automorphism group).
\end{enumerate}

On the other hand a Schurian scheme $(G/H,G//H)$ is \emph{commutative} when $(G,H)$ form what is called the \emph{Gelfand pair} (for a proof and more discussion see \cite{wielandt1964finite}). We have due to Saxl 
\cite{saxl1981multiplicity} a classification of \emph{Gelfand pairs} $(\Sym{n},K)$. Roughly, $K$ must be contained as a subgroup of ``small index'' in one of these groups (if $n>18$).

\begin{enumerate}
	\item $\Sym{n-t}\times \Sym{t}$
		
	\item $\Sym{n/2} \wr \Sym{2}$ or $\Sym{2} \wr \Sym{n/2}$ for even $n$.

	\item $\Sym{n-5} \times \AGL(1,5)$
		
	\item $\Sym{n-6} \times \PGL(2,5)$

	\item $\Sym{n-9} \times \PGAL(2,8)$.
\end{enumerate}

This gives us the existence of \emph{primitive association schemes} that are not \emph{commutative} - for instance when $G=\Sym{n}$ for some odd $n$ and $H=\Sym{m} \wr \Sym{k}$ such that $mk=n$ and $m(k-1)>9$. This is because
$H$ is not contained in groups of the form $2,3,4$ or $5$. Further if $\Sym{m} \le \Sym{n-t}$ for some $t$, then we have $m \le n-t$ which implies $t \le n-m = m(k-1)$ so that the only way $\Sym{m} \wr \Sym{k-1}$ is contained in $\Sym{t}$ is
if $\Sym{t} = \Sym{m} \wr \Sym{k-1}$ but $k\neq 2$ ($n$ is odd) so this is not possible. Hence the scheme $(G/H,G//H)$ is \emph{primitive} but not \emph{commutative}.

\newpage
\bibliographystyle{alpha}
\addcontentsline{toc}{chapter}{Bibliography}
\bibliography{main}

%***************************************************************************

\end{document}